\newlist{alphalist}{enumerate}{1}
\setlist[alphalist,1]{label=\textbf{[Rule \arabic*]}}
\newcommand{\ra}[1]{\renewcommand{\arraystretch}{#1}}
\newtheorem{theorem}{Theorem}
\newtheorem{lemma}{Lemma}
\newtheorem{observation}{Observation}
\newtheorem{corollary}{Corollary}
\newtheorem{conjecture}{Conjecture}
\tikzstyle{decision} = [rectangle, draw=none, rounded corners=1mm, fill=blue, drop shadow,text centered, anchor=north, text=white]
    \tikzstyle{chance} = [circle, draw=none, fill=purple, circular drop shadow,text centered, anchor=north, text=white]
    \tikzstyle{decisionGood} = [rectangle, draw=none, rounded corners=1mm, fill=black!60!green, drop shadow,text centered, anchor=north, text=white]
    \tikzstyle{decisionCross} = [rectangle, draw=none, rounded corners=1mm, fill=black!20!orange, drop shadow,text centered, anchor=north, text=white]
     \tikzstyle{arrow} = [->,>=stealth]
     \tikzstyle{arrow1} = [ultra thick,->,dash pattern=on \pgflinewidth off 2pt, >=stealth, color=yellow]
     \tikzstyle{arrow2} = [thick,->,dash pattern=on 3pt off 3pt,>=stealth, color=orange]
     \tikzstyle{arrow3} = [thick,->,>=stealth, color=red]
     \tikzstyle{matrix} = [rectangle, rounded corners, minimum width=2cm, minimum height=2cm,text centered, draw=none, fill=none]
\def\BState{\State\hskip-\ALG@thistlm}
\begin{document}
\title{The Stochastic Container Relocation Problem
}
\author{V. Galle	\thanks{Operations Research Center. Email: \url{vgalle@mit.edu}.} 
\and
S. Borjian Boroujeni \thanks{Operations Research Center. Email: \url{setareh@alum.mit.edu}}
\and
V. H. Manshadi \thanks{Yale School of Management. Email: \url{vahideh.manshadi@yale.edu}}
\and
C. Barnhart \thanks{Operations Research Center and Civil \& Environmental Engineering. Email: \url{cbarnhar@mit.edu}}
\and
P. Jaillet \thanks{Operations Research Center and Electrical Engineering \& Computer Science. Email: \url{jaillet@mit.edu}}
}

\date{}
\newgeometry{margin=2cm}
\maketitle

\begin{abstract}
The Container Relocation Problem (CRP) is concerned with finding a sequence of moves of containers that minimizes the number of relocations needed to retrieve all containers, while respecting a given order of retrieval. However, the assumption of knowing the full retrieval order of containers is particularly unrealistic in real operations. This paper studies the stochastic CRP (SCRP), which relaxes this assumption.
A new multi-stage stochastic model, called the \textit{batch model}, is introduced, motivated, and compared with an existing model (the \textit{online model}). The two main contributions are an optimal algorithm called \textit{Pruning-Best-First-Search (PBFS)} and a randomized approximate algorithm called \textit{PBFS-Approximate} with a bounded average error. Both algorithms, applicable in the batch and online models, are based on a new family of lower bounds for which we show some theoretical properties. Moreover, we introduce two new heuristics outperforming the best existing heuristics. Algorithms, bounds and heuristics are tested in an extensive computational section. Finally, based on strong computational evidence, we conjecture the optimality of the ``Leveling'' heuristic in a special ``no information'' case, where at any retrieval stage, any of the remaining containers is equally likely to be retrieved next. 
\end{abstract}

\section*{Introduction}
\label{sec:introduction}
With the growth in international container shipping in maritime ports, there has been an increasing interest in improving operations in container terminals, both on the sea side and land side. The operations on the sea side involve the assignment of quay cranes to ships, the loading of export containers on vessels, and the discharging of import containers from vessels onto internal trucks. Import containers are then transferred to the land side and are stacked in the storage area. Operations on the land side (also called yard operations) include the routing of internal trucks within the yard, the stacking of containers for storage, and the delivery of import containers to external trucks for delivery to another location. This work focuses on the latter problem.

\begin{figure}[h]
\centering
\includegraphics[width=0.45\textwidth]{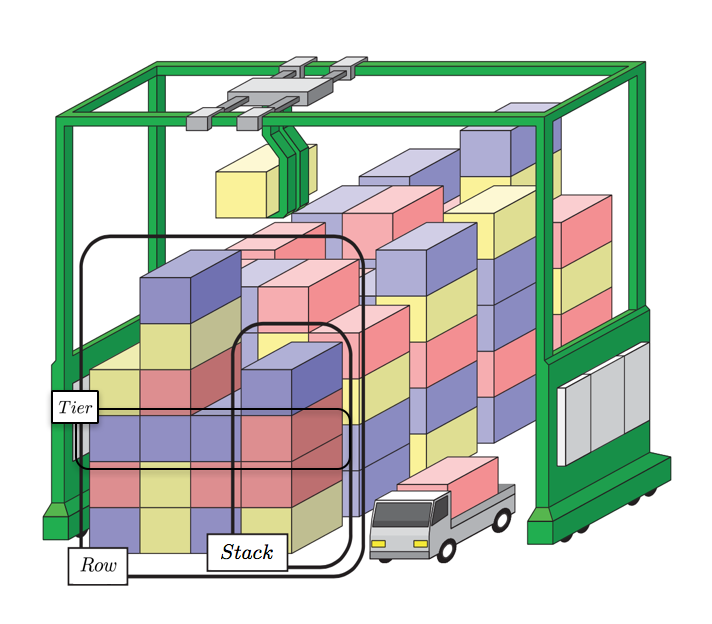}
\caption{Illustration of stacks of containers in a storage yard (figure from \cite{Tanaka})}
\label{fig:PortYard}
\end{figure}

Due to limited space in the storage area, containers are stacked on top of each other. The resulting stacks create rows of containers as shown in Figure \ref{fig:PortYard}. If a container that needs to be retrieved (\textit{target} container) is not located at a top most tier and is covered by other containers, the blocking containers must be relocated to another stack. As a result, during the retrieval process, one or more \textit{relocation} moves are performed by the yard cranes. Such relocations (also called reshuffles) are costly for the port operators and result in delays in the retrieval process. Thus, reducing the number of relocations is one of the main goals of port operators. The \textit{Container Relocation Problem (CRP)} (also known as the Block Relocation Problem) addresses this challenge by minimizing the number of relocations. As this problem is the main discussion of this paper, we provide a formal definition and an extensive literature review.

\paragraph{The container relocation problem}
First, it is commonly known that the time to relocate a container within a row is insignificant compared to the time to relocate a container between two distinct rows. Therefore, in most cases, port operators tend to avoid relocations between rows. The CRP makes the assumption that only relocations within rows are allowed, and problems for different rows should be considered independently. Furthermore, a row usually stores containers of the same type for the sake of stability and simplicity.

Using these facts, the CRP models one row using a two dimensional array of size $(T,S)$, where $S$ is the number of stacks, and $T$ is the maximum height, \textit{i.e.}, the maximum number of containers in a stack limited by the height of the crane. Each element of this array represents a potential slot for a container, and it contains a number only if a container is currently stored in this slot. Stacks are numbered from left ($1$) to right ($S$) and tiers from bottom ($1$) to top ($T$). We refer to this array as a \textit{configuration}. The common assumptions of the CRP are the following:
\begin{itemize}
	\item[] $\mathbf{A_1}$ : The initial configuration has $T$ tiers, $S$ stacks, and $C$ containers. In order for the problem to always be feasible, we suppose that the triplet $(T,S,C)$ satisfies $ 0 \leqslant C \leqslant ST - (T - 1 )$.
	\item[] $\mathbf{A_2}$ : A container can only be retrieved/relocated if it is at the top most tier of its stack, \textit{i.e.}, no other container is blocking it.
	\item[] $\mathbf{A_3}$ : A container can only be relocated if it is blocking the target container. This assumption was suggested by \cite{Caserta12}, and the problem under this assumption is commonly referred to as the \textit{restricted} CRP. Most studies focus on this restricted version, because it is the current practice in many yards, and it helps decrease the dimensionality of the problem, while not losing much optimality (see \cite{Petering}). As is common practice, we will not mention the term ``restricted'' in the rest of the paper even though we always assume $A_3$.
	\item[] $\mathbf{A_4}$ : The cost of relocating a container from a stack does not depend on, to which stack the container is relocated. This allows us to consider the stacks of a configuration as interchangeable. In addition, it motivates the objective of minimizing the number of relocations, since the cost of each relocation can be normalized to $1$. Note that this assumption is not required for all the results stated below, hence our approaches could be easily extended to the case when Assumption $A_4$ does not hold.
	\item[] $\mathbf{A_5}$ : The retrieval order of containers is known, so that each container can be labeled from $1$ to $C$, representing the departure order, \textit{i.e.}, Container $1$ is the first one to be retrieved, and $C$ the last one.
\end{itemize}

The CRP involves finding a sequence of moves to retrieve Containers $1, 2, \ldots, C$ (respecting the order) with a minimum number of relocations. Figure \ref{fig:exampleCRP} provides a simple example of the CRP. The CRP with the above classical assumptions is referred to as \textit{static and full information}: ``Static'' because no new containers arrive during the retrieval process (see Assumption $A_1$) and ``full information'' because we know the full retrieval order at the beginning of the retrieval process (see Assumption $A_5$). This problem was first formulated by \cite{Kim} in a dynamic programming model.

\begin{figure}[h]
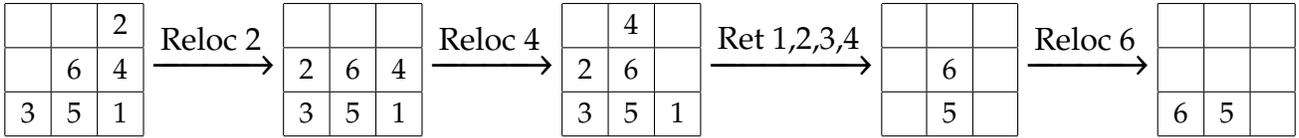

\centering
\begin{tabular}{|c|c|c|} \hline
   &   & 2  \\ \hline
   & 6 & 4  \\ \hline
 3 & 5 & 1 \\ \hline
\end{tabular}
{\LARGE$\xrightarrow{\text{Reloc } 2}$}
\begin{tabular}{|c|c|c|} \hline
   &   &   \\ \hline
 2 & 6 & 4   \\ \hline
 3 & 5 & 1 \\ \hline
\end{tabular}
{\LARGE$\xrightarrow{\text{Reloc } 4}$}
\begin{tabular}{|c|c|c|} \hline
   & 4 &   \\ \hline
 2 & 6 &    \\ \hline
 3 & 5 & 1 \\ \hline
\end{tabular}
{\LARGE$\xrightarrow{\text{Ret } 1,2,3,4}$}
\begin{tabular}{|c|c|c|} \hline
   &  &   \\ \hline
  & 6 &    \\ \hline
 \hspace{5pt} & 5 & \hspace{5pt} \\ \hline
\end{tabular}
{\LARGE$\xrightarrow{\text{Reloc } 6}$}
\begin{tabular}{|c|c|c|} \hline
   &   &   \\ \hline
  &  &    \\ \hline
 6 & 5 & \hspace{5pt} \\ \hline
\end{tabular}
\caption{Configuration for the CRP with 3 tiers, 3 stacks and 6 containers. The optimal solution performs 3 relocations: relocate the container labeled 2 from Stack $3$ to Stack $1$ on the top of the container labeled 3; relocate 4 from 3 to 2 on the top of 6; retrieve 1; retrieve 2; retrieve 3; retrieve 4; relocate 6 from 2 to the empty Stack $1$; retrieve 5; finally, retrieve 6.}
\label{fig:exampleCRP}
\end{figure}

Researchers have tackled the static CRP with full information from two point of views. The first approach aims to find the optimal solution. Primarily, researchers have used Integer Programming (IP) to address this problem. For example, \cite{Caserta12} propose an intuitive formulation of the problem. \cite{Petering} develop a more tractable formulation, that is, however, unable to solve real-sized instances efficiently. \cite{Zehendner15} fix the formulation from \cite{Caserta12} and improve it by removing some variables, tightening some constraints, introducing a new upper bound, and applying a pre-processing step to fix several variables. In all these IP formulations, due to the combinatorial nature of the problem, the number of variables and constraints dramatically increases as the size of the bay grows, and the IP cannot be solved for large instances. In order to bypass this problem, a recent trend has been to look at more efficient ways to explore the branch-and-bound tree, or even decrease its size using the structural properties of the problem. \cite{Unluyurt} and \cite{Exposito} suggest two branch-and-bound approaches with several heuristics based on this idea. Another solution using the $A^*$ algorithm is explored by \cite{Zhu}, and built upon by \cite{Tanaka} and \cite{Borjian15}. Another solution using branch-and-price is presented by \cite{Zehendner14b}.

As the problem is NP-hard (\cite{Caserta12}), an alternative approach is to use quick and efficient heuristics providing sub-optimal solutions. For the sake of conciseness, we only mention some of them that are relevant to this paper. \cite{Caserta12} introduce a 'MinMax' policy that is defined and generalized later in this paper. \cite{Wu10} propose a beam search heuristic, and \cite{Wu12} develop the Group Assignment Heuristic (GAH) also generalized in Section \ref{sec:Bounds}. Finally, in \cite{Kim} and \cite{Zhu}, lower bounds for the CRP are introduced: \cite{Kim} count the number of blocking containers as a straightforward lower bound, and \cite{Zhu} refine this idea by taking into account additional unavoidable relocations using a family of lower bounds.

Finally, there are many related problems to the CRP. The stacking problem is concerned with how to store incoming containers in a configuration given an arrival order of containers. The pre-marshalling problem deals with re-arranging the containers prior to the retrieval process in order to minimize future relocations, but no container is removed in this process. For both problems, general review and classification surveys of the existing literature on the CRP can be found in \cite{Stahlbock}, \cite{Steenken} and \cite{Lehnfeld}. In addition, Assumption $A_1$ can be relaxed, and this leads to the dynamic CRP where stacking and retrieving are done simultaneously as new containers are arriving. For this problem, see \cite{Borjian15_2} and \cite{Akyuz}.

Finally, the main focus of this paper is an extension of the CRP where the full information assumption ($A_5$) is relaxed. Indeed, Assumption $A_5$ is unrealistic given that arrival times of external trucks at the terminal are generally unpredictable due to uncertain conditions. Nevertheless, new technology advancements such as Truck Appointment Systems (TAS's) and GPS tracking can help predict relative truck arrival times. Thus, although the exact retrieval order might not be known, some information on trucks' arrival times might be available. This leads us to introduce a stochastic version of the CRP.

\paragraph{The Stochastic CRP (SCRP)}
A common assumption is that, for each container, there is a time window in which a truck driver will arrive to retrieve it. We refer to a \textit{batch of containers} as the set of containers stacked in the same row and with the same arrival time window. This information can be either inferred using machine learning algorithms, not yet much discussed in the literature, or can be obtained using the appointment time windows in a TAS, which has gained attention over the last decade. The first TAS was implemented by Hong Kong International Terminal (HIT) in 1997. It uses 30-minute time slots, where trucks can register (\cite{Murty}). Another TAS was introduced in New Zealand in 2007. Two other studies, \cite{Giuliano} and \cite{Morais}, evaluate the benefits of TAS, in reducing truck idling time by increasing on-time ratio. More recent information can be found in \cite{WallStreet} and \cite{JOC}.
 
On the modeling side, \cite{Zehendner14} formulate an IP to get the optimal number of slots a TAS should offer for each batch. Very few studies have tackled the SCRP, also referred to as CRP with Time Windows. This problem was first modeled by \cite{Zhao}. In the original model of \cite{Zhao}, each container is assigned to a batch. Batches of containers are ordered such that all containers in a batch must be retrieved before any containers in a later batch. Furthermore, the relative retrieval order of containers within a given batch is assumed to be a random permutation. From now on, we will refer to the model of \cite{Zhao} as the \textit{online model}. In Section \ref{sec:Model}, we discuss in more detail how this model assumes information is revealed. For the online model, \cite{Zhao} develop a myopic heuristic (called RDH) and study, in different settings with two or multiple groups, the value of information using RDH. They conclude that a small improvement in the information system reduces the number of relocations significantly. \Citet{Asperen} use a simulation tool to evaluate the effect of a TAS on many statistics including the ratio of relocations to retrievals. Their decision rules are based on several heuristics including the ``leveling,'' random, or ``traveling distance'' heuristics. More recently, \cite{Ku} also use the online model. They formulate the SCRP under the online model as a finite horizon dynamic programming problem, and suggest a decision tree scheme to solve it optimally. They also introduce a new heuristic called ERI (Expected Reshuffling Index), which outperforms RDH, and they perform computational experiments based on available test instances. We will refer frequently to this work, use some of their techniques, as well as their available test instances to evaluate our algorithms.

In another recent study related to the SCRP, \cite{Zehendner16} study the \textit{Online Container Relocation Problem}, which corresponds to an adversarial model. They prove that the number of relocations performed by the leveling policy can be upper-bounded by a linear function of the number of blocking containers and provide a tight competitive ratio for this policy. Moreover, \cite{Galle} show that the ratio of the expected number of relocations to the expected blocking lower bound converges to one. Finally, \cite{Tierney} study the robust pre-marshalling problem which also considers uncertainty in the retrieval times of containers.

For a general review of techniques on finite horizon Dynamic Programming, we refer the reader to \cite{Bertsekas} and \cite{Sennott}. Table \ref{tab:literatureReview} summarizes the previous literature review.

\paragraph{Contributions of the paper}
The contributions of this paper are:
\begin{enumerate}
\item A new stochastic model, referred to as the \textit{batch model}. This new model uses the same probability distribution as the online model. However, the two models are different in the way new information on the retrieval order is revealed. The batch model is motivated, described and compared with the online model.
\item Lower and upper bounds for the SCRP. We derive a \textbf{new family of lower bounds} for which we show theoretical properties. Furthermore, we develop \textbf{two new fast and efficient heuristics}.
\item \textbf{A novel optimal algorithm scheme based on decision trees and pruning strategies referred to as \textit{Pruning-Best-First-Search} ($PBFS$)}, taking advantage of the properties of the aforementioned lower bounds. The algorithm is explained with pseudocode in Algorithm \ref{algo:PBFS}.
\item \textbf{A second novel algorithm tuned for the case of larger batches referred to as $PBFSA$ (PBFS-Approximate)}. We build upon $PBFS$ and derive a sampling strategy resulting in an approximate algorithm with an expected error that we bound theoretically. The pseudocode of the second algorithm is presented in Algorithm \ref{algo:PBFSA}.
\item We provide \textbf{extensive computational experiments using an existing set of instances}. The first experiment exhibits the efficiency of $PBFS$, our lower bounds and two new heuristics for the batch model based on existing instances, presented by \cite{Ku}, where batches of containers are small (2 containers per batch on average). The second experiment illustrates the advantage of using $PBFSA$ when batches of containers are larger, based on instances obtained by modifying the existing set. In addition, most of our techniques including lower bounds, heuristics, and the $PBFS$ algorithm also apply to the online model. The third experiment shows that, in this model, $PBFS$ outperforms the algorithm introduced in \cite{Ku} in the sense that it is faster for instances that \cite{Ku} could solve, and it can solve problems of larger size. Furthermore, our two new heuristics also outperform the best existing heuristic (ERI) for the online model. Finally, the last experiment is used to test the conjecture about the optimality of the leveling heuristic in the special case of the online model with a unique batch of containers.
\end{enumerate}

\begin{table}[h]
  \centering
\ra{1.2}
\begin{tabular}{@{}ccc|cccc|cccc@{}}\toprule
&  \textit{CRP } &&&& \textit{Static} &&&& \textit{Dynamic} & \\
\midrule
&  &&&& \cite{Caserta12} &&&&  &  \\
&  &&&& \cite{Petering} &&&&  &  \\
&\textit{Full} &&&&  \cite{Zehendner15} &&&& \cite{Borjian15_2} &  \\
&\textit{Information} &&&& \cite{Exposito}   &&&& \cite{Akyuz} &  \\
&  &&&& \cite{Zhu} &&&&  &  \\
&  &&&& \cite{Tanaka} &&&&  &  \\
\midrule
& &&&& \cite{Zhao} &&&&  & \\
&\textit{Partial} &&&& \cite{Ku} &&&&  & \\
&\textit{Information} &&&& \textbf{This paper}  &&&&  &  \\
& &&&&  &&&&  & \\
\bottomrule
\end{tabular}
\caption{\textbf{Optimal solutions} for the different variant of the CRP}
\label{tab:literatureReview}
\end{table}

The rest of the paper is structured as follows: Section \ref{sec:Model} thoroughly describes the batch model, its assumptions and objective, the difference with the online model, and the general theory of decision trees applied to the SCRP. Section \ref{sec:Bounds} gives a good intuition into the problem and defines heuristics and a class of lower bounds for the SCRP used in subsequent sections. Then Sections \ref{sec:PBFS} and \ref{sec:PBFSA} introduce respectively the $PBFS$ and $PBFSA$ algorithms. Computational experiments for both batch and online models are carried through Section \ref{sec::ComputationExperiments}. We conclude the paper by discussing future directions for the SCRP in Section \ref{sec:conclusion}.

\section{SCRP and Decision Trees}
\label{sec:Model}
\subsection{Motivation}

Before stating the general assumptions of the batch model, let us motivate our problem using a typical example. We consider a port with a TAS offering 30-minute time windows during which truck drivers who want to retrieve a container can register to arrive at the port. For the sake of the example, we consider the time window between 9:00 am and 9:30 am. Multiple trucks can be registered in this time window: in this example, presented in Figure \ref{fig:BatchModelTimeLine}, 3 trucks (designated $i_1$, $i_4$ and $i_6$) are registered for this time window. We assume that \textbf{all 3 trucks arrive on time} (between 9:00 am and 9:30 am) and that their containers (similarly designated $i_1$, $i_4$ and $i_6$) form a batch to be retrieved. We display the configuration of interest in Figure \ref{fig:exampleSCRP} ($3$ tiers, $3$ stacks and $6$ containers).

\begin{figure}[h]
\centering
\includegraphics[width=0.8\textwidth]{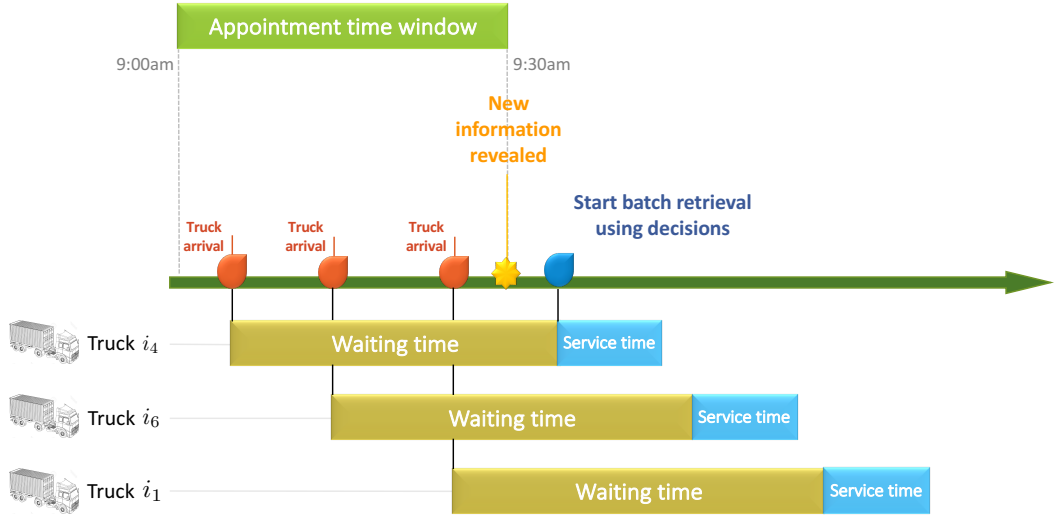}
\caption{Timeline of events for the batch model with three trucks}
\label{fig:BatchModelTimeLine}
\end{figure}

We assume that trucks arrive randomly within the time window, so \textbf{each truck arrival order is equally likely to happen}. In this example, there are 6 potential arrival orders, each with 1/6 chance of occurring.

At 9:00 am, none of the 3 trucks has arrived and \textbf{their relative retrieval order is unknown}. Consequently, these 3 containers are all labelled 1 in Figure \ref{fig:exampleSCRP}a. In Figure \ref{fig:exampleSCRP}b, the IDs of all containers and their locations are depicted.

Between 9:00 am and 9:30 am, trucks arrive in a particular order (\textit{e.g.}, Truck $i_4$ first, then $i_6$, and $i_1$ last). In busy terminals, trucks typically queue up as they wait to be served. Their place in line is based on their arrival order, so the port operator generally retrieves containers based on the arrival order. Processing in this way, \textbf{on a first-come-first-served basis}, avoids issues with truck unions and maintains fairness among drivers. Consequently, we take the retrieval order to be exogenously determined and we do not consider it a potential decision for port operators.

In order to provide a specified level of service to the truck drivers, the terminal operator often sets a target average waiting time. If the appointment time window is about the same as or shorter than the target average waiting time, the operator has information about the retrieval order of containers in the batch before the retrieval of those containers must begin in order to meet the target waiting time. Given this, we make the simplifying assumption in this work that the retrieval of a batch begins at the end of the appointment time window associated with the batch, and \textbf{the retrieval order of all containers in the batch is known as the retrieval of the batch commences}. In our example, the target average waiting time is 30 minutes. At 9:30 am, the retrieval order of the batch ($i_4$, $i_6$, $i_1$) is known and the retrieval of the batch commences soon after. The updated information is depicted in the configuration of Figure \ref{fig:exampleSCRP}c.

The assumption that containers to be retrieved are revealed on a batch basis models the reality that port operators typically know information about all the containers in the same batch before starting to retrieve them.
This is especially true for busy ports that have a TAS. Moreover, we assume that no information about future batches is available when making decisions for the current batch. Similar modeling assumptions have been made in previous works (see \cite{Zhao} and \cite{Ku}).

\begin{figure}[h]
\captionsetup{justification=centering}
\centering
\begin{minipage}{5cm}
\centering
\begin{tabular}{|c|c|c|} \hline
  &   & 1 \\ \hline
  & 5 & 4  \\ \hline
 1 & 5 & 1 \\ \hline
\end{tabular}
\caption*{\ref{fig:exampleSCRP}a Before any truck has arrived (9:00 am)}
\end{minipage}
\qquad
\begin{minipage}{5cm}
\centering
\begin{tabular}{|c|c|c|} \hline
 &   & $i_6$  \\ \hline
  & $i_3$ & $i_5$  \\ \hline
 $i_1$ & $i_2$ & $i_4$ \\ \hline
\end{tabular}
\caption*{\ref{fig:exampleSCRP}b IDs to match containers with trucks}
\end{minipage}
\qquad
\begin{minipage}{5cm}
\centering
\begin{tabular}{|c|c|c|} \hline
  &   & 2 \\ \hline
  & 5 & 4  \\ \hline
 3 & 5 & 1 \\ \hline
\end{tabular}
\caption*{\ref{fig:exampleSCRP}c Before the first container gets retrieved (9:30 am)}
\end{minipage}
\qquad
\caption{SCRP Example. The left configuration is the input to our problem. The configuration in the middle denotes each container with an ID $i_l$ where $l=1,\ldots,6$. The configuration on the right denotes the order of the first batch after it is revealed.}
\label{fig:exampleSCRP}
\end{figure}

The general assumptions we apply to our model are formally stated in Section \ref{subsec:assumptions} ($A^*_5$ and $A^*_6$) and result in the \textit{batch model -- the main focus of this paper}. The goal of the SCRP is to find a sequence of moves minimizing the expected number of relocations needed to empty the initial configuration.

Labels in Figures \ref{fig:exampleSCRP}a and \ref{fig:exampleSCRP}c are defined such that two containers only have the same label if they are in the same batch and their relative order is yet to be revealed. In our example, since Container $i_5$ is the only container in the second batch and is retrieved after the first 3 containers, it is necessarily the fourth container to be retrieved (thus labeled $4$). Containers $i_2$ and $i_3$ are labeled $5$ and when their relative order is revealed, one will be labeled $5$ and the other $6$.

\paragraph{The online model}

The main distinguishing difference between the batch model and the online model introduced by \cite{Zhao} is the information revelation process. 
The online model disregards any within batch information available when it plans the moves to retrieve a container. Hence \textbf{new information is revealed one container at a time.} The online model is especially applicable in less busy ports where the waiting time is significantly shorter than the appointment time windows. In this case, because there is a limited number of trucks waiting, limited information about the future is known. We show through Lemma \ref{lem:compModels} that ignoring information (if available) results in a potential loss of operational efficiency as measured by the expected number of relocations. In addition, most of our batch-based approaches also apply to the online model, and we provide numerical results based on it in Section \ref{sec::ComputationExperiments}.

\subsection{Assumptions, notations and formulation}
\label{subsec:assumptions}

In order to define our problem as a multi-stage stochastic optimization problem (the number of ``stages'' is the number of batches), we need to define \textit{a probabilistic model of the container retrieval order, a scheme for revealing new information about this order, and an objective function.}

\paragraph{Batch model} Let us state the assumptions and objective of the stochastic CRP under the batch model. Assumptions $A^*_1$, $A^*_2$, $A^*_3$ and $A^*_4$ are respectively identical to Assumptions $A_1$, $A_2$, $A_3$ and $A_4$.
\begin{itemize}
	\item[] $\mathbf{A^*_5}$ : (probabilistic model) Given an ordering of batches, the probability distribution of the retrieval order is such that: 1) \textbf{all the containers in a given batch are retrieved before any containers in a later batch}, and 2) \textbf{within each batch of container, the order of the containers is drawn from a uniform random permutation}. This paper focuses mainly on this latter assumption, but our model can be extended to the more general case of any probability distribution on permutations (not necessarily uniform) that respects the order of batches.
	\item[] $\mathbf{A^*_6}$ : (revelation of new information) For each batch $w$, \textbf{the full relative order of containers from the $w^{th}$ batch (\textit{i.e.}, the specific random permutation drawn) is revealed after all containers in batch $1$ through $w-1$ have been retrieved}.
\end{itemize}

Given these assumptions, we want to find the minimum expected number of relocations to retrieve all containers from a given initial configuration. We refer to this problem as the ``\textit{Stochastic CRP}''. Let us introduce some notations:
\begin{itemize}
\item[-] The problem size is given by $\left( T, S, C \right)$, respectively the number of tiers, stacks and containers in the initial configuration.
\item[-] The number of batches of containers in the initial configuration is denoted by $W$. We consider that the batches are ordered from 1 to $W$.
\item[-] For each batch $w \in \{1,\ldots,W\}$, let $C_w$ be the number of containers in $w$. By definition $\displaystyle \sum_{w=1}^W C_w = C$.
\item[-] Each container has two attributes:
\begin{itemize}
\item[$\bullet$] The first attribute, denoted by $\left( c_l \right)_{l \in \{1,\ldots,C\}}$, is the label and is defined as follows: initially, containers in batch $w$ are labeled by $K_w$ such that $\displaystyle K_w = 1 + \sum_{u=1}^{w-1} C_u$, where the sum is empty for $w=1$. Then, for $k \in \{1,\ldots,C\}$, if a container is revealed to be the $k^{th}$ container to be retrieved, its label changes to $k$. Using this labeling, at any point in the retrieval process, two containers only have the same label if there are in the same batch and their relative order is yet to be revealed.
\item[$\bullet$] The second attribute is a unique ID denoted by $\left( i_l \right)_{l \in \{1,\ldots,C\}}$. This ID is only used to identify uniquely containers in the initial configuration (see Figure \ref{fig:exampleSCRP}b) and for the sake of clarity of the following probabilistic model. Note that for $l \in \{1,\ldots,C\}$, if Container $i_l$ is in batch $w$, then $c_l = K_w$ (until the actual retrieval order of $i_l$ is revealed).
\end{itemize}
\item[-] For $k \in \{1,\ldots, C \}$, let $\zeta_k$ be a random variable taking values in $\left( i_l \right)_{l \in \{1,\ldots,C\}}$, such that $\{ \zeta_k = i_l \}$ is the event that Container $i_l$ is the $k^{th}$ container to be retrieved. According to Assumption $A^*_5$, the distribution of $\left( \zeta_k \right)_{k \in \{1,\ldots,C\}}$ is given by
\[ \mathbb{P} \left[ \zeta_k = i_l \right] = \left\{ \begin{array}{l}
\frac{1}{C_w} \text{ , if } w = \min \{ u \in \{1,\ldots,W\} \ | \ K_u \geqslant k\} \text{ and } c_l = K_w
\\
0 \text{ , otherwise}
\end{array}	 \right.\]
In this case, there are a total of $\prod_{w=1}^W \left( C_w ! \right)$ orders with equal probabilities. More generally, we consider the case where the probability of each order within each batch is not necessarily equally likely. However, we still assume that the batches are ordered, thus $\mathbb{P} \left[ \zeta_k = i_l \right]$ can be positive only if $w = \min \{ u | K_u \geqslant k\}$ and $c_l = K_w$. In the practical case where probabilities are not considered to be uniform, a list of potential retrieval orders and their associated probability is given for each batch of containers (based on historical data), thus $\mathbb{P} \left[ \zeta_k = i_l \right]$ can easily be inferred from these probabilities.
\item[-] An action corresponds to a sequence of relocations to retrieve one container. For $k \in \{1,\ldots,C\}$, we denote the action for the $k^{th}$ retrieval by $a_k$, and the feasible set of actions is defined according to Assumptions $A^*_2$ and $A^*_3$.
\item[-] For a given batch $w \in \{1,\ldots,W\}$,
\begin{enumerate}
\item let $y_w$ denote the configuration before the retrieval order of containers in batch $w$ is revealed, \textit{i.e.}, before $\zeta_{K_w},\ldots,\zeta_{K_w + C_w - 1}$ are revealed. Note that $y_1$ corresponds to the initial configuration. We denote $x_{K_w}$ the configuration after the retrieval order of containers in batch $w$ is revealed, and before action $a_{K_w}$ is taken. If $\xrightarrow{\zeta_{K_w},\ldots,\zeta_{K_w + C_w - 1}}$ represents the revelation of the random variables $\zeta_{K_w},\ldots,\zeta_{K_w + C_w - 1}$, we can write $y_w \ \xrightarrow{\zeta_{K_w},\ldots,\zeta_{K_w + C_w - 1}} \ x_{K_w}$. 
\item After the retrieval order of batch $w$ has been revealed, actions to retrieve the revealed containers have to be made. If $C_w > 1$, then $\{ K_w,\ldots,K_w+C_w-2 \} \neq \emptyset$. In this case, for $k \in \{K_w,\ldots,K_w+C_w-2\}$, let $x_{k+1}$ be the configuration after applying action $a_{k}$ to state $x_{k}$ and before action $a_{k+1}$. Therefore, if $\xrightarrow{a_{k}}$ represents the application of action $a_k$, we have $x_{k} \ \xrightarrow{a_{k}} \ x_{k+1}$.
\item The last container to be retrieved in batch $w$ is the $\left( K_w + C_w - 1 \right)$-th container, thus, according to the previous point, $x_{K_w + C_w - 1}$ corresponds to the configuration before $a_{K_w + C_w - 1}$ is taken. After this retrieval, the order of the next batch (batch $w+1$) has to be revealed, and according to the first point, the configuration is $y_{w+1}$. The configuration after retrieving batch $W$ will be empty, thus we define $y_{W+1}$ to be the empty configuration. So if $\xrightarrow{a_{K_w + C_w - 1}}$ represents the application of action $a_{K_w + C_w - 1}$, then we have $x_{K_w + C_w - 1} \ \xrightarrow{a_{K_w + C_w - 1}} \ y_{w+1}$.
\end{enumerate}
In summary, we have
\begin{equation*}
\forall \ w \in \{1,\ldots,W\} \ , \left\{ \begin{array}{ll}
y_w \ \xrightarrow{\zeta_{K_w},\ldots,\zeta_{K_w + C_w - 1}} \ x_{K_w}
\\
x_{k} \ \xrightarrow{a_{k}} \ x_{k+1}, \text{ if } C_w > 1 \ , \ \forall \ k \in \{K_w,\ldots,K_w + C_w - 2\}
\\
x_{K_w + C_w - 1} \ \xrightarrow{a_{K_w + C_w - 1}} \ y_{w+1}.
\end{array}
\right.
\end{equation*}
\item[-] Let the function $r(.)$ be such that $r(x)$ is number of relocations to retrieve the target container in configuration $x$. It is also equal to the number of containers blocking the target container. This function is only defined for configurations in which the target container is identified. Specifically, it is defined for $\left( x_k \right)_{k=1,\ldots,C}$ (but not for $\left( y_w \right)_{w=1,\ldots,W}$). For $k \in \{1,\ldots,C\}$, we refer to $r(x_k)$ as the \textit{immediate cost} for the $k^{th}$ retrieval.
\item[-] Let the function $f(.)$ be such that $f(x)$ is the minimum expected number of relocations required to retrieve all containers from configuration $x$. $f(x)$ is commonly referred to as the \textit{cost-to-go} function of configuration $x$. Note that it is well defined for both $\left( x_k \right)_{k=1,\ldots,C}$ and $\left( y_w \right)_{w=1,\ldots,W}$.
\end{itemize}

By definition, we have:
\begin{equation*}
\forall \ w \in \{1,\ldots,W\}, \ \left\{ \
\begin{array}{ll}
\displaystyle f\left(y_w\right) = \underset{\zeta_{k},\ldots,\zeta_{k+C_w-1}}{\mathbb{E}} \left[  f(x_k) \right] & , \text{ where } k = K_w,
\\
\displaystyle f\left(x_k \right) = r\left(x_k\right) + \min_{a_k} \left\{ f\left(x_{k+1}\right) \right\} & , \ \text{if} \ C_w > 1 \text{ and } \forall \ k \in \{K_w,\ldots,K_w + C_w - 2\},
\\
\displaystyle f\left(x_k \right) = r\left(x_k\right) + \min_{a_k} \left\{ f\left(y_{w+1}\right) \right\} & , \text{ where } k = K_w + C_w - 1,
\end{array}
\right.
\end{equation*}
which can be written as
\begin{equation}
\label{eq:fundrecrusion}
\forall \ w \in \{1,\ldots,W\}, \ \left\{ \
\begin{array}{l}
\displaystyle f\left(y_w\right) = \underset{\zeta_{K_{w}},\ldots,\zeta_{K_{w}+C_w-1}}{\mathbb{E}} \left[  f(x_{K_w}) \right]
,
\\
\displaystyle f\left(x_{K_w}\right) = \min_{a_{K_{w}},\ldots,a_{K_w + C_w - 1}} \left\{ \left( \sum_{k=K_{w}}^{K_w + C_w - 1} r\left(x_k\right) \right) + f\left(y_{w+1}\right) \right\},
\end{array} \right.
\end{equation}
and $f\left(y_{W+1}\right) = 0$.
Therefore, the SCRP is concerned with finding $f\left(y_1\right)$, where by induction:
\begin{eqnarray}
f\left(y_1\right) = \underset{\zeta_{K_{1}},\ldots,\zeta_{K_{1}+C_1-1}}{\mathbb{E}} \left[ \min_{a_{K_{1}},\ldots,a_{K_{1}+C_1-1}} \left\{ \underset{\zeta_{K_{2}},\ldots,\zeta_{K_{2}+C_2-1}}{\mathbb{E}} \left[ \ldots \underset{\zeta_{K_{W}},\ldots,\zeta_{K_{W}+C_W-1}}{\mathbb{E}}  \left[ \min_{a_{K_{W}},\ldots,a_{K_{W}+C_W-1}} \left\{ \sum_{k=1}^{C} r\left(x_k\right) \right\}  \right] \ldots \right] \right\} \right]. \label{eq:DPformulationbatch}
\end{eqnarray}

\paragraph{The online model}
Using our notations, we briefly present the SCRP under the online model. Instead of Assumption $A^*_6$, the online model assumes that for each retrieval $k \in \{1,\ldots,C\}$, only the next target container is revealed (\textit{i.e.} $\zeta_k$). Therefore, we consider the states $\left( y^o_k, x^o_k \right)_{k=1,\ldots,C}$ defined such that $k \in \{1,\ldots,C\}$, $y^o_k \ \xrightarrow{\zeta_{k}} \ x^o_k \ \xrightarrow{a_{k}} \ y^o_{k+1}$, where $y^o_{C+1}$ is the empty configuration.
In this case, if $f^o$ denotes the cost-to-go function, then by definition, we have $f^o\left(y^o_k\right) = \underset{\zeta_k}{\mathbb{E}} \left[ f^o\left(x^o_k\right) \right]$ (with $f^o\left(y^o_{C+1}\right) = 0$), and 
$\displaystyle \forall \ k \in \{1,\ldots,C\}, \ f^o\left(x^o_k\right) = \min_{a_k} \left\{ r\left(x^o_k\right) + f^o\left(y^o_{k+1}\right) \right\}$. By induction, the SCRP under the online model is hence concerned with finding:
\[f^o\left(y^o_1\right) = \underset{\zeta_1}{\mathbb{E}} \left[ \min_{a_1} \left\{ \underset{\zeta_2}{\mathbb{E}} \left[ \ldots \underset{\zeta_C}{\mathbb{E}} \left[ \min_{a_C} \left\{ \sum_{k=1}^{C} r\left(x^o_k\right) \right\}  \right] \ldots \right] \right\} \right].\]

The next lemma compares the batch and the online models theoretically (the proof can be found in the Appendix). It states that it is beneficial in terms of the expected number of relocations to use the batch model compared to the online model, if the first one is applicable. Practically, this suggests that the operator should always use available information.
\begin{restatable}{lemma}{compModels}
\label{lem:compModels}
Let $y$ be a given initial configuration, then we have
\[
f\left(y\right) \leqslant f^o\left(y\right).
\]
\end{restatable}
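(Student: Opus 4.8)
The plan is to prove the inequality by a policy-simulation (coupling) argument: every feasible policy for the online model can be reproduced in the batch model at no extra cost, because at each decision epoch the batch model has access to (weakly) more information than the online model. Since $f^o(y)$ is attained by an optimal online policy, it follows that $f(y) \leqslant f^o(y)$.

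Concretely, I would fix a single probability space carrying the retrieval order $(\zeta_1,\ldots,\zeta_C)$ with the distribution prescribed by $A^*_5$; note that this joint distribution is the same object in both models, so the two models can be coupled on it. Let $\pi^o$ be an optimal deterministic history-dependent policy for the online model (such a policy exists since the problem is finite: finitely many configurations, actions, and retrievals). I would then build a batch policy $\pi$ as follows: when $\pi$ is asked, after the revelation $y_w \xrightarrow{\zeta_{K_w},\ldots,\zeta_{K_w+C_w-1}} x_{K_w}$, to choose the block of actions $a_{K_w},\ldots,a_{K_w+C_w-1}$, it computes them one retrieval at a time, setting $a_k$ equal to the action that $\pi^o$ would take from the current configuration knowing only $(\zeta_1,\ldots,\zeta_k)$. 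This is well defined because $\{\zeta_1,\ldots,\zeta_k\}\subseteq\{\zeta_1,\ldots,\zeta_{K_w+C_w-1}\}$ for every $k\in\{K_w,\ldots,K_w+C_w-1\}$, i.e.\ $\pi$ never needs information it does not have.

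The heart of the argument is then a sample-path invariant, proved by induction on $k$: along every realization of $(\zeta_1,\ldots,\zeta_C)$, the configuration $x_k$ produced by $(\pi,\text{batch})$ coincides with the configuration $x^o_k$ produced by $(\pi^o,\text{online})$. The inductive step uses that the target of the $k$-th retrieval is $\zeta_k$ in both models, that the feasible action sets are identical functions of the current configuration and the target by $A^*_2$ and $A^*_3$, and that by construction $\pi$ applies exactly the action chosen by $\pi^o$. In particular $r(x_k)=r(x^o_k)$ for all $k$ on every sample path, hence $\sum_{k=1}^C r(x_k)=\sum_{k=1}^C r(x^o_k)$ pointwise. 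Taking expectations, the batch policy $\pi$ incurs expected cost $\mathbb{E}\big[\sum_{k=1}^C r(x^o_k)\big]=f^o(y)$, and since $f(y)$ is the minimum of the expected cost over all batch policies, which is exactly what the recursion (\ref{eq:fundrecrusion}) computes, we get $f(y)\leqslant f^o(y)$.

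The main obstacle is making the sample-path invariant fully rigorous: one has to ensure that "the configuration before the $k$-th retrieval" refers to the same combinatorial object in both models and that the relabeling convention (a container keeps its batch label $K_w$ until $\zeta$ reveals its position) does not create a spurious mismatch — the cleanest fix is to track configurations by the container IDs $(i_l)_{l}$ rather than by labels. An alternative, more computational route would be a backward induction directly on the two recursions, repeatedly invoking the elementary inequality $\mathbb{E}_\omega[\min_a g(a,\omega)]\leqslant \min_a \mathbb{E}_\omega[g(a,\omega)]$ to pull minimizations outside expectations; this works but is notationally heavier, because the batch and online state sequences do not line up at the within-batch revelation epochs, which is precisely why I would favor the coupling argument.
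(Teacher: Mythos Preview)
Your coupling argument is correct and is a genuinely different route from the paper's. The paper proceeds exactly along what you call the ``alternative, more computational route'': it fixes the first batch, introduces for each $d\in\{1,\ldots,C_1\}$ a hybrid value $f^d(y)$ in which the first $d$ targets are revealed together and the rest one at a time, shows $f^d(y)\leqslant f^{d-1}(y)$ by one application of $\mathbb{E}[\min\{\cdot\}]\leqslant\min\{\mathbb{E}[\cdot]\}$ (pulling the inner $\min_{a_1,\ldots,a_{d-1}}$ outside $\mathbb{E}_{\zeta_d}$), and then closes the induction on the number of batches via $f(y)\leqslant f^{C_1}(y)\leqslant\cdots\leqslant f^1(y)=f^o(y)$. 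Your policy-simulation argument buys conceptual clarity: it makes explicit that the inequality is a pure information-monotonicity statement, and it avoids the bookkeeping of the intermediate processes $x_k^d$, $y_j^d$ that the paper must track (and which, as you correctly anticipated, is where the notation gets heavy). The paper's approach, on the other hand, stays entirely within the dynamic-programming recursions already set up in Section~\ref{subsec:assumptions} and never needs to invoke the existence of an optimal deterministic online policy or to argue about measurability of the simulated policy. Your remark about tracking configurations via the IDs $(i_l)$ rather than labels is the right fix for the only genuine subtlety in your route.
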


\subsection{Decision trees}
Multi-stage stochastic optimization problems can be solved using decision trees in which \textit{chance nodes} and \textit{decision nodes} typically alternate. A \textit{chance node} embodies the stochasticity of the model, while a \textit{decision node} models the possible actions of the algorithm. In a decision tree for the SCRP, a node represents a configuration. The root node (denoted by $0$) is the initial configuration, and the leaf nodes are configurations for which we can compute the cost-to-go function.

In our case, we slightly modify the structure of a typical decision tree, in the sense that chance nodes and decision nodes do not necessarily alternate. A chance node is a configuration for which the target container is not known yet, and information needs to be revealed (note that this only occurs at the beginning of each batch). A decision node is a configuration for which the target container is known. Using our notations, chance nodes correspond to $\left( y_w \right)_{w=1,\ldots,W}$ and decision nodes correspond to $\left( x_k \right)_{k=1,\ldots,C}$.

Let $n$ be a node corresponding to a configuration in the decision tree. Thus $f(n)$, the cost-to-go function of $n$, is defined for all nodes $n$, and $r(n)$, the immediate cost function of $n$, is well defined when $n$ is a decision node.
We denote by $\lambda_n$ the level of $n$ in the decision tree, and define it as the number of containers remaining in the configuration. We denote the lowest level of the tree by $\displaystyle \lambda^* = \min_{n \in Tree} \left\{ \lambda_n \right\}$. It corresponds to the level such that if $\lambda_n = \lambda^*$, $f(n)$ can be computed in an efficient way (the empty configuration being an obvious candidate with a cost-to-go of $0$). Moreover,
\begin{itemize}
\item If $n$ is a chance node, then there exists a unique $w \in \{1,\ldots,W\}$ such that $\lambda_n = C - K_w + 1$. We denote by $\Omega_n$ the set of offspring of $n$, each offspring being a decision node corresponding to a realization of the random variables $\zeta_{K_w},\ldots,\zeta_{K_w + C_w - 1}$,  \textit{i.e.}, the full retrieval order of containers in batch $w$. Note that $n$ has a priori $|\Omega_n| = C_w!$ offspring.
\item If $n$ is a decision node, then $r(n)$ is well defined and is equal to the number of containers blocking the target container in configuration $n$ (\textit{i.e.}, the $(C-\lambda_n+1)^{th}$ container to be retrieved). We denote by $\Delta_n$ the set of offspring of $n$, which can either be chance nodes if there exists $w \in \{1,\ldots,W\}$ such that $\lambda_n + 1 = C - K_w + 1$, or decision nodes otherwise. For the sake of simplicity, we compute $\Delta_n$ greedily by considering all feasible combinations of relocations of the $r(n)$ containers blocking the target container in $n$. Note that $|\Delta_n|$ is of the order of $r(n)^{S-1}$, where $S$ is the number of stacks.
\end{itemize}
Equation (\ref{eq:fundrecrusion}) provides the relation to compute the cost-to-go by back-tracking. For all $n$ in the decision tree, we have
\begin{eqnarray}
f(n) = \left\{ \begin{array}{l}
\displaystyle \frac{1}{| \Omega_n |} \sum_{n_i \in \Omega_n} f(n_i) \ \textit{, if n is a chance node},
\\
\displaystyle r(n) + \min_{n_i \in \Delta_n} \left\{ f(n_i) \right\} \ \textit{, if n is a decision node}.
\end{array}
\right.
\label{eq:fundamentalRecursion}
\end{eqnarray}

In the case in which the probability of each permutation (in each batch) is not uniform, we mentioned that in practice, operators provide the probability of potential orders for each batch. Given a chance node $n$ and one of its offspring $n_i \in \Omega_n$, this input probability is exactly the probability that the actual retrieval order is the one revealed in $n_i$. For a given node $n$, we denote these probabilities by $\left(p_{n_i} \right)_{n_i \in \Omega_n}$. In this case, Equation (\ref{eq:fundamentalRecursion}) is replaced by:
\begin{eqnarray*}
f(n) = \left\{ \begin{array}{l}
\displaystyle  \sum_{n_i \in \Omega_n} p_{n_i} f(n_i) \ \textit{, if n is a chance node},
\\
\displaystyle r(n) + \min_{n_i \in \Delta_n} \left\{ f(n_i) \right\} \ \textit{, if n is a decision node},
\end{array}
\right.
\end{eqnarray*}
for all $n$ in the decision tree.

Figures \ref{fig:treeNodes} and \ref{fig:treeBays} provide the description of the decision tree corresponding to the example in Figure \ref{fig:exampleSCRP}, using chance/decision nodes and configurations, respectively. Chance nodes are depicted with a circle, and decision nodes with a square.

\begin{figure}
\begin{center}
\resizebox{17cm}{8.5cm}{
\begin{tikzpicture} [node distance=2cm]
\node (chance0) [chance] {$00$};
\node (dec3) [decision, below of=chance0, xshift=-0.9cm] {$03$};
\node (dec4) [decision, right of=dec3] {$04$};
\node (dec5) [decision, right of=dec4] {$05$};
\node (dec6) [decision, right of=dec5] {$06$};
\node (dec2) [decision, left of=dec3] {$02$};
\node (dec1) [decision, left of=dec2] {$01$};
\node (dec9) [decision, below of=dec1] {$09$};
\node (dec8) [decision, left of=dec9] {$08$};
\node (dec7) [decision, left of=dec8] {$07$};
\node (dec10) [decision, right of=dec9] {$10$};
\node (dec11) [decision, right of=dec10] {$11$};
\node (dec12) [decision, right of=dec11] {$12$};
\node (dec13) [decision, right of=dec12] {$13$};
\node (dec14) [decision, below of=dec6] {$14$};
\node (dec15) [decision, right of=dec14] {$15$};
\node (dec16) [decision, right of=dec15] {$16$};
\node (dec17) [decision, below of=dec8] {$17$};
\node (dec18) [decision, right of=dec17] {$18$};
\node (dec19) [decision, right of=dec18] {$19$};
\node (dec20) [decision, right of=dec19] {$20$};
\node (dec21) [decision, right of=dec20] {$21$};
\node (dec22) [decision, right of=dec21] {$22$};
\node (dec23) [decision, right of=dec22] {$23$};
\node (dec24) [decision, right of=dec23] {$24$};
\node (chance26) [chance, below of=dec20, xshift=0.9cm] {$26$};
\node (chance25) [chance, left of=chance26] {$25$};
\node (chance27) [chance, right of=chance26] {$27$};

\draw [arrow] (chance0) -- (dec1);
\draw [arrow] (chance0) -- (dec2);
\draw [arrow] (chance0) -- (dec3);
\draw [arrow] (chance0) -- (dec4);
\draw [arrow] (chance0) -- (dec5);
\draw [arrow] (chance0) -- (dec6);
\draw [arrow3] (dec1) -- (dec7);
\draw [arrow3] (dec1) -- (dec8);
\draw [arrow3] (dec1) -- (dec9);
\draw [arrow1] (dec2) -- (dec10);
\draw [arrow1] (dec3) -- (dec11);
\draw [arrow1] (dec4) -- (dec12);
\draw [arrow1] (dec5) -- (dec13);
\draw [arrow3] (dec6) -- (dec14);
\draw [arrow3] (dec6) -- (dec15);
\draw [arrow3] (dec6) -- (dec16);
\draw [arrow2] (dec7) -- (dec17);
\draw [arrow3] (dec8) -- (dec17);
\draw [arrow3] (dec8) -- (dec18);
\draw [arrow3] (dec8) -- (dec19);
\draw [arrow2] (dec9) -- (dec19);
\draw [arrow1] (dec10) -- (dec20);
\draw [arrow1] (dec11) -- (dec20);
\draw [arrow2] (dec12) -- (dec21);
\draw [arrow2] (dec12) -- (dec22);
\draw [arrow3] (dec13) -- (dec21);
\draw [arrow3] (dec13) -- (dec22);
\draw [arrow3] (dec13) -- (dec24);
\draw [arrow1] (dec14) -- (dec21);
\draw [arrow1] (dec15) -- (dec22);
\draw [arrow2] (dec16) -- (dec22);
\draw [arrow2] (dec16) -- (dec23);
\draw [arrow1] (dec17) -- (chance25);
\draw [arrow1] (dec18) -- (chance25);
\draw [arrow1] (dec18) -- (chance25);
\draw [arrow1] (dec19) -- (chance26);
\draw [arrow2] (dec20) -- (chance26);
\draw [arrow2] (dec20) -- (chance27);
\draw [arrow2] (dec21) -- (chance25);
\draw [arrow2] (dec21) -- (chance26);
\draw [arrow1] (dec22) -- (chance26);
\draw [arrow1] (dec23) -- (chance25);
\draw [arrow1] (dec24) -- (chance27);
\end{tikzpicture}}
\caption{Decision tree represented with nodes. The colored arrows represent different values of immediate cost, \textit{i.e.}, the number of containers blocking the target container (dotted yellow: 0, dashed orange: 1, solid red: 2)}
\label{fig:treeNodes}
\end{center}
\end{figure}
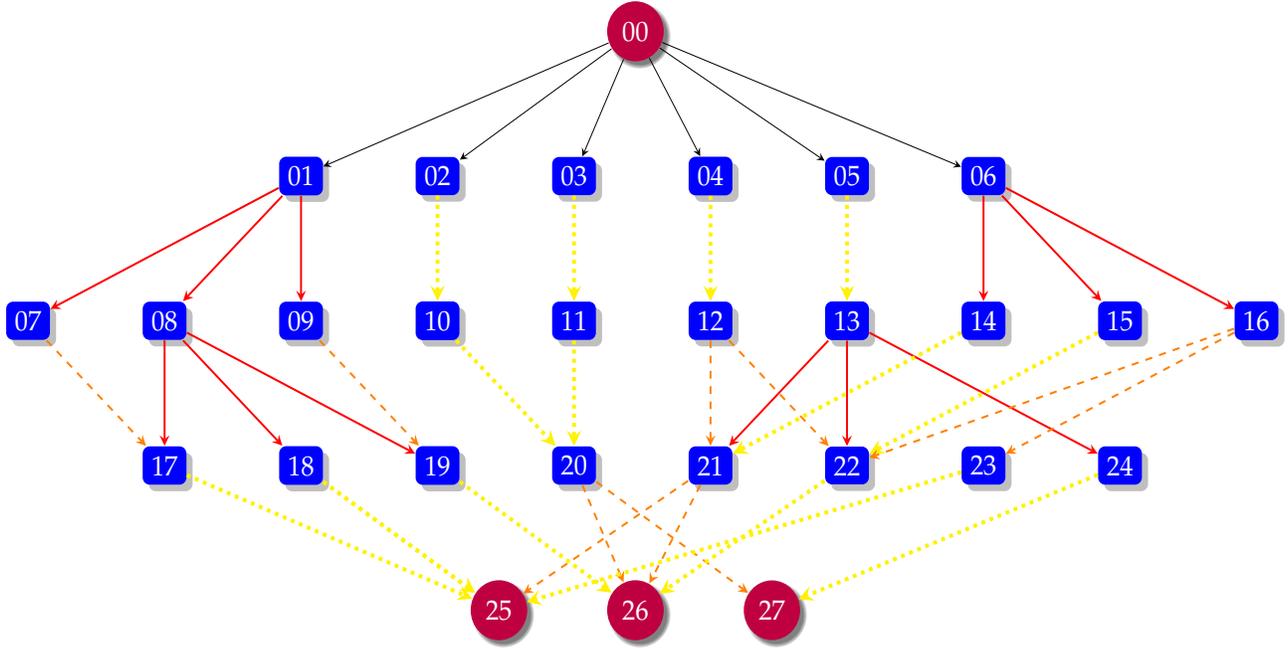

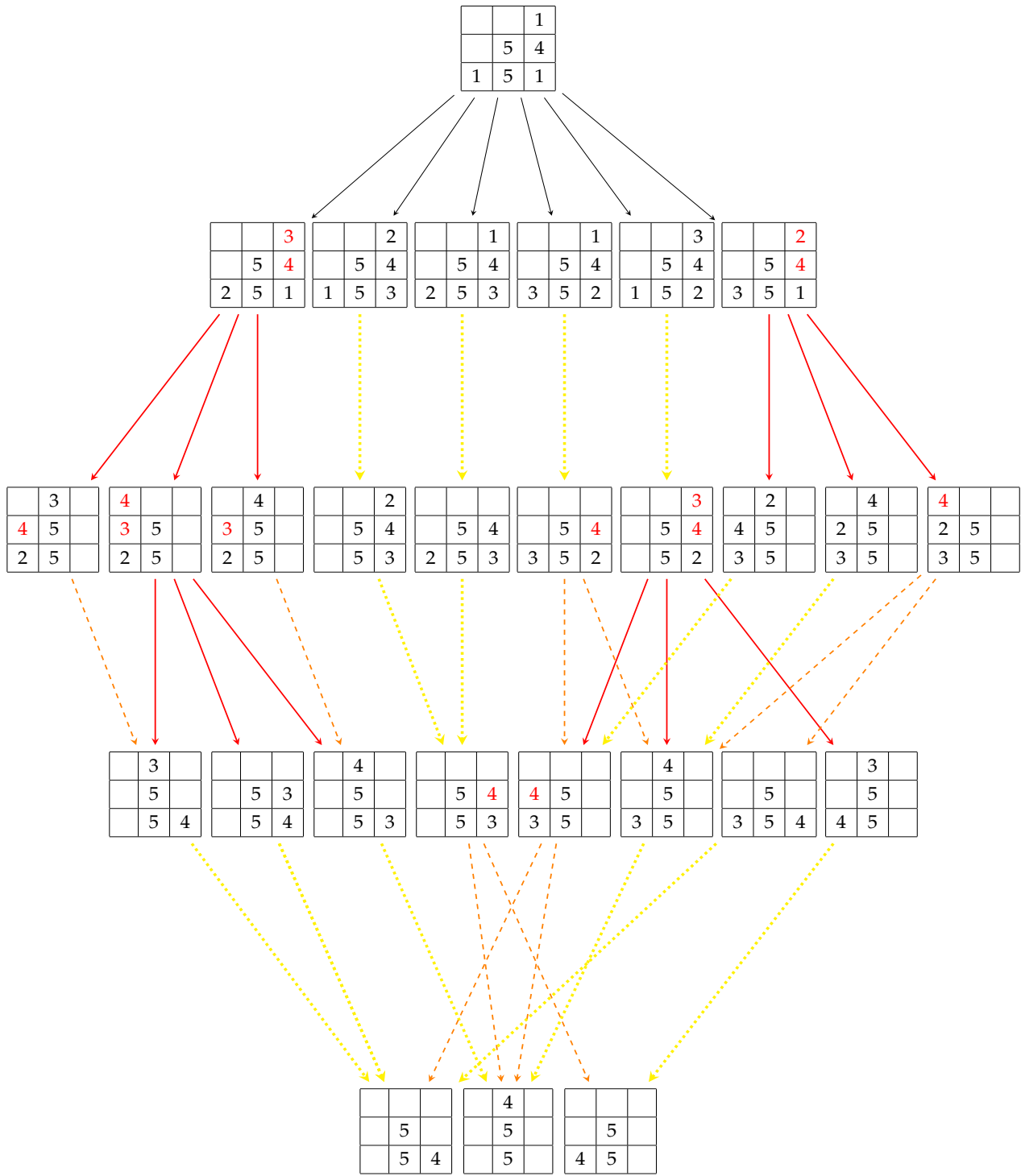
\begin{figure}
\begin{center}
\resizebox{17cm}{19.5cm}{
\begin{tikzpicture}[ node distance=2cm]
\node (chance0) [matrix] {\begin{tabular}{|c|c|c|} \hline
  &   & 1 \\ \hline
  & 5 & 4  \\ \hline
 1 & 5 & 1 \\ \hline
\end{tabular}};
\node (dec3) [matrix, below of=chance0, xshift=-0.9cm, yshift=-2.5cm] {\begin{tabular}{|c|c|c|} \hline
  &   & 1 \\ \hline
  & 5 & 4  \\ \hline
 2 & 5 & 3 \\ \hline
\end{tabular}};
\node (dec4) [matrix, right of=dec3] {\begin{tabular}{|c|c|c|} \hline
  &   & 1 \\ \hline
  & 5 & 4  \\ \hline
 3 & 5 & 2 \\ \hline
\end{tabular}};
\node (dec5) [matrix, right of=dec4] {\begin{tabular}{|c|c|c|} \hline
  &   & 3 \\ \hline
  & 5 & 4  \\ \hline
 1 & 5 & 2 \\ \hline
\end{tabular}};
\node (dec6) [matrix, right of=dec5] {\begin{tabular}{|c|c|c|} \hline
  &   & \textcolor{red}{2} \\ \hline
  & 5 & \textcolor{red}{4}  \\ \hline
 3 & 5 & 1 \\ \hline
\end{tabular}};
\node (dec2) [matrix, left of=dec3] {\begin{tabular}{|c|c|c|} \hline
  &   & 2 \\ \hline
  & 5 & 4  \\ \hline
 1 & 5 & 3 \\ \hline
\end{tabular}};
\node (dec1) [matrix, left of=dec2] {\begin{tabular}{|c|c|c|} \hline
  &   & \textcolor{red}{3} \\ \hline
  & 5 & \textcolor{red}{4}  \\ \hline
 2 & 5 & 1 \\ \hline
\end{tabular}};
\node (dec9) [matrix, below of=dec1, yshift=-3.5cm] {\begin{tabular}{|c|c|c|} \hline
  & 4  &  \\ \hline
 \textcolor{red}{3} & 5 &   \\ \hline
 2 & 5 & \hspace{4pt} \\ \hline
\end{tabular}};
\node (dec8) [matrix, left of=dec9] {\begin{tabular}{|c|c|c|} \hline
 \textcolor{red}{4} &   &  \\ \hline
 \textcolor{red}{3} & 5 &   \\ \hline
 2 & 5 & \hspace{4pt} \\ \hline
\end{tabular}};
\node (dec7) [matrix, left of=dec8] {\begin{tabular}{|c|c|c|} \hline
  & 3 &  \\ \hline
 \textcolor{red}{4} & 5 &   \\ \hline
 2 & 5 & \hspace{4pt} \\ \hline
\end{tabular}};
\node (dec10) [matrix, right of=dec9] {\begin{tabular}{|c|c|c|} \hline
  &  &  2 \\ \hline
  & 5 & 4  \\ \hline
 \hspace{4pt} & 5 & 3 \\ \hline
\end{tabular}};
\node (dec11) [matrix, right of=dec10] {\begin{tabular}{|c|c|c|} \hline
  &  &   \\ \hline
  & 5 & 4  \\ \hline
 2 & 5 & 3 \\ \hline
\end{tabular}};
\node (dec12) [matrix, right of=dec11] {\begin{tabular}{|c|c|c|} \hline
  &  &   \\ \hline
  & 5 & \textcolor{red}{4}  \\ \hline
 3 & 5 & 2 \\ \hline
\end{tabular}};
\node (dec13) [matrix, right of=dec12] {\begin{tabular}{|c|c|c|} \hline
  &  &  \textcolor{red}{3} \\ \hline
  & 5 & \textcolor{red}{4}  \\ \hline
 \hspace{4pt} & 5 & 2 \\ \hline
\end{tabular}};
\node (dec14) [matrix, right of=dec13] {\begin{tabular}{|c|c|c|} \hline
   & 2 &  \\ \hline
 4 & 5 &   \\ \hline
 3 & 5 & \hspace{4pt} \\ \hline
\end{tabular}};
\node (dec15) [matrix, right of=dec14] {\begin{tabular}{|c|c|c|} \hline
   & 4 &  \\ \hline
 2 & 5 &   \\ \hline
 3 & 5 & \hspace{4pt} \\ \hline
\end{tabular}};
\node (dec16) [matrix, right of=dec15] {\begin{tabular}{|c|c|c|} \hline
 \textcolor{red}{4} &  &   \\ \hline
 2 & 5 &   \\ \hline
 3 & 5 & \hspace{4pt} \\ \hline
\end{tabular}};
\node (dec17) [matrix, below of=dec8, yshift=-3.5cm] {\begin{tabular}{|c|c|c|} \hline
   & 3 &  \\ \hline
   & 5 &   \\ \hline
 \hspace{4pt} & 5 & 4 \\ \hline
\end{tabular}};
\node (dec18) [matrix, right of=dec17] {\begin{tabular}{|c|c|c|} \hline
   &  &  \\ \hline
   & 5 & 3 \\ \hline
 \hspace{4pt} & 5 & 4 \\ \hline
\end{tabular}};
\node (dec19) [matrix, right of=dec18] {\begin{tabular}{|c|c|c|} \hline
   & 4 &  \\ \hline
   & 5 &  \\ \hline
 \hspace{4pt} & 5 & 3 \\ \hline
\end{tabular}};
\node (dec20) [matrix, right of=dec19] {\begin{tabular}{|c|c|c|} \hline
   &  &  \\ \hline
   & 5 & \textcolor{red}{4} \\ \hline
 \hspace{4pt} & 5 & 3 \\ \hline
\end{tabular}};
\node (dec21) [matrix, right of=dec20] {\begin{tabular}{|c|c|c|} \hline
   &  &  \\ \hline
 \textcolor{red}{4}  & 5 &  \\ \hline
 3 & 5 & \hspace{4pt} \\ \hline
\end{tabular}};
\node (dec22) [matrix, right of=dec21] {\begin{tabular}{|c|c|c|} \hline
   & 4 &  \\ \hline
   & 5 &  \\ \hline
 3 & 5 & \hspace{4pt} \\ \hline
\end{tabular}};
\node (dec23) [matrix, right of=dec22] {\begin{tabular}{|c|c|c|} \hline
   &  &  \\ \hline
   & 5 &  \\ \hline
 3 & 5 & 4 \\ \hline
\end{tabular}};
\node (dec24) [matrix, right of=dec23] {\begin{tabular}{|c|c|c|} \hline
   & 3 &  \\ \hline
   & 5 &  \\ \hline
 4 & 5 & \hspace{4pt} \\ \hline
\end{tabular}};
\node (chance26) [matrix, below of=dec20, xshift=0.9cm, yshift=-5cm] {\begin{tabular}{|c|c|c|} \hline
   & 4 &  \\ \hline
   & 5 &  \\ \hline
 \hspace{4pt} & 5 & \hspace{4pt} \\ \hline
\end{tabular}};
\node (chance25) [matrix, left of=chance26] {\begin{tabular}{|c|c|c|} \hline
   &  &  \\ \hline
   & 5 &  \\ \hline
 \hspace{4pt} & 5 & 4 \\ \hline
\end{tabular}};
\node (chance27) [matrix, right of=chance26] {\begin{tabular}{|c|c|c|} \hline
   &  &  \\ \hline
   & 5 &  \\ \hline
 4 & 5 & \hspace{4pt} \\ \hline
\end{tabular}};

\draw [arrow] (chance0) -- (dec1);
\draw [arrow] (chance0) -- (dec2);
\draw [arrow] (chance0) -- (dec3);
\draw [arrow] (chance0) -- (dec4);
\draw [arrow] (chance0) -- (dec5);
\draw [arrow] (chance0) -- (dec6);
\draw [arrow3] (dec1) -- (dec7);
\draw [arrow3] (dec1) -- (dec8);
\draw [arrow3] (dec1) -- (dec9);
\draw [arrow1] (dec2) -- (dec10);
\draw [arrow1] (dec3) -- (dec11);
\draw [arrow1] (dec4) -- (dec12);
\draw [arrow1] (dec5) -- (dec13);
\draw [arrow3] (dec6) -- (dec14);
\draw [arrow3] (dec6) -- (dec15);
\draw [arrow3] (dec6) -- (dec16);
\draw [arrow2] (dec7) -- (dec17);
\draw [arrow3] (dec8) -- (dec17);
\draw [arrow3] (dec8) -- (dec18);
\draw [arrow3] (dec8) -- (dec19);
\draw [arrow2] (dec9) -- (dec19);
\draw [arrow1] (dec10) -- (dec20);
\draw [arrow1] (dec11) -- (dec20);
\draw [arrow2] (dec12) -- (dec21);
\draw [arrow2] (dec12) -- (dec22);
\draw [arrow3] (dec13) -- (dec21);
\draw [arrow3] (dec13) -- (dec22);
\draw [arrow3] (dec13) -- (dec24);
\draw [arrow1] (dec14) -- (dec21);
\draw [arrow1] (dec15) -- (dec22);
\draw [arrow2] (dec16) -- (dec22);
\draw [arrow2] (dec16) -- (dec23);
\draw [arrow1] (dec17) -- (chance25);
\draw [arrow1] (dec18) -- (chance25);
\draw [arrow1] (dec18) -- (chance25);
\draw [arrow1] (dec19) -- (chance26);
\draw [arrow2] (dec20) -- (chance26);
\draw [arrow2] (dec20) -- (chance27);
\draw [arrow2] (dec21) -- (chance25);
\draw [arrow2] (dec21) -- (chance26);
\draw [arrow1] (dec22) -- (chance26);
\draw [arrow1] (dec23) -- (chance25);
\draw [arrow1] (dec24) -- (chance27);
\end{tikzpicture}}
\caption{Decision tree represented with configurations. The colored arrows represent different values of immediate cost, \textit{i.e.}, the number of containers blocking the target container (dotted yellow: 0, dashed orange: 1, solid red: 2). Red numbers highlight containers blocking the target container.}
\label{fig:treeBays}
\end{center}
\end{figure}

To illustrate how to use Equation (\ref{eq:fundamentalRecursion}), we derive the calculations using the example in Figure \ref{fig:treeNodes}. Suppose that $f(25) = f(26) = f(27) = 0.5$ are known, then we get
$f(17) = f(18) = f(19) = 0.5$, $f(07) = f(09) = 1.5$ and $f(08) = 2.5$, leading to $f(01) = 3.5$. Similarly, by back-tracking, we can compute $f(02) = f(03) = f(04) = 1.5$ and $f(05) = f(06) = 2.5$ giving us $f(00) = 13/6$.

As the example shows, considering the full decision tree can become intractable even for the small examples, hence quickly impossible for larger problems. As previously mentioned, the number of decision offspring of a chance node scales with $C_w!$, and the number of offspring of a decision node is of the order of $r(n)^{S-1}$. So the size of the tree grows exponentially with the size of the problem. However, there exist general and specific techniques to reduce substantially the size of this tree, as we discuss now:

First, there are suboptimal approaches. One way to approximate $f(n)$, when $n$ is a chance node, is to sample from its offspring. When $\Omega_n$ is large (which can happen in the case of large batches), one might sample a certain number of realizations, resulting in a set of offspring $\Psi_{n} \subset \Omega_n$. By sampling ``enough,'' we show in Section \ref{sec:PBFSA}, that we can ensure guarantees on expectation for such an algorithm. Another popular suboptimal approach is to use techniques from Approximate Dynamic Programming. These techniques can provide good empirical results, but no guarantee on how far from optimality can be obtained. This direction is not discussed in this paper but can be a future direction of work. Finally, another approach is to consider heuristics such as the ones described in the next section, which select a subset of the offspring of decision nodes, and lead to upper bounds on the optimal value $f(0)$.

Second, there exist ways to decrease the size of the tree, while ensuring optimality. One of them is to reduce the number of nodes using the problem structure of the SCRP. In the online setting, \cite{Ku} propose an \textit{``abstraction''} technique, which shrinks significantly the size of the tree. Thanks to Assumption $A^*_4$, we can consider that the stacks of a configuration are interchangeable, making many configurations equivalent in terms of number of relocations. For instance, in Figure \ref{fig:treeNodes}, nodes $20$ and $21$ are identical in terms of number of relocations.

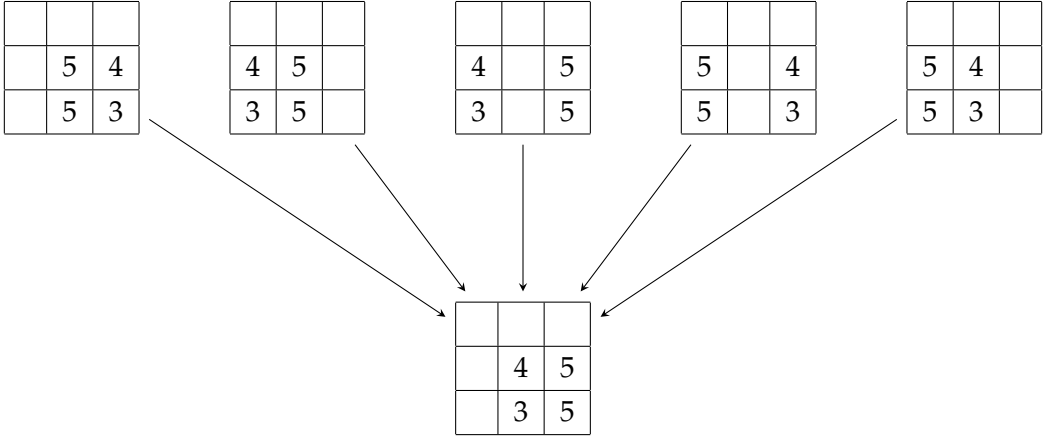
\begin{figure}
\begin{center}
\begin{tikzpicture} [node distance=3cm]
\node (abs1) [matrix] {\begin{tabular}{|c|c|c|} \hline
  &   &  \\ \hline
  & 5 & 4  \\ \hline
\hspace{4pt}  & 5 & 3 \\ \hline
\end{tabular}};
\node (abs2) [matrix, right of=abs1] {\begin{tabular}{|c|c|c|} \hline
  &   &  \\ \hline
4  & 5 &   \\ \hline
3  & 5 & \hspace{4pt} \\ \hline
\end{tabular}};
\node (abs3) [matrix, right of=abs2] {\begin{tabular}{|c|c|c|} \hline
  &   &  \\ \hline
4 &  & 5  \\ \hline
3 & \hspace{4pt} & 5 \\ \hline
\end{tabular}};
\node (abs4) [matrix, right of=abs3] {\begin{tabular}{|c|c|c|} \hline
  &   &  \\ \hline
5 &  & 4  \\ \hline
5 & \hspace{4pt} & 3 \\ \hline
\end{tabular}};
\node (abs5) [matrix, right of=abs4] {\begin{tabular}{|c|c|c|} \hline
  &   &  \\ \hline
5 & 4 &   \\ \hline
5 & 3 & \hspace{4pt} \\ \hline
\end{tabular}};
\node (root) [matrix,below of=abs3, yshift=-1cm] {\begin{tabular}{|c|c|c|} \hline
  &   &  \\ \hline
  & 4 &  5 \\ \hline
 \hspace{4pt} & 3 & 5 \\ \hline
\end{tabular}};

\draw [arrow] (abs1) -- (root);
\draw [arrow] (abs2) -- (root);
\draw [arrow] (abs3) -- (root);
\draw [arrow] (abs4) -- (root);
\draw [arrow] (abs5) -- (root);

\end{tikzpicture}
\caption{``Abstraction'' procedure}
\label{fig:abstraction}
\end{center}
\end{figure}

We describe the ``abstraction'' step with an example in Figure \ref{fig:abstraction}. The five configurations at the top are all equivalent to the configuration at the bottom. The abstraction transformation first ranks the columns by increasing height. For stacks with equal height, it breaks ties by ranking them lexicographically starting from bottom to top. Columns are re-arranged in order to have the first ranked on the left, and the last on the right. \cite{Ku} use a slightly different rule, and add the extra-step of removing empty columns. Using the abstraction procedure, the proposed algorithm should not generate twice nodes with identical abstracted versions. Even though \cite{Ku} introduce this rule for the online model, this ``Abstraction'' step also applies in the batch setting. Throughout the rest of the paper, we will refer in pseudocode to the function $\textproc{Abstract}(n)$, when applying this method to a given node $n$. Finally, we mention that \cite{Ku} also suggest caching strategies that could be added on the top of this simplification step, including caching part of the tree, or using a transportation table.

Finally, the performance of a decision tree based algorithm depends on the exploration strategy of the tree. For the online model, \cite{Ku} use depth-first-search (DFS), and we propose to explore the best-first-search (BFS) approach. Note that BFS requires some kind of measure that we define in Section \ref{sec:PBFS}.

In further sections, we explore two other ways to decrease the size of the tree while still ensuring optimality. The first one is specific to the SCRP, and uses properties of the problem to increase $\lambda^*$. Recall that $\lambda^*$ is the minimum level of the tree at which we can find the optimal expected number of relocations, without further branching. We show that we can set $\lambda^*$ to $\max\{S,C_W\}$, where $S$ is the number of stacks, and $C_W$ is the number of containers in the last batch. Comparatively, \cite{Ku} branch until $\lambda^* = 0$.
The second optimal pruning strategy uses lower bounds in a BFS scheme.

After introducing the batch model for the SCRP (as well as the online model) and some preliminary concepts about decision trees, the next three sections develop approaches to solve the SCRP.

\section{Heuristics and lower bounds}
\label{sec:Bounds}

Before introducing the two main algorithms, we decribe in this section heuristics and lower bounds for the SCRP. Indeed, $PBFS$ and $PBFSA$ build upon some of these bounds. In addition, these algorithms provide good intuition on how to solve the SCRP.
 
Let $n$ be a given configuration with $S$ stacks and $T$ tiers. We say that a Container $c$ is a \textit{blocking container} in $n$ if it is stacked above at least one container which is to be removed before $c$. Note that all the bounds mentioned below apply both in the batch and online models.

\subsection{Heuristics}

For the sake of completeness of this paper, we first describe three existing heuristics used in proofs and/or our computational experiments before describing two novel heuristics.

\subsubsection{Existing heuristics}
\paragraph{Random}
For every relocation of a blocking Container $c$ from Stack $s$, the random heuristic picks any Stack $s' \neq s$ uniformly at random among stacks that are not ``full,'' \textit{i.e.}, stacks containing strictly less than $T$ containers.

\paragraph{Leveling (L)}
For every relocation of a blocking Container $c$ from Stack $s$, L chooses the Stack $s' \neq s$ currently containing the least number of containers, breaking ties arbitrarily by selecting such leftmost stack.

Heuristic L is interesting for several reasons. Most importantly, it is an intuitive and commonly used heuristics in real operations as it uses no more than the height of each stack. It does not require any information about batches or departure times, which means it is robust with respect to the inaccuracy of information. In addition, it is optimal for any configurations with $S$ containers or less (see Section \ref{sec:PBFS}). Finally, we show strong evidence in the last computational experiment (see Section \ref{sec::ComputationExperiments}), that this policy is optimal for the SCRP under the online model with a unique batch (representing the case of no-information on the retrieval order).

\paragraph{Expected Reshuffling Index (ERI)}
This index-based heuristic was introduced by \cite{Ku} for the online model. For every relocation of a blocking Container $c$ from Stack $s$, ERI computes a score called the expected reshuffling index for each Stack $s' \neq s$ that is not full, denoted by $ERI(s',c)$. ERI chooses the Stack $s' \neq s $ with the lowest $ERI(s',c)$. In the case of a tie, the policy breaks it by selecting the highest column among the ones minimizing $ERI(s',c)$. Further ties are arbitrarily broken by selecting the leftmost column verifying the two previous conditions. $ERI(s',c)$ corresponds to the expected number of containers in Stack $s'$ that depart earlier than $c$. Let $H_{s'}$ be the current number of containers in $s'$. If $H_{s'}=0$, then $ERI(s',c) = 0$. Otherwise, let $(c_1,\ldots,c_{H_{s'}})$ be the containers in $s'$, then $\displaystyle ERI(s',c) = \sum_{i=1}^{H_{s'}} \mathbbm{1} \left\{ c_i < c \right\} + \frac{\mathbbm{1} \left\{ c_i = c \right\}}{2}$.

\subsubsection{First new heuristic: Expected Minmax (EM)}
EM considers an idea similar to that of \cite{Caserta12} for the CRP. Let $min(s)$ be the smallest label of a container in $s$ ($min(s) = C + 1$, if Stack $s$ is empty). For every relocation of a blocking Container $c$ from Stack $s$, we select the stack to which we relocate $c$ using the following rules:
\begin{alphalist}
\item If there exists $s' \neq s$ such that $min(s')>c$, let $\displaystyle M = \min_{s' \in \{ 1,\ldots,S \} \setminus s} \lbrace{\min (s'):\min (s')>c}\rbrace$. Select a stack such that $min(s') = M$, breaking ties by choosing from the highest ones, finally taking the leftmost stack if any ties remain.
\item If for all Stacks $s' \neq s, \ min(s') \leqslant c$, let $\displaystyle M = \max_{s' \in \{ 1,\ldots,S \} \setminus s} \lbrace{\min (s')}\rbrace$. Select a stack such that $min(s') = M$. If there are several such stacks, select those with the minimum number of containers labeled $M$. Further ties are again broken by taking the highest ones, and finally choosing the leftmost one arbitrarily.
\end{alphalist}

\begin{figure}[h]
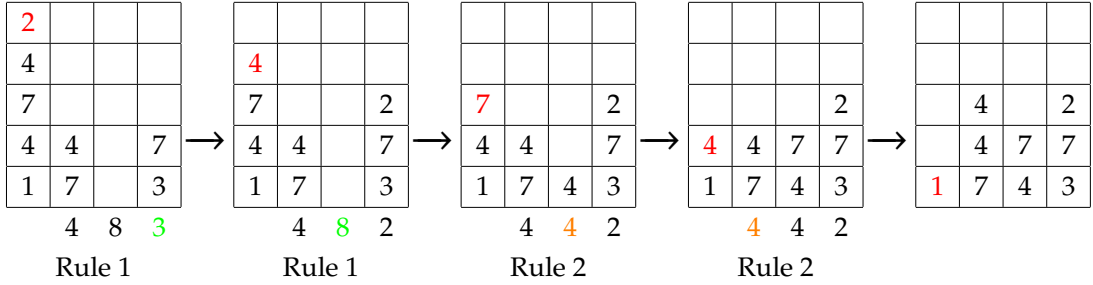

\centering
\resizebox{14.5cm}{2cm}{
\begin{tabular}{|c|c|c|c|} \hline
 \textcolor{red}{2} &   & \hspace{5pt} &   \\ \hline
 4 &   & \hspace{5pt} &   \\ \hline
 7 &   & \hspace{5pt} &   \\ \hline
 4 & 4 & \hspace{5pt} & 7 \\ \hline
 1 & 7 & \hspace{5pt} & 3 \\ \hline
 \multicolumn{1}{c}{} & \multicolumn{1}{c}{4} & \multicolumn{1}{c}{8} & \multicolumn{1}{c}{\textcolor{green}{3}} \\
\multicolumn{4}{c}{Rule 1}
\end{tabular}
{\LARGE$\xrightarrow{}$}
\begin{tabular}{|c|c|c|c|} \hline
   &   & \hspace{5pt} &   \\ \hline
 \textcolor{red}{4} &   & \hspace{5pt} &   \\ \hline
 7 &   & \hspace{5pt} & 2 \\ \hline
 4 & 4 & \hspace{5pt} & 7 \\ \hline
 1 & 7 & \hspace{5pt} & 3 \\ \hline
  \multicolumn{1}{c}{} & \multicolumn{1}{c}{4} & \multicolumn{1}{c}{\textcolor{green}{8}} & \multicolumn{1}{c}{2} \\
\multicolumn{4}{c}{Rule 1}
\end{tabular}
{\LARGE$\xrightarrow{}$}
\begin{tabular}{|c|c|c|c|} \hline
   &   &   &   \\ \hline
   &   &   &   \\ \hline
 \textcolor{red}{7} &   &   & 2 \\ \hline
 4 & 4 &   & 7 \\ \hline
 1 & 7 & 4 & 3 \\ \hline
  \multicolumn{1}{c}{} & \multicolumn{1}{c}{4} & \multicolumn{1}{c}{\textcolor{orange}{4}} & \multicolumn{1}{c}{2} \\
\multicolumn{4}{c}{Rule 2}
\end{tabular}
{\LARGE$\xrightarrow{}$}
\begin{tabular}{|c|c|c|c|} \hline
   &   &   &   \\ \hline
   &   &   &   \\ \hline
   &   &   & 2 \\ \hline
 \textcolor{red}{4} & 4 & 7 & 7 \\ \hline
 1 & 7 & 4 & 3 \\ \hline
  \multicolumn{1}{c}{} & \multicolumn{1}{c}{\textcolor{orange}{4}} & \multicolumn{1}{c}{4} & \multicolumn{1}{c}{2} \\
\multicolumn{4}{c}{Rule 2}
\end{tabular}
{\LARGE$\xrightarrow{}$}
\begin{tabular}{|c|c|c|c|} \hline
   &   &   &   \\ \hline
   &   &   &   \\ \hline
   & 4 &   & 2 \\ \hline
   & 4 & 7 & 7 \\ \hline
 \textcolor{red}{1} & 7 & 4 & 3 \\ \hline
  \multicolumn{1}{c}{} & \multicolumn{1}{c}{} & \multicolumn{1}{c}{} & \multicolumn{1}{c}{} \\
\multicolumn{4}{c}{}
\end{tabular}}
\caption{Decisions of the EM heuristic on an example with 5 tiers, 4 stacks and 9 containers (3 per batch). Under the batch model, the first batch has been revealed and we present the decisions to retrieve the first container made by EM. The container with the red label is the current blocking container. Numbers under the configuration correspond to the stack indices $min(s)$. The green (respectively orange) indices correspond to the selected stack with the corresponding $M$ when Rule 1 (respectively Rule 2) applies.}
\label{fig:exampleEM}
\end{figure}

Rule 1 says: if there is a stack where $\min (s)$ is greater than $c$ ($c$ can almost surely avoid being relocated again), then choose such a stack where $\min (s)$ is minimized, since stacks with larger minimums can be useful for larger blocking containers.

If there is no stack satisfying $\min (s) > c$ (Rule 2), then we have two cases following the same rule. On one hand, if $M=c$, then $M$ is the maximum of the minimum labels of each stack, and $c$ can potentially avoid being relocated again. If there are several stacks that maximize the minimum label, then by selecting the one with the least number of containers labeled $M$, EM minimizes the probability of $c$ being relocated again. 
On the other hand, if $M<c$, $c$ will almost surely be relocated again, then EM chooses the stack where $\min (s)$ is maximized in order to delay the next unavoidable relocation of $c$ as much as possible. We show how EM makes decision on a simple example in Figure \ref{fig:exampleEM}.

\subsubsection{Second new heuristic: Expected Group assignment (EG)}
EM is quite intuitive because it tries to minimize the number of blocking containers after each retrieval. EG aims for the same goal, but uses some more sophisticated rules (althought, as shown in the experiments in Section \ref{sec::ComputationExperiments}, EG does not always provide better solutions that EM). EG is inspired by a heuristic designed by \cite{Wu12} for the complete information case, and we generalize this idea to the SCRP. It is different from ERI and EM because it considers a group of blocking containers together, while ERI and EM consider them one at a time. EG can be decomposed in two main phases for each retrieval. The decisions made by EG on the same example as Figure \ref{fig:exampleEM} are given in Figure \ref{fig:exampleEG}.

\begin{figure}[h]
\centering
\begin{minipage}{\textwidth}
\centering
\resizebox{14.5cm}{2cm}{
\begin{tabular}{|c|c|c|c|} \hline
 2 &   & \hspace{5pt} &   \\ \hline
 4 &   & \hspace{5pt} &   \\ \hline
 \textcolor{red}{7} &   & \hspace{5pt} &   \\ \hline
 4 & 4 & \hspace{5pt} & 7 \\ \hline
 1 & 7 & \hspace{5pt} & 3 \\ \hline
 \multicolumn{1}{c}{} & \multicolumn{1}{c}{4} & \multicolumn{1}{c}{\textcolor{green}{8}} & \multicolumn{1}{c}{3} \\
\multicolumn{4}{c}{}
\end{tabular}
{\LARGE$\xrightarrow{}$}d
\begin{tabular}{|c|c|c|c|} \hline
 2 &   &   &   \\ \hline
 \textcolor{red}{4} &   &   &   \\ \hline
   &  &  &   \\ \hline
 4 & 4 &  & 7 \\ \hline
 1 & 7 & \textcolor{blue}{7} & 3 \\ \hline
  \multicolumn{1}{c}{} & \multicolumn{1}{c}{4} & \multicolumn{1}{c}{$\times$} & \multicolumn{1}{c}{2} \\
\multicolumn{4}{c}{{\fontsize{9.5}{9.5}\selectfont Non-assigned}}
\end{tabular}
{\LARGE$\xrightarrow{}$}
\begin{tabular}{|c|c|c|c|} \hline
 2 &   &   &   \\ \hline
 \textcolor{gray}{4} &   &   &   \\ \hline
   &  &  &   \\ \hline
 \textcolor{red}{4} & 4 &  & 7 \\ \hline
 1 & 7 & \textcolor{blue}{7} & 3 \\ \hline
  \multicolumn{1}{c}{} & \multicolumn{1}{c}{4} & \multicolumn{1}{c}{\textcolor{green}{7}} & \multicolumn{1}{c}{2} \\
\multicolumn{4}{c}{}
\end{tabular}
{\LARGE$\xrightarrow{}$}
\begin{tabular}{|c|c|c|c|} \hline
 \textcolor{red}{2} &   &   &   \\ \hline
 \textcolor{gray}{4} &   &   &   \\ \hline
   &  &  &   \\ \hline
   & 4 & \textcolor{blue}{4} & 7 \\ \hline
 1 & 7 & \textcolor{blue}{7} & 3 \\ \hline
  \multicolumn{1}{c}{} & \multicolumn{1}{c}{4} & \multicolumn{1}{c}{$\times$} & \multicolumn{1}{c}{\textcolor{green}{3}} \\
\multicolumn{4}{c}{}
\end{tabular}
{\LARGE$\xrightarrow{}$}
\begin{tabular}{|c|c|c|c|} \hline
   &   &   &   \\ \hline
 \textcolor{gray}{4} &   &   &   \\ \hline
   &  &  &  \textcolor{blue}{2} \\ \hline
   & 4 & \textcolor{blue}{4} & 7 \\ \hline
 1 & 7 & \textcolor{blue}{7} & 3 \\ \hline
  \multicolumn{1}{c}{} & \multicolumn{1}{c}{} & \multicolumn{1}{c}{} & \multicolumn{1}{c}{} \\
\multicolumn{4}{c}{}
\end{tabular}}
\caption*{\ref{fig:exampleEG}a. First phase: EG assigns acceptable containers in descending order. The container with the red label is the next acceptable container that EG tries to assign to a stack. Containers with blue labels are assigned, gray are unassigned. Below, we show the indices $min(s)$ to apply the first rule of EM ($\times$ means that a container below the considered container has already been assigned to a stack).}
\end{minipage}
\qquad
\begin{minipage}{\textwidth}
\centering
\resizebox{6cm}{2cm}{
\begin{tabular}{|c|c|c|c|} \hline
   &   &   &   \\ \hline
 \textcolor{red}{4} &   &   &   \\ \hline
   &  &  &  \textcolor{blue}{2} \\ \hline
   & 4 & \textcolor{blue}{4} & 7 \\ \hline
 1 & 7 & \textcolor{blue}{7} & 3 \\ \hline
  \multicolumn{1}{c}{} & \multicolumn{1}{c}{\textcolor{orange}{4}} & \multicolumn{1}{c}{0} & \multicolumn{1}{c}{2}
\end{tabular}
{\LARGE$\xrightarrow{}$}
\begin{tabular}{|c|c|c|c|} \hline
   &   &   &   \\ \hline
   &   &   &   \\ \hline
   & \textcolor{blue}{4} &  & \textcolor{blue}{2} \\ \hline
   & 4 & \textcolor{blue}{4} & 7 \\ \hline
 \textcolor{red}{1} & 7 & \textcolor{blue}{7} & 3 \\ \hline
  \multicolumn{1}{c}{} & \multicolumn{1}{c}{} & \multicolumn{1}{c}{} & \multicolumn{1}{c}{}
\end{tabular}}
\caption*{\ref{fig:exampleEG}b. Second Phase, EG assigns all unassigned containers using the index $Gmin(s)$.}
\end{minipage}
\caption{Decisions of the EG heuristic on an example with 5 tiers, 4 stacks and 9 containers (3 in each batch). Under the batch model, the first batch has been revealed and we present the two phase decisions to retrieve the first container made by EG.}
\label{fig:exampleEG}
\end{figure}

The first phase assigns the blocking containers for which there exists $s' \neq s$ such that $min(s')>c$. If this is not the case, the assignments of these containers will be ignored at the first phase. The acceptable containers are assigned in descending order of labels, \textit{i.e.}, the container with highest label is assigned first (breaking ties for the highest one first). In order to assign these acceptable containers, the first phase applies the first of the EM rules. Finally, acceptable containers cannot be assigned to a stack if there is a container below it that was previously assigned to this stack.

The assignment in the second phase for the blocking containers not assigned yet, might lead to additional relocations. These containers are assigned to other stacks in ascending order of labels. The second phase first computes a modified $min(s')$ index for each stack denoted by $Gmin(s')$, which is defined as follows: Let $H_s'$ be the height of Stack $s'$ and $B(s')$ be the subset of containers assigned in the first phase to Stack $s'$,
\begin{eqnarray*}
Gmin(s') = \left\{
\begin{array}{l}
- 1, \ \textit{if } | B(s') | + H_s' = T,
\\
min(s') , \ \textit{if } | B(s') | = 0,
\\
B(s'), \ \textit{if } | B(s') | = 1,
\\
0 , \ \textit{otherwise}.
\end{array}
\right.
\end{eqnarray*}

If a stack is full after we assign the containers during the first phase, then it cannot be selected. If no container was assigned, the index remains as the $min$. If one container was assigned, it is ``artificially'' the new minimum of the stack. Finally, if more than one container was assigned, the index becomes very unattractive by being as low as possible (0).
The second phase is similar to the EM heuristic, but it considers $Gmin$ instead of the $min$ index, breaking ties identically. Note that, after each assignment in the second phase, we update $Gmin$ accordingly for the remaining containers to be assigned. For more details in the complete information case, we refer the reader to \cite{Wu12}.

\subsection{Lower bounds}

After defining heuristics (upper bounds), we are now concerned with defining valid lower bounds for the SCRP. More specifically, we care about computing lower bounds for decision nodes in the decision tree defined before. Note that the computation of lower bounds easily extends to chance nodes.

\subsubsection{Blocking lower bound}

Suppose that the departure order is known, like in the CRP. The following lower bound was introduced by \cite{Kim} and it is based on the following simple observation. If a container is blocking in $n$, then it must be relocated at least once. Thus, the optimal number of relocations is lower bounded by the number of blocking containers.

In the SCRP, the retrieval order is a random variable, so the fact that a container is blocking is also random. Let us denote the expected number of blocking containers in $n$ by $b(n)$. Therefore, by taking the expectation on the retrieval order of the previous fact, which holds for every retrieval order, we have the following observation.

\begin{observation}
For all configurations $n$, $f(n)$ is the minimum expected number of relocations to empty $n$, and $b(n)$ is the expected number of blocking containers, then
\begin{eqnarray*}
f(n) \geqslant b(n).
\end{eqnarray*}
\end{observation}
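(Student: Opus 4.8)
The plan is to reduce this stochastic statement to the classical deterministic lower bound of \cite{Kim} by conditioning on a full realization of the retrieval order and then averaging. Fix any permutation $\sigma$ of the containers that is consistent with the batch ordering, and let $B_\sigma(n)$ be the number of blocking containers of configuration $n$ with respect to $\sigma$. The first step is to record the deterministic fact that \emph{any} feasible sequence of moves emptying $n$ under retrieval order $\sigma$ (respecting $A^*_2$ and $A^*_3$) performs at least $B_\sigma(n)$ relocations: each container $c$ that is blocking in $n$ sits above some $c'$ with $c' <_\sigma c$, and since $c'$ cannot be retrieved while $c$ is above it, $c$ is relocated at least once; distinct blocking containers force distinct relocation moves, so the total number of relocations is at least $B_\sigma(n)$.

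The second step lifts this pathwise inequality to the stochastic model. By the dynamic-programming formulation (Equations (\ref{eq:fundrecrusion})--(\ref{eq:DPformulationbatch})), $f(n)$ equals $\min_\pi \mathbb{E}_\sigma\!\left[ N_\pi(\sigma) \right]$, where $\pi$ ranges over non-anticipative policies, $N_\pi(\sigma)$ is the number of relocations performed by $\pi$ on realization $\sigma$, and $\sigma$ is drawn from the conditional distribution given the information revealed at $n$. For every fixed $\pi$ and every $\sigma$ we have $N_\pi(\sigma) \geqslant B_\sigma(n)$; taking expectations over $\sigma$ yields $\mathbb{E}_\sigma[N_\pi(\sigma)] \geqslant \mathbb{E}_\sigma[B_\sigma(n)] = b(n)$, and since this holds for every $\pi$ it passes through the minimum, giving $f(n) \geqslant b(n)$.

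A self-contained alternative, avoiding the policy-value characterization, is a structural induction on the decision tree of Section~\ref{sec:Model} using Equation (\ref{eq:fundamentalRecursion}), with the empty configuration as base case. For a chance node the bound passes through the average over $\Omega_n$ by linearity of expectation, so the only content is the sub-lemma $b(n) \leqslant r(n) + b(n_i)$ for a decision node $n$ and \emph{every} offspring $n_i \in \Delta_n$. This follows by averaging the pathwise statement $B_\sigma(n) \leqslant r(n) + B_\sigma(n_i)$: a container blocking in $n$ is either one of the $r(n)$ containers above the current target, or it and the container beneath it that it blocks are both untouched when the target is retrieved, and hence it is still blocking in $n_i$.

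The step that requires genuine care is the deterministic blocking bound under the restricted-move assumption $A^*_3$ --- making sure no accounting trick lets a single relocation discharge two blocking containers --- and, in the inductive variant, checking that the conditional expectations line up when the offspring $n_i$ is a chance node $y_{w+1}$, where one uses that $B_\sigma(y_{w+1})$ depends only on the order of batches $w+1,\dots,W$ and is therefore independent of the just-revealed order of batch $w$. Everything else is bookkeeping.
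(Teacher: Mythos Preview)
Your primary argument is exactly the paper's: the paper simply records Kim's deterministic bound that every blocking container must be relocated at least once, notes this holds for every retrieval order, and then ``tak[es] the expectation on the retrieval order'' to obtain $f(n)\geqslant b(n)$ --- which is precisely your first step (pathwise $N_\pi(\sigma)\geqslant B_\sigma(n)$) followed by your second (average over $\sigma$, then minimize over $\pi$). Your decision-tree induction is a correct alternative the paper does not pursue; it trades the one-line expectation argument for a recursion that matches Equation~(\ref{eq:fundamentalRecursion}) and would be useful if one wanted to refine $b$ node-by-node, but for this observation it is more machinery than needed.
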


Lemma \ref{lemma:FormulaBlockingLowerBound} shows one way to compute the expected number of blocking containers for one stack, and $b(n)$ is the sum of the expected number of blocking containers of each stack of $n$. Mathematically, let $b_s(n)$ be the expected number of blocking containers in Stack $s$ of $n$, we have $\displaystyle b(n) = \sum_{s=1}^S b_s(n)$.

\begin{lemma}
\label{lemma:FormulaBlockingLowerBound}
Let $n$ be a single stack configuration with $T$ tiers, and $H \geqslant 0$ containers ($H \leqslant T$). If $H = 0$, we have 
\begin{equation*}
b(n) = 0.
\end{equation*}
If $H \geqslant 1$, we denote the label of containers by $(c_i)_{i=1,\ldots,H}$, where $c_1$ is the container at the bottom and $c_H$ at the top (see Figure \ref{fig:FormulaBlockingLowerBound}), then we have:
\begin{eqnarray*}
\displaystyle b(n) = H - \sum_{h=1}^H \frac{\mathbbm{1} \left\{ \displaystyle c_h = \min_{i=1,\ldots,h} \left\{ c_i \right\} \right\}}{ \displaystyle \sum_{i=1}^h \mathbbm{1} \left\{ c_h = c_i \right\}},
\end{eqnarray*}
where $\mathbbm{1} \left\{ A \right\}$ is the indicator function of $A$.
\end{lemma}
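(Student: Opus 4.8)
The plan is to reduce the computation of $b(n)$ to a sum of per-container blocking probabilities via linearity of expectation, and then to evaluate each such probability by a symmetry (exchangeability) argument. First I would note that for a single stack, the container $c_h$ at tier $h$ is blocking exactly when at least one of the containers $c_1,\dots,c_{h-1}$ lying below it is retrieved before $c_h$; equivalently, $c_h$ is \emph{not} blocking iff $c_h$ is the first among $\{c_1,\dots,c_h\}$ to be retrieved (this is vacuously true for $h=1$, where $c_1$ is never blocking). Hence, writing $q_h$ for the probability that $c_h$ is retrieved first among $c_1,\dots,c_h$,
\[
b(n)=\sum_{h=1}^H \mathbb{P}[\,c_h\text{ blocking}\,]=\sum_{h=1}^H\bigl(1-q_h\bigr)=H-\sum_{h=1}^H q_h,
\]
and the case $H=0$ is the empty sum, giving $b(n)=0$.

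The heart of the proof is evaluating $q_h$, which I would do by cases on whether $c_h=\min_{i\le h}c_i$. If $c_h$ is not this minimum, some $c_i$ with $i<h$ has a strictly smaller label, hence lies in an earlier batch and (by Assumption $A^*_5$, batches retrieved in order) is retrieved before $c_h$ with probability $1$; thus $q_h=0$, matching the indicator in the statement. If $c_h=\min_{i\le h}c_i$, then every $c_i$ with $i\le h$ and $c_i>c_h$ belongs to a strictly later batch and is retrieved after $c_h$ almost surely, so $c_h$ is first among $c_1,\dots,c_h$ iff it is first among the sub-collection $\{\,i\le h:\ c_i=c_h\,\}$; by the labelling convention all of these containers lie in the same batch as $c_h$. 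Since under $A^*_5$ that batch is ordered by a uniform random permutation, the induced order on any fixed sub-collection of it is again uniform (exchangeability), so $c_h$ is equally likely to occupy any of the $k:=\sum_{i=1}^h \mathbbm{1}\{c_i=c_h\}$ positions, giving $q_h=1/k$. Combining the two cases, $q_h=\mathbbm{1}\{c_h=\min_{i\le h}c_i\}\big/\sum_{i=1}^h\mathbbm{1}\{c_h=c_i\}$, and substituting into the display above yields the claimed formula; summing over stacks then gives $b(n)=\sum_{s=1}^S b_s(n)$.

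The main obstacle — essentially the only step that is not bookkeeping — is the exchangeability argument: one must verify that, conditional on $c_h=\min_{i\le h}c_i$, the event ``$c_h$ precedes $c_1,\dots,c_{h-1}$'' depends only on the internal order of the label-$c_h$ containers, and that this internal order is uniform on the relevant sub-collection. This follows because $A^*_5$ permutes each batch by an independent uniform random permutation, and the restriction of a uniformly random permutation of a finite set to any fixed subset is uniform; I would either invoke exchangeability directly or isolate it as a one-line sub-claim. Minor sanity checks I would include: the $h=1$ term equals $1$ (consistent with $c_1$ never being blocking), and repeated labels are handled correctly because the denominator counts label multiplicities among $c_1,\dots,c_h$ rather than distinct labels.
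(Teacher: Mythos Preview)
Your proposal is correct and follows essentially the same approach as the paper: both use linearity of expectation to reduce to per-container blocking probabilities, then split into the two cases $c_h>\min_{i\le h}c_i$ versus $c_h=\min_{i\le h}c_i$, invoking uniformity of the batch permutation in the latter case to get the $1/k$ factor. The only cosmetic difference is that you compute the non-blocking probability $q_h$ directly while the paper computes $\mathbb{P}[c_h\text{ blocking}]$, and you are slightly more explicit about the exchangeability justification than the paper.
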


\begin{figure}[h]
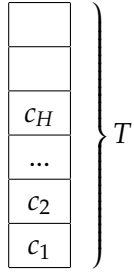

\centering
\begin{eqnarray*}
\left.
\begin{tabular}{|c|c|c|} \hline
\\ \hline
\\ \hline
$c_H$  \\ \hline
... \\ \hline
$c_2$  \\ \hline
$c_1$ \\ \hline
\end{tabular} \hspace{8pt}
\right\} T
\end{eqnarray*}
\caption{Example of a single stack configuration}
\label{fig:FormulaBlockingLowerBound}
\end{figure}

\begin{proof}
Clearly, if $H = 0$, then $b(n) = 0$. If $H \geqslant 1$, then by definition, we have
\begin{eqnarray*}
b(n) = \mathbb{E} \left[ \sum_{h=1}^H \mathbbm{1} \left\{ c_h \textit{ is a blocking container} \right\} \right] = \sum_{h=1}^H \mathbb{P} \left[ c_h \textit{ is a blocking container} \right].
\end{eqnarray*}
Let us fix $h \in \{1,\ldots,H\}$, and compute the probability that $c_h$ is blocking. We consider two cases:
\begin{itemize}
\item If $\displaystyle c_h > \min_{i=1,\ldots,h} \left\{ c_i \right\}$, then $c_h$ is almost surely blocking.
\item Otherwise $ \displaystyle c_h = \min_{i=1,\ldots,h} \left\{ c_i \right\}$, and there are $ \displaystyle \sum_{i=1}^h \mathbbm{1} \left\{ c_h = c_i \right\} - 1$ containers below $c_h$ with the same label (or batch). Since each departure sequence between containers of the same batch is equally likely, the probability that $c_h$ is blocking is equal to $\frac{ \sum_{i=1}^h \mathbbm{1} \left\{ c_h = c_i \right\} - 1}{ \sum_{i=1}^h \mathbbm{1} \left\{ c_h = c_i \right\}} = 1 - \frac{ 1}{ \sum_{i=1}^h \mathbbm{1} \left\{ c_h = c_i \right\}}$.
\end{itemize}
Consequently, we get
\begin{align*}
\mathbb{P} \left[ c_h \textit{ is a blocking container} \right] =& \ 1 \times \mathbbm{1} \left\{ \displaystyle c_h > \min_{i=1,\ldots,h} \left\{ c_i \right\} \right\} + \left( 1 - \frac{ 1}{ \sum_{i=1}^h \mathbbm{1} \left\{ c_h = c_i \right\}} \right) \times \mathbbm{1} \left\{ \displaystyle c_h = \min_{i=1,\ldots,h} \left\{ c_i \right\} \right\}
\\
=& \ 1 - \frac{\mathbbm{1} \left\{ \displaystyle c_h = \min_{i=1,\ldots,h} \left\{ c_i \right\} \right\}}{\sum_{i=1}^h \mathbbm{1} \left\{ c_h = c_i \right\}}.
\end{align*}
We sum the above expression for $h=1,\ldots,H$ to conclude the proof.
\end{proof}

Therefore, one can compute the blocking lower bound as follows: let $H^s$ be the number of containers in Stack $s$, and $\left( c_1^s, \ldots, c_{H^s}^s \right)$ be the containers in Stack $s$ listed from bottom to top, then
\begin{equation}
\label{eq:blockinglowerboundtotal}
b(n) = \sum_{\substack{s=1,\ldots,S \\ H^s \geqslant 1}} \left( H^s - \sum_{h=1}^{H^s}  \frac{\mathbbm{1} \left\{ \displaystyle c^s_h = \min_{i=1,\ldots,h} \left\{ c^s_i \right\} \right\}}{ \displaystyle \sum_{i=1}^h \mathbbm{1} \left\{ c^s_h = c^s_i \right\}} \right).
\end{equation}

\paragraph{Non-uniform case} In the case where probabilities are not uniform across retrieval orders, we still consider a similar lower bound. For each Container $c^s_h$, let $q_{c^s_h}$ be the probability that $c^s_h$ is the first container to be retrieved among the ones with the with same batch, and positioned below in its stack. Equation (\ref{eq:blockinglowerboundtotal}) extends to give:
\[
b(n) = \sum_{\substack{s=1,\ldots,S \\ H^s \geqslant 1}} \left( H^s - \sum_{h=1}^{H^s} q_{c^s_h} \mathbbm{1} \left\{ \displaystyle c^s_h = \min_{i=1,\ldots,h} \left\{ c^s_i \right\} \right\} \right).
\]

\subsubsection{Look-ahead lower bounds}
Note that the blocking lower bound $b$ is only taking into account the current configuration. However, some relocations lead necessarily to an additional relocation. We refer to such relocations as ``bad.'' Formally, let $s$ be a stack of a configuration, and $\min (s)$ be the smallest label of a container in $s$. Recall that, if $s$ is empty, we set $\min (s) = C + 1$. We say that the relocation of Container $c$ from Stack $s$ is ``bad'' if $ \displaystyle c > \max_{s'=1,\ldots,S \ , \ s' \neq s} \left\{ \min (s') \right\}$. We propose to construct a lower bound that anticipates ``bad'' relocations.

The basic idea is based on a similar one used by \cite{Zhu} for the CRP. We consider the $1^{st}$ look-ahead lower bound denoted by $b_1(n)$. By definition, we take $b_1(n) = b(n) + d_1(n)$, where $b(n)$ is the blocking lower bound, and $d_1(n)$ is the expected number of unavoidable ``bad'' relocations while performing the first removal.
We compute the term $d_1(n)$ by considering all realizations of the first target container. For each realization, we compute the number of unavoidable ``bad'' relocations, and average them.
Formally, for a given configuration $n$, consider $U_n$ the set of potential next target container in $n$ (which can be a singleton if it is known already), \textit{i.e.}, $\displaystyle U_n = \left\{ c \left| c = \min_{s=1,\ldots,S} \left( min(s) \right) \right. \right\}$. Based on the definition of a bad relocation, we compute the number of unavoidable ``bad'' moves for each $u \in U_n$ denoted by $\beta(n,u)$, and we take:
\begin{eqnarray*}
d_{1}(n) = \frac{1}{|U_n|} \sum_{u \in U_n} \beta(n,u),
\end{eqnarray*}
or $d_{1}(n) = \sum_{u \in U_n} \ p_{n,u} \beta(n,u)$, where $p_{n,u}$ is the probability that $u$ is the next target container in $n$ if the probabilities considered are not uniform (which can be computed using $\left(p_{n_i}\right)_{n_i \in \Omega_n}$ if $n$ is a chance node).

\begin{figure}[h]
\centering
\begin{tabular}{|c|c|c|} \hline
  &   & 4 \\ \hline
  &  & 3  \\ \hline
 1 & 3 & 1 \\ \hline
\end{tabular}
\caption{Example for look-ahead lower bounds}
\label{fig:exampled1}
\end{figure}

For example, in Figure \ref{fig:exampled1}, the presented configuration denoted by $n$ is such that $b(n) = 2$. Now consider a container $u \in U_n$: if $u$ is the container labeled $1$ in Stack $1$, then there is no blocking container, so $\beta(n, u) = 0$; if $u$ is the other container labeled $1$, the relocation of the container labeled $4$ from Stack $3$ is necessarily a bad relocation, since $min(1) = 1 < 4$ and $min(2) = 3 < 4$, but it is not the case for the blocking container labeled by $3$, hence $\beta(n,u) = 1$. Therefore, $d_1(1) = 0.5 (0+1) = 0.5$, and $b_1(n) = 2 + 0.5 = 2.5$, hence giving a lower bound closer to the optimal solution than $b(n)$.
Note that, if $n$ has an empty stack, then $\beta(n,u) = 0$ for all $u \in U_n$, and hence $d_1(n) = 0$. 
\\

We can refine this idea, by trying to find unavoidable ``bad'' relocations for the second removal. In this case, the configuration depends on the first removal, and the decisions that have been made accordingly. For the sake of clarity, consider that the first target container has been revealed, and denote it $u_1$. After retrieving $u_1$, only containers blocking $u_1$ have changed from their initial position. It can be very challenging to detect future unavoidable ``bad'' moves for these containers. In order to bypass this issue, we consider that all containers blocking $u_1$ are also removed, resulting in a configuration without $u_1$ and its blocking containers. Given this new configuration denoted by $n(u_1)$, we can compute the expected number of unavoidable bad moves $d_1(n(u_1))$. Since $u_1$ is actually random, we have to consider each scenario with their associated probability, and compute a new configuration where blocking containers are retrieved with the target container. We denote the result $d_2(n)$, and it is a lower bound on the expected number of unavoidable bad relocations for the first two removals starting at $n$. Finally, our $2^{nd}$ look-ahead lower bound is given by $b_2(n) = b(n) + d_2(n)$. 

\begin{algorithm}[h]
\caption{Lower bound on the number of unavoidable bad relocations for the k first removals}
\label{algo:lowerbound}
\begin{algorithmic}[1]
\Procedure{[$d_k(n)$] = UnavoidableBadReloc $(n, \ k)$}{}
 \If{$k=0$ or $n$ has an empty stack or $n$ is empty} $d_k(n) = 0$
 \Else\ let $U_n = \left\{ \textit{containers with minimum label in n} \right\}$
 	\If{$k=1$} $d_k(n) = \frac{1}{|U_n|} \sum_{u \in U_n} \beta(n,u)$
 	\Else
 		\For{$u \in U_n$}
 			\State\ Let $n(u)$ be the configuration $n$ without $u$ and all containers blocking $u$
 			\State\ Compute recursively $d_{k-1} \left( n(u) \right)$ = UnavoidableBadReloc $\left( n(u), \ k-1 \right)$
 		\EndFor\
 		\State\ Compute $d_k(n) = \frac{1}{| U_n |} \sum_{u \in U_n} \beta(n,u) + d_{k-1} \left( n(u) \right)$
 	\EndIf\
 \EndIf\
\EndProcedure\
\end{algorithmic}
\end{algorithm}

This idea can easily be generalized for $k \geqslant 2$ by induction with $b_k(n) = b(n) + d_k(n)$. Here $k$ is the number of removals that the lower bound considers to compute the expected number of unavoidable bad relocations (see pseudocode of Algorithm \ref{algo:lowerbound}). We mention that we only use the $1^{st}$ and $2^{nd}$ look ahead lower bounds in our computational experiments. However, note that, as $k$ grows, the computational complexity clearly increases, whereas experiments reveal that the marginal increase of the lower bound, \textit{i.e.}, $b_{k+1}(n) - b_{k}(n) \geqslant 0$, decreases.

\section{PBFS, a new optimal algorithm for the SCRP}
\label{sec:PBFS}

Building upon lower bounds introduced in the previous section, this section introduces, studies and proves the optimality of one of the main contributions of this paper, the $PBFS$ Algorithm.

\subsection{PBFS algorithm}

We start by giving the pseudocode of our algorithm, and we derive its optimality in Lemmas \ref{lemma:lowerBoundPrune} and \ref{lem:optpruning}. PBFS takes two inputs, the configuration $n$ for which we aim to compute $f(n)$, and a valid lower bound $l$. This algorithm uses a combination of four features to return $f(n)$. The first one is the BFS exploration of the tree based on a given lower bound $l$. We first compute $f$ for the ``most promising nodes,'' because nodes with small lower bounds are more likely to result in small $f$. The second technique is stopping to compute $f$ recursively after level $\lambda^* = \max\{S,C_W\}$, by calculating it either using $b$, or the $A^*$ algorithm defined later. The third one is pruning with a lower bound revealing the sub-optimality of some nodes without actually computing $f$. Finally, it also uses the abstraction technique described previously.

\begin{algorithm}[h!]
\caption{\textit{PBFS} Algorithm}
\label{algo:PBFS}
\begin{algorithmic}[1]
\Procedure{[$f(n)$] = $PBFS \left(n, \ l \right)$}{}
 \If{$\lambda_n \leqslant S$ ($n$ has less than $S$ containers)} $f(n) = b(n)$
 \Else
 	\If{ $n$ is a \emph{chance node}} start with $\Psi^{PBFS}_n = \{ \}$
 		\For{ $n_i \in \Omega_n$} $n_i \gets \textproc{Abstract}(n_i)$
 			\If{there exists $m = n_i$ already in $\Psi^{PBFS}_n$} $p^n_m \gets p^n_m + \frac{1}{| \Omega_n |}$
 			\ElsIf{ there exists $m = n_i$ already in the decision tree} add $m$ to $\Psi^{PBFS}_n$ and $p^n_m = \frac{1}{| \Omega_n |}$
 			\Else{ add $n_i$ to $\Psi^{PBFS}_n$, $p^n_{n_i} = \frac{1}{| \Omega_n |}$ and compute $f(n_i) = PBFS \left(n_i, \ l \right)$}
 			\EndIf\
 		\EndFor\
 		\State\ Compute $\displaystyle f(n) = \sum_{n_i \in \Psi^{PBFS}_n} p^n_{n_i} f(n_i)$
 	 \Else{ $n$ is a \emph{decision node}}
 	 	\If{$\lambda_n = C_W$ (the full retrieval order is known)} $f(n) = A^*(n)$
 	 	\Else\ construct $\Delta_n$ by considering all feasible sets of decisions to deliver the target container
 	 		\State\ Compute $l(n_i)$ for each $n_i \in \Delta_n$
 	 		\State\ Sort $\left( n_{(1)},n_{(2)},\ldots,n_{ \left( | \Delta_n | \right) } \right)$ in non-decreasing order of $l(.)$
 	 		\State\ Compute $f(n_{(1)}) = PBFS \left(n_{(1)}, \ l \right)$
 	 		\State\ Start with $\Gamma^{PBFS}_n = \{ n_{(1)} \}$ and $k = 2$
 	 		\While{$k \leqslant | \Delta_n |$ and $\displaystyle l(n_{(k)}) < \min_{j=1,\ldots,k-1} \left\{ f(n_{(j)}) \right\}$} $n_{(k)} \gets \textproc{Abstract}(n_{(k)})$
 	 			\If{ there exists $m = n_{(k)}$ already in the decision tree} add $m$ to $\Gamma^{PBFS}_n$
 	 			\Else{ add $n_{(k)}$ to $\Gamma^{PBFS}_n$ and compute $f(n_{(k)}) = PBFS \left(n_{(k)}, \ l \right)$}
 	 			\EndIf\
 	 			\State\ Update $k = k + 1$
 	 		\EndWhile\
 	 		\State\ $\displaystyle f(n) = r(n) + \min_{n_i \in \Gamma^{PBFS}_n} \left\{ f(n_i) \right\}$
 	 		\EndIf\
 	 \EndIf\
 \EndIf\
\EndProcedure\
\end{algorithmic}
\end{algorithm}

\subsubsection{Decreasing the size of decision tree by increasing $\boldsymbol{\lambda^*}$ to $\boldsymbol{\max\{S,C_W\}}$}

\paragraph{If $\boldsymbol{C_W \leqslant S}$, then compute $\mathbf{f(n)}$ using $\mathbf{b(n)}$}
Recall that, for every relocation, heuristic L chooses the stack with the least number of containers, breaking ties arbitrarily by choosing the leftmost one. Note that $L$ always provides a valid upper bound for the SCRP. So if we denote the resulting expected number of relocations to empty configuration $n$ using $L$ by $f_L(n)$, then we have $f_L(n) \geqslant f(n)$.

\begin{lemma}
\label{lemma:lowerBoundPrune}
Let $n$ be a configuration with $S$ stacks, $T$ tiers, and $C$ containers such that $C \leqslant S$, then we have
\begin{eqnarray*}
f_L(n) = b(n) = f(n)
\end{eqnarray*}
\end{lemma}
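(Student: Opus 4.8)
The plan is to sandwich the three quantities. Two of the inequalities come for free: $L$ is a feasible policy, so $f_L(n)\geq f(n)$, and $f(n)\geq b(n)$ is precisely the blocking‑bound observation stated above. Hence it suffices to prove $f_L(n)\leq b(n)$, and I will in fact establish the pathwise statement that, for every realized retrieval order, the number of relocations performed by $L$ started at $n$ is at most (indeed exactly) the number of containers blocking in $n$ under that order; averaging over orders then yields $f_L(n)=b(n)$, and therefore $b(n)=f(n)=f_L(n)$.

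The engine of the argument is a simple invariant: \emph{whenever $L$ performs a relocation, some stack is empty.} When $L$ relocates a container $c$, it does so from the stack $s$ currently holding the target $t$ (Assumption $A^*_3$), and at that moment $s$ contains $t$, $c$, and possibly further blockers still above $t$, so the height of $s$ is at least $2$; this remains true as the blockers above a single target are peeled off one at a time, down to the last one, when $s$ holds exactly $t$ and that blocker. Since the model is static, the number of containers present is at most $C\leq S$, so at most $S-2$ of them lie outside $s$; these occupy at most $S-2$ of the $S-1$ stacks other than $s$, leaving at least one empty. As an empty stack (which is not full, since a relocation occurs only when $T\geq 2$) has strictly fewer containers than any nonempty stack, the leveling rule relocates $c$ onto an empty stack.

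From this invariant I would conclude two things. First, $L$ never stacks a container on top of a previously relocated one: such a container sits alone at height $1$, while the rule always has an empty stack to prefer instead; hence each container is relocated at most once and, when relocated, is still in its original position in $n$. Second, a container $c$ is relocated only when it blocks the current target $t$, i.e.\ $t$ lies below $c$ and is retrieved before $c$; since $c$ is still in its starting stack, this exhibits $c$ as a blocking container of $n$ for the given order. Thus the relocated containers are distinct blocking containers of $n$, giving ``number of relocations $\leq$ number of blocking containers''; the reverse inequality is the elementary fact that every blocking container must be moved at least once to uncover the container beneath it, so the two counts agree for each order, and taking expectations finishes the proof. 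The only delicate point is the height‑$\geq 2$ bookkeeping underpinning the invariant (together with the trivial degenerate cases $C=0$ or $T\leq 1$, in which no relocation ever occurs); everything else is routine.
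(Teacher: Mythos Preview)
Your proof is correct and follows essentially the same approach as the paper's: both fix a retrieval order, use the pigeonhole observation that with at most $S$ containers and the target's stack holding at least two of them there is always an empty destination stack, conclude that $L$ places every relocated container alone in an empty stack so it is never relocated again, and then sandwich $b(n)\le f(n)\le f_L(n)\le b(n)$. Your write-up is simply a more explicit version of the paper's argument, spelling out the height-$\ge 2$ bookkeeping and the injectivity of ``relocated container $\mapsto$ blocking container'' that the paper leaves implicit.
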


\begin{proof}
Consider a retrieval order of containers from $n$ that has a non-zero probability of occurring. If there are no blocking containers, then the lemma clearly holds. Otherwise, let $c$ be one of the blocking containers for this retrieval order, and consider the first removal for which $c$ has to be relocated. For this removal, there are at most $S$ containers in the configuration, hence there exists at least one empty stack to relocate $c$. Since heuristic $L$ chooses always empty columns if one exists, $L$ would move $c$ to one of the existing empty stacks. Note that in this case, $c$ will never be blocking again, and hence never be relocated again. This observation holds for any blocking containers, thus $L$ relocates each blocking container at most once.

Since this fact holds for any retrieval orders with non-zero probability, by taking expectation on the retrieval order, we have $f_L(n) \leqslant b(n)$, thus $f_L(n) \leqslant b(n) \leqslant f(n) \leqslant f_L(n)$, which concludes the proof.
\end{proof}

Lemma \ref{lemma:lowerBoundPrune} states that for configurations with $S$ containers or less, the $L$ heuristic is optimal for the SCRP. This implies that, for nodes at level $S$, we have access to the cost-to-go function using $b(n)$, as well as an optimal solution (provided by heuristic $L$). Hence PBFS can stop branching at $\lambda^* = S$ (line 2 of Algorithm \ref{algo:PBFS}).

\paragraph{If $\boldsymbol{C_W > S}$, then compute $\mathbf{f(n)}$ using the $\boldsymbol{A^*}$ algorithm}
If $n$ is a decision node at level $C_W$, the full order of retrieval is known, and computing $f(n)$ reduces to solving a classical CRP, so we can leverage the existence of efficient solutions to the classical CRP such as the $A^*$ algorithm, and take $\lambda^* = C_W$. Throughout the rest of the paper, $A^*$ refers to the improved version of this algorithm presented in \cite{Borjian15} and we denote the optimal number of relocations obtained by $A^*(n)$ (line 11 of Algorithm \ref{algo:PBFS}).
\\

Combining with the two previous observations, we can take $\lambda^* = \max\{S,C_W\}$.

\subsubsection{Decreasing the size of decision tree by pruning using lower bounds}
We would also like to reduce the size of the tree before level $\lambda^*$. For a decision node $n$, $PBFS$ considers only a subset $\Gamma^{PBFS}_{n}$ of all the offspring $\Delta_n$ (line 21 of Algorithm \ref{algo:PBFS}). Our goal is to set $\Gamma^{PBFS}_{n}$ in order to still guarantee optimality.

\begin{figure}[h]
\begin{center}
\resizebox{13cm}{5.5cm}{
\begin{tikzpicture} [node distance=2cm]
\node (dec0) [decision] {$n$};
\node (dec3) [decisionGood, below of=dec0, xshift=-1.2cm] {$n_{(\left|\Gamma_n^{PBFS}\right|)}$};
\node (dec2)[left of=dec3] {$\ldots$};
\node (dec1) [decisionGood, left of=dec2] {$n_{(1)}$};
\node (dec4) [decisionCross, below of=dec0, xshift=1.2cm] {$n_{\left(\left|\Gamma_n^{PBFS}\right|+1\right)}$};
\node (dec5)[right of=dec4] {$\ldots$};
\node (dec6) [decisionCross, right of=dec5] {$n_{\left(\left|\Delta_n\right|\right)}$};
\node (dec30) [below of=dec3,yshift=1cm] {$l\left(n_{\left(\left|\Gamma_n^{PBFS}\right|\right)}\right)$};
\node (dec20) [below of=dec2,yshift=1cm] {$\leqslant \ldots \leqslant$};
\node (dec10) [below of=dec1,yshift=1cm] {$l\left(n_{(1)}\right)$};
\node (dec40) [below of=dec4,yshift=1cm] {$l\left(n_{\left(\left|\Gamma_n^{PBFS}\right|+1\right)}\right)$};
\node (dec50) [below of=dec5,yshift=1cm] {$\leqslant \ldots \leqslant$};
\node (dec60) [below of=dec6,yshift=1cm] {$l\left(n_{\left(\left|\Delta_n\right|\right)}\right)$};
\node (dec300) [below of=dec30,yshift=1cm] {$f\left(n_{\left(\left|\Gamma_n^{PBFS}\right|\right)}\right)$};
\node (dec200)[below of=dec20,yshift=1cm] {$\ldots$};
\node (dec100) [below of=dec10,yshift=1cm] {$f\left(n_{(1)}\right)$};
\node (dec400) [below of=dec40,yshift=1cm] {$\times$};
\node (dec500)[below of=dec50,yshift=1cm] {$\ldots$};
\node (dec600) [below of=dec60,yshift=1cm] {$\times$};
\draw [arrow] (dec0) -- (dec1);
\draw [arrow] (dec0) -- (dec3);
\draw [arrow] (dec0) -- (dec4);
\draw [arrow] (dec0) -- (dec6);
\end{tikzpicture}}
\caption{Illustration of the pruning rule. First, offspring are ordered by non-decreasing lower bounds. Then we start computing the objective function starting at $n_{(1)}$. We stop computing the objective functions once the pruning rule is reached. In the figure above, green nodes are nodes in $\Gamma_n^{PBFS}$ \textit{i.e.} $f(.)$ has been computed. Orange nodes are nodes in $\Delta_n \setminus \Gamma_n^{PBFS}$ \textit{i.e.} $f(.)$ does not need to be computed which is represented here by $\times$.}
\label{fig:pruning}
\end{center}
\end{figure}
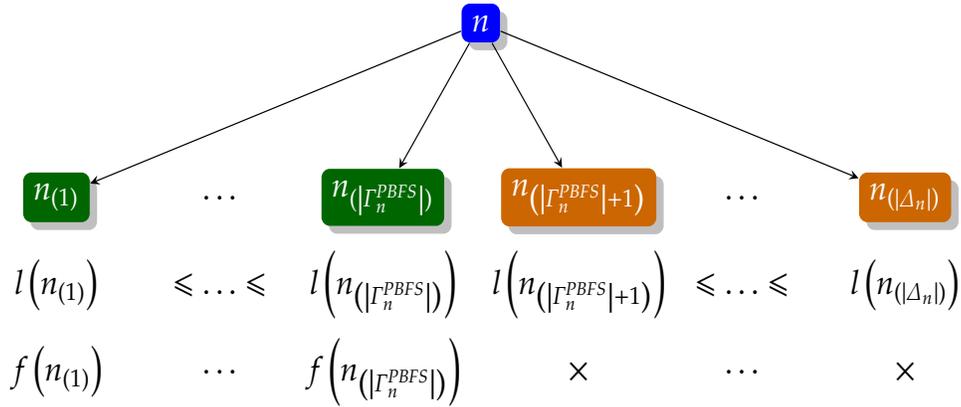

First, PBFS generates all nodes $n_i \in \Delta_n$ by considering all feasible sets of decisions to deliver the target container in $n$ (line 12 of Algorithm \ref{algo:PBFS}), and for each of them, compute a lower bound $l(n_i)$, where $l$ is the input lower bound (line 13 of Algorithm \ref{algo:PBFS}). Let $\left( n_{(1)},n_{(2)},\ldots,n_{ \left( | \Omega_n | \right) } \right)$ be the list of offspring of $n$ sorted by non-decreasing lower bound (line 14 of Algorithm \ref{algo:PBFS}). The algorithm considers first $n_{(1)}$, adds it to $\Gamma^{PBFS}_n$ and computes $f(n_{(1)})$ recursively (line 15-16 of Algorithm \ref{algo:PBFS}). Then for $k \geqslant 2$, we consider $n_{(k)}$'s sequentially, and check if $\displaystyle l(n_{(k)}) < \min_{j=1,\ldots,k-1} \left\{f(n_{(j)}) \right\}$ (line 17 of Algorithm \ref{algo:PBFS}). If so, add $n_{(k)}$ to $\Gamma^{PBFS}_n$ and compute $f(n_{(k)})$ recursively. If not, we stop branching on all nodes $n_{(k)},\ldots,n_{ \left( | \Omega_n | \right) }$. An illustration of the pruning rule is shown in Figure \ref{fig:pruning} and the next lemma shows the optimality of this rule.

\begin{lemma}
\label{lem:optpruning}
	Let $n$ be a decision node in the decision tree, and $\Gamma^{PBFS}_{n}$ be the subset of nodes considered for this node in Algorithm \ref{algo:PBFS}, and constructed as aforementioned, then we have
\begin{eqnarray*}
	\min_{m_i \in \Gamma^{PBFS}_{n}} \left\{f(m_i) \right\} = \min_{n_i \in \Delta_{n}} \left\{f(n_i) \right\}.
\end{eqnarray*}
\end{lemma}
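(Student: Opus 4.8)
The plan is to show the two inequalities $\min_{m_i \in \Gamma^{PBFS}_n} f(m_i) \leqslant \min_{n_i \in \Delta_n} f(n_i)$ and $\min_{m_i \in \Gamma^{PBFS}_n} f(m_i) \geqslant \min_{n_i \in \Delta_n} f(n_i)$. The first is immediate because $\Gamma^{PBFS}_n \subseteq \Delta_n$, so the minimum over the smaller set is at least as large; that is, we actually only need $\geqslant$ from the trivial direction and $\leqslant$ is the content. So the real work is: every offspring that was \emph{discarded} (i.e. lies in $\Delta_n \setminus \Gamma^{PBFS}_n$) has $f$-value no smaller than $\min_{m_i \in \Gamma^{PBFS}_n} f(m_i)$, hence discarding it cannot change the minimum.

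First I would fix notation matching the algorithm: let $(n_{(1)}, \ldots, n_{(|\Delta_n|)})$ be the offspring sorted so that $l(n_{(1)}) \leqslant l(n_{(2)}) \leqslant \cdots \leqslant l(n_{(|\Delta_n|)})$, and let $k^*$ be the index at which the while-loop terminates, so that $\Gamma^{PBFS}_n = \{ n_{(1)}, \ldots, n_{(k^*)} \}$ (up to abstraction identifications, which do not affect $f$-values since abstracted configurations have identical cost-to-go by Assumption $A^*_4$). The loop stopped because either $k^* = |\Delta_n|$ — in which case $\Gamma^{PBFS}_n = \Delta_n$ and there is nothing to prove — or because the test failed at $k = k^*+1$, i.e.
\[
l\!\left(n_{(k^*+1)}\right) \;\geqslant\; \min_{j=1,\ldots,k^*} f\!\left(n_{(j)}\right) \;=\; \min_{m_i \in \Gamma^{PBFS}_n} f(m_i).
\]
Now take any discarded node $n_{(k)}$ with $k > k^*$. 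Since the list is sorted by $l(\cdot)$ and $k \geqslant k^*+1$, we have $l(n_{(k)}) \geqslant l(n_{(k^*+1)})$. Combining this with the stopping inequality above and with the defining property that $l$ is a \emph{valid lower bound}, so $f(n_{(k)}) \geqslant l(n_{(k)})$, we get
\[
f\!\left(n_{(k)}\right) \;\geqslant\; l\!\left(n_{(k)}\right) \;\geqslant\; l\!\left(n_{(k^*+1)}\right) \;\geqslant\; \min_{m_i \in \Gamma^{PBFS}_n} f(m_i).
\]
Hence no discarded node attains a value below the minimum already found among kept nodes, so $\min_{n_i \in \Delta_n} f(n_i) = \min_{m_i \in \Gamma^{PBFS}_n} f(m_i)$, which is the claim.

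The only subtle point — and the one I would be most careful about — is the interaction with abstraction and with the recursion: the lemma implicitly assumes that the recursive calls $f(n_{(j)}) = PBFS(n_{(j)}, l)$ actually return the true cost-to-go, so the argument is really an induction on the level $\lambda_n$. At the base level $\lambda_n \leqslant \max\{S, C_W\}$ correctness of $f(n)$ is given by Lemma~\ref{lemma:lowerBoundPrune} (for $\lambda_n \leqslant S$) and by exactness of $A^*$ on the classical CRP (for $\lambda_n = C_W$); for larger levels, one assumes inductively that every recursive call returns the exact $f$-value, applies the pruning argument above to conclude $\min_{m_i \in \Gamma^{PBFS}_n} f(m_i) = \min_{n_i \in \Delta_n} f(n_i)$, and then $f(n) = r(n) + \min_{m_i \in \Gamma^{PBFS}_n} f(m_i)$ matches the fundamental recursion~\eqref{eq:fundamentalRecursion}. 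The chance-node case needs no pruning (all offspring, suitably abstracted, are retained with the correct weights $p^n_{n_i}$), so it carries through directly. I would state the lemma's proof for a single decision node taking the recursive values as correct, and let the surrounding optimality theorem handle the induction.
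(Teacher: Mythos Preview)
Your proof is correct and follows essentially the same route as the paper: sort the offspring by $l(\cdot)$, split into the trivial case $\Gamma^{PBFS}_n=\Delta_n$ and the case where the loop halts early, and then chain $f(n_{(k)}) \geqslant l(n_{(k)}) \geqslant l(n_{(k^*+1)}) \geqslant \min_{j\leqslant k^*} f(n_{(j)})$ for every discarded $k>k^*$. Your added remarks on the trivial $\supseteq$-direction and on the need for an induction on $\lambda_n$ (so that the recursive calls indeed return the true $f$) are correct observations that the paper leaves implicit, but the core pruning argument is identical.
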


\begin{proof}
Let $\left( n_{(1)},n_{(2)},\ldots,n_{ \left( | \Omega_n | \right) } \right)$ be the list of offspring of $n$, sorted by non-decreasing lower bounds. We consider two cases. 
\begin{itemize}
\item If $\Gamma^{PBFS}_n = \Delta_n$, the statement clearly holds.
\item Otherwise, there exists $k \leqslant | \Delta_n |$ such that $\displaystyle l(n_{(k)}) \geqslant \min_{j=1,\ldots,k-1} \left\{ f(n_{(j)}) \right\}$, and $\Gamma^{PBFS}_n = \left\{ n_{(1)}, \ldots, n_{(k-1)} \right\}$. Note that we have $\forall k' \geqslant k$, $f(n_{(k')}) \geqslant \displaystyle l(n_{(k')}) \geqslant l(n_{(k)}) \geqslant \min_{j=1,\ldots,k-1} \left\{f(n_{(j)}) \right\}$. Hence $\displaystyle \min_{n_i \in \Delta_{n}} \left\{f(n_i) \right\} = \min_{j=1,\ldots,k-1} \left\{f(n_{(j)}) \right\} = \min_{m_i \in \Gamma^{PBFS}_{n}} \left\{f(m_i) \right\}$.
\end{itemize}

\end{proof}

We claim that increasing $\lambda^*$ to $\max\{S,C_W\}$ together with pruning in a Best-First-Search scheme, dramatically help in the efficiency of PBFS while keeping the guarantee of optimality. In the case of small batches, the PBFS algorithm appears to be efficient (see Section \ref{sec::ComputationExperiments}). However, this algorithm faces the issue that $|\Omega_n| = C_w !$ if $n$ is a chance node. So if $\mathbf{C_w}$ \textbf{is large, typically} $\mathbf{C_w \geqslant 4}$, the number of nodes to consider gets too large. We tackle this issue by considering a near-optimal algorithm in the next section.

As a final remark, batches should be as small as possible if information is only at stake. Indeed, smaller batches correspond to an efficient information system since more information is known about the retrieval order. But the size of batches is restricted by two intrinsic constraints:
\begin{enumerate}
\item \textbf{Batches should be at least larger than a certain size.} Indeed, a terminal offers time slots for trucks to register, and these slots cannot be too small (in terms of time), as trucks would most certainly not arrive during their appointed slot due to traffic or other uncertain factors. Therefore, given the minimum time of a slot, the terminal will allow at least a certain number of trucks to register for each slot, \textit{i.e.}, the minimum batch size.
\item \textbf{Batches cannot be too large} in order to have the batch model applicable, since in this model, the appointment time windows are supposed to be the same as or shorter than the target waiting time. As the target waiting time is limited, there is a limited number of containers that can be retrieved in a certain batch.
\end{enumerate}
This leads us to consider an alternative to $PBFS$ (see Section \ref{sec:PBFSA}) in the case of larger batches.

\section{PBFSA, near-optimal algorithm with guarantees for large batches}
\label{sec:PBFSA}

\begin{algorithm}[h!]
\caption{\textit{PBFSA} Algorithm}
\label{algo:PBFSA}
\begin{algorithmic}[1]
\Procedure{[$\tilde{f}(n)$] = $PBFSA \left(n, \ l , \ \epsilon \right)$}{}
 \If{$\lambda_n \leqslant S$} $\tilde{f}(n) = b(n)$
 \Else
 	\If{ $n$ is a \emph{chance node}} start with $\Psi^{PBFSA}_n = \{ \}$. Let $w_{min}$ be such that $\lambda_n = C - K_{w_{min}} + 1$
 	  \State\ Compute $ \delta_n = \min \left\{ w \in \left\{w_{min},\ldots,W \right\} \left| \sum_{u=w_{min}}^w C_{u} \geqslant \lambda_n - \lambda^* \right. \right\}$ to get $\displaystyle \epsilon_n = \frac{\epsilon}{\delta_n}$
 	  \State\ Compute $f_{max}(n)$ and $f_{min}(n)$ to get $\displaystyle N_n(\epsilon_n) = \frac{\pi \left( f_{max}(n) - f_{min}(n) \right)^2}{2\epsilon_n^2}$
 	  \If{$N_n(\epsilon_n) \leqslant C_{w_{min}}!$}
 	    \For{ $i = 1,\ldots,N_n(\epsilon_n)$} 
 		  \State\ Sample a random permutation, get corresponding $n_i \in \Omega_n$ and $n_i \gets \textproc{Abstract}(n_i)$
 		  \If{ there is $m = n_i$ already in $\Psi^{PBFSA}_n$} $\ p^n_{m} \gets  p^n_{m} + \frac{1}{N_n(\epsilon_n)}$
 		  \ElsIf{ there is $m = n_i$ already in the decision tree} add $m$ to $\Psi^{PBFSA}_n$, $p^n_m = \frac{1}{N_n(\epsilon_n)}$
 		  \Else{ add $n_i$ to $\Psi^{PBFSA}_n$, $p^n_{n_i} = \frac{1}{N_n(\epsilon_n)}$ and compute $\tilde{f}(n_i) = PBFSA \left(n_i, \ l, \ \epsilon - \epsilon_n \right)$}
 		  \EndIf\
 		\EndFor\
 	  \Else
 	    \For{ $n_i \in \Omega_n$} $n_i \gets \textproc{Abstract}(n_i)$
 			\If{there exists $m = n_i$ already in $\Psi^{PBFSA}_n$} $p^n_m \gets p^n_m + \frac{1}{| \Omega_n |}$
 			\ElsIf{there exists $m = n_i$ already in decision tree} add $m$ to $\Psi^{PBFSA}_n$ and $p^n_m = \frac{1}{| \Omega_n |}$
 			\Else{ add $n_i$ to $\Psi^{PBFSA}_n$, $p^n_{n_i} = \frac{1}{| \Omega_n |}$ and compute $\tilde{f}(n_i) = PBFSA \left(n_i, \ l, \ \epsilon - \epsilon_n \right)$}
 			\EndIf\
 		\EndFor\
 	  \EndIf\
 	  \State\ Compute $\displaystyle \tilde{f}(n) = \sum_{n_i \in \Psi^{PBFSA}_n} p^n_{n_i} \tilde{f}(n_i)$
    \Else{ $n$ is a \emph{decision node}}
    	  \If{$\lambda_n \leqslant C_W$} $\tilde{f}(n) = A^*(n)$
 	 	\Else\ Construct $\Delta_n$ by considering all feasible sets of decisions to deliver the target container
 	 		\State\ Compute $l(n_i)$ for each $n_i \in \Delta_n$
 	 		\State\ Sort $\left( n_{(1)},n_{(2)},\ldots,n_{ \left( | \Delta_n | \right) } \right)$ in non-decreasing order of $l(.)$
 	 		\State\ Compute $\tilde{f}(n_{(1)}) = PBFSA \left(n_{(1)}, \ l, \ \epsilon \right)$
 	 		\State\ Start with $\Gamma^{PBFSA}_n = \{ n_{(1)} \}$ and $k = 2$
 	 		\While{$k \leqslant | \Delta_n |$ and $\displaystyle l(n_{(k)}) < \min_{j=1,\ldots,k-1} \left\{ \tilde{f}(n_{(j)}) \right\}$} $n_{(k)} \gets \textproc{Abstract}(n_{(k)})$
 	 			\If{ there exists $m = n_{(k)}$ already in the decision tree} add $m$ to $\Gamma^{PBFSA}_n$
 	 			\Else{ add $n_{(k)}$ to $\Gamma^{PBFSA}_n$ and compute $\tilde{f}(n_{(k)}) = PBFSA \left(n_{(k)}, \ l, \ \epsilon \right)$}
 	 			\EndIf\
 	 			\State\ Update $k = k + 1$
 	 		\EndWhile\
 	 		\State\ $\displaystyle \tilde{f}(n) = r(n) + \min_{n_i \in \Gamma^{PBFSA}_n} \left\{ \tilde{f}(n_i) \right\}$
 	 	\EndIf\
    \EndIf\
  \EndIf\
\EndProcedure\
\end{algorithmic}
\end{algorithm}

Building upon $PBFS$ introduced in the previous section, this section describes the randomized algorithm $PBFSA$ and shows some theoretical guarantees on expectation. This new algorithm is identical to $PBFS$ except when computing the value function of a chance node (lines 4 to 17 of Algorithm \ref{algo:PBFSA}). In order to decrease the number of decision offspring to consider for each chance node, we sample a certain number of \textit{i.i.d.} permutations, and only consider the decision nodes associated with these permutations as illustrated in Figure \ref{fig:sampling}. The facts that the objective function is bounded and that our problem has a finite number of sampling stages allow us to independently sample nodes in order to approximate the objective function. Using concentration inequalities, we can chose the number of samples needed to control the approximation error.

\begin{figure}[h]
\begin{center}
\resizebox{15cm}{6.5cm}{
\begin{tikzpicture} [node distance=2cm]
\node (dec01) [above of=dec0,yshift=-0.5cm] {\begin{tabular}{|c|c|c|} \hline
   &  & 1 \\ \hline
   & 5 & 4 \\ \hline
 1 & 5 & 1 \\ \hline
\end{tabular}};
\node (dec0) [chance] {$n$};
\node (dec4)[below of=dec0] {$\ldots$};
\node (dec3) [decisionGood, left of=dec4] {$n_{N_n\left( \epsilon_n \right)}$};
\node (dec30) [below of=dec3,yshift=1cm] {$f\left( n_{N_n\left( \epsilon_n \right)} \right)$};
\node (dec2)[left of=dec3] {$\ldots$};
\node (dec1) [decisionGood, left of=dec2] {$n_{1}$};
\node (dec10) [below of=dec1,yshift=1cm] {$f\left( n_{1} \right)$};
\node (dec5) [decisionCross, right of=dec4] {$n_{i}$};
\node (dec50) [below of=dec5,yshift=1cm] {$\times$};
\node (dec6)[right of=dec5] {$\ldots$};
\node (dec7)[decisionCross,right of=dec6] {$n_{\left|\Omega_n\right|}$};
\node (dec70) [below of=dec7,yshift=1cm] {$\times$};
\draw [arrow] (dec0) -- (dec1);
\draw [arrow] (dec0) -- (dec3);
\draw [arrow] (dec0) -- (dec5);
\draw [arrow] (dec0) -- (dec7);
\end{tikzpicture}}
\caption{Illustration of the sampling rule. In this figure, the smallest batch is batch 1 therefore $w_{min} = 1$, and there are 6 containers thus $\lambda_n = 6$. These values allow us to compute the number of samples required $N_n(\epsilon_n)$. If $N_n(\epsilon_n)$ in less than the total number of offspring $\left|\Omega_n\right| = C_{w_{min}}! = 3!$, then we only compute $f(.)$ for sampled nodes. $\Psi_n^{PBFSA}$ represents the subset of sampled nodes collored green and for which $f(.)$ needs to be computed. Note that $\left|\Psi_n^{PBFSA}\right| = N_n(\epsilon_n)$. Orange nodes are nodes in $\Omega_n \setminus \Psi_n^{PBFSA}$ \textit{i.e.} there were not sampled and $f(.)$ does not need to be computed which is represented here by $\times$. Finally, the approximate value of $f(n)$ is the average of the objective values over all sampled nodes.}
\label{fig:sampling}
\end{center}
\end{figure}
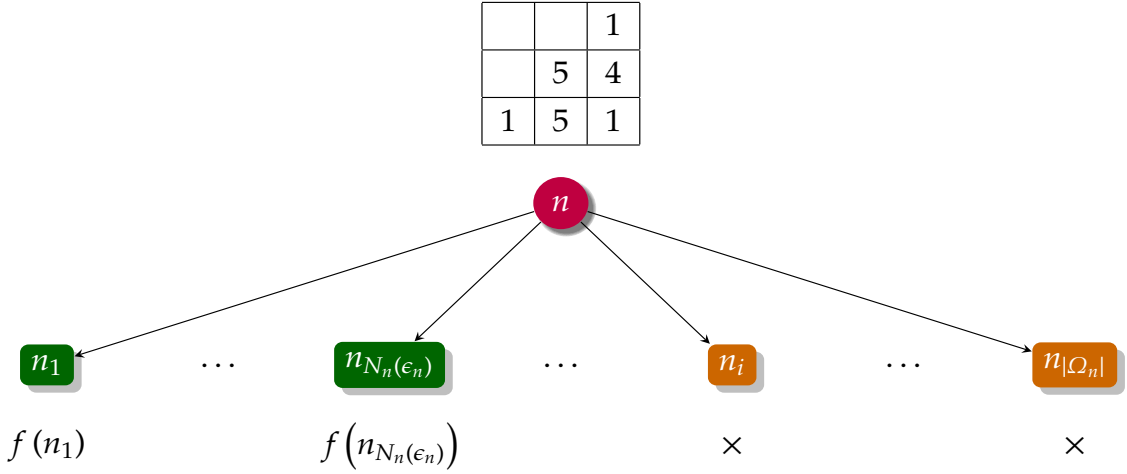

Formally, let $n$ be a given chance node, recall that we compute $\displaystyle f(n) = \frac{1}{| \Omega_n |} \sum_{n_i \in \Omega_n} f(n_i)$, where each $n_i \in \Omega_n$ represents one retrieval order (a random permutation) of batch $w$ (if $\lambda_n = C - K_w + 1$). Let $\Psi_n \subset \Omega_n$ be defined as the resulting subset of the $N_n(\epsilon_n) \left( \in \mathbb{N} \right)$ offspring drawn \textit{i.i.d.}, where $N_n$ is a function of $\epsilon_n$, itself a function of $n$ and $\epsilon > 0$ (a target error). The goal is to define $N_n(\epsilon_n)$, such that $\displaystyle \tilde{f}(n) = \frac{1}{| \Psi_n |} \sum_{m \in \Psi_n} f(m)$ is a ``good'' approximation of $f(n)$, \textit{i.e.}, $\left| \tilde{f}(n) - f(n) \right|$ is bounded by $\epsilon$ on average.

$PBFSA$ takes three input arguments, the configuration $n$ for which we want to evaluate $f$, a valid lower-bound $l$ and an upper bound $\epsilon > 0$ on the total expected ``error'' ensured by the algorithm.
It outputs $\tilde{f}(n)$, which is a randomized approximation of $f(n)$. Because of the samplings performed in line 9 in Algorithm \ref{algo:PBFSA}, the output of $PBFSA$ is random. The average error incurred by the algorithm is 
$\mathbb{E} \left[  \left| \tilde{f}(n) - f(n) \right| \right]$, where the expectation is taken over the aforementioned samplings. Our main result (Lemma \ref{lemma:mainResult}) states that $PBFSA$ ensures $\mathbb{E} \left[  \left| \tilde{f}(n) - f(n) \right| \right] \leqslant \epsilon$, in other words, $PBFSA$ guarantees an average error of at most $\epsilon$.

\begin{lemma}
\label{lemma:mainResult}
Let $n$ be a configuration with $\lambda_n \geqslant 0$ containers, $l$ be a valid lower bound function, and $\epsilon > 0$. If $\tilde{f}(n) = PBFSA(n, l, \epsilon)$, then
\begin{equation*}
\mathbb{E} \left[  \left| \tilde{f}(n) - f(n) \right| \right] \leqslant \epsilon.
\end{equation*}
\end{lemma}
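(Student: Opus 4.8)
The plan is to prove the bound by strong induction on $\lambda_n$, the number of containers in $n$, treating decision nodes before chance nodes within a fixed level (a chance node at level $\lambda$ recurses only into decision nodes at the same level, and those recurse only into configurations with strictly fewer containers, so there is no circularity). The base cases are exactly the cases in which $PBFSA$ does not branch: if $\lambda_n \leqslant S$ then $\tilde f(n) = b(n) = f(n)$ by Lemma \ref{lemma:lowerBoundPrune}, and if $n$ is a decision node with $\lambda_n \leqslant C_W$ then $\tilde f(n) = A^*(n) = f(n)$; in both cases the error is $0 \leqslant \epsilon$. I would also record the book-keeping fact that makes the global argument consistent: along any root-to-leaf path the values $\epsilon_n = \epsilon/\delta_n$ charged at the successive chance nodes sum to at most $\epsilon$, because $\delta_n$ bounds the number of sampling stages still below $n$ and $PBFSA$ passes the residual budget $\epsilon - \epsilon_n$ downward, so each of $m$ remaining stages is charged at most $\epsilon/m$ of the original budget. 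The abstraction/merging steps never change the value $f(\cdot)$ of a node, so they may be ignored in the error analysis.

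\textbf{Chance node.} Write $f(n) = \frac{1}{|\Omega_n|}\sum_{n_i\in\Omega_n}f(n_i)$ and, in the sampling branch, $\tilde f(n) = \frac{1}{N}\sum_{j=1}^{N}\tilde f(m_j)$ with $N = N_n(\epsilon_n)$ and $m_1,\dots,m_N$ i.i.d.\ uniform over $\Omega_n$, and decompose
\[
\tilde f(n) - f(n) \;=\; \frac{1}{N}\sum_{j=1}^{N}\big(\tilde f(m_j) - f(m_j)\big) \;+\; \Big(\frac{1}{N}\sum_{j=1}^{N} f(m_j) - f(n)\Big).
\]
The first term is handled by the triangle inequality together with the induction hypothesis applied to each possible offspring with the budget $\epsilon - \epsilon_n$ that $PBFSA$ passes down, giving an expected contribution $\leqslant \epsilon - \epsilon_n$. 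For the second term, $\frac{1}{N}\sum_j f(m_j)$ is an average of $N$ i.i.d.\ copies of a random variable bounded in $[f_{min}(n),f_{max}(n)]$ with mean $f(n)$, so its mean absolute deviation is at most $\sqrt{\mathrm{Var}/N}\leqslant (f_{max}(n)-f_{min}(n))/(2\sqrt N)$, and the choice $N_n(\epsilon_n)=\pi(f_{max}(n)-f_{min}(n))^2/(2\epsilon_n^2)$ makes this $\leqslant \epsilon_n$. Adding the two bounds yields $\mathbb{E}\big[|\tilde f(n)-f(n)|\big]\leqslant (\epsilon-\epsilon_n)+\epsilon_n = \epsilon$; the non-sampling branch ($N_n(\epsilon_n)\geqslant C_{w_{min}}!$) is the same computation without the second term.

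\textbf{Decision node.} Here $\tilde f(n) - f(n) = \min_{n_i\in\Gamma^{PBFSA}_n}\tilde f(n_i) - \min_{n_i\in\Delta_n}f(n_i)$, every $n_i\in\Delta_n$ has $\lambda_n-1$ containers and, when computed, is called with budget $\epsilon$, so the induction hypothesis gives $\mathbb{E}\big[|\tilde f(n_i)-f(n_i)|\big]\leqslant\epsilon$. Let $j^\ast$ attain $\min_{\Delta_n}f$ and let $\hat\jmath$ be the child $PBFSA$ keeps as the minimizer of $\tilde f$ over $\Gamma^{PBFSA}_n$. I would first establish the deterministic inequality
\[
|\tilde f(n)-f(n)| \;\leqslant\; \max\Big(\,|\tilde f(n_{j^\ast})-f(n_{j^\ast})|\,\mathbbm{1}\{j^\ast\in\Gamma^{PBFSA}_n\},\ |\tilde f(n_{\hat\jmath})-f(n_{\hat\jmath})|\,\Big):
\]
if $j^\ast$ is pruned then, exactly as in the proof of Lemma \ref{lem:optpruning}, validity (and nonnegativity) of $l$ forces $\min_{\Gamma^{PBFSA}_n}\tilde f\leqslant l(n_{j^\ast})\leqslant f(n_{j^\ast})$, so $\tilde f(n)$ cannot over-estimate $f(n)$, while under-estimation is always controlled by the error at the selected child. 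Taking expectations then relies on the structural observation that membership of a child in $\Gamma^{PBFSA}_n$, and the identity of $\hat\jmath$, are measurable with respect to the union of the mutually independent subtree-randomness of the children, so the induction hypothesis can be invoked inside the relevant conditional expectation, and only a single child's error is charged at $n$.

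\textbf{Main obstacle.} The delicate step is this decision-node case: the child whose error bounds $|\tilde f(n)-f(n)|$ is random, and for the under-estimation direction conditioning on ``child $i$ was selected'' biases $i$ towards having underestimated $f(n_i)$. The crux of the argument I would need is that the $l$-driven pruning ``pays for'' large underestimates — once the running minimum of $\tilde f$ falls below the monotone, nonnegative $l$-values of the remaining children they are all pruned, so at most one child's error is ever propagated — and that combining this with the independence of the subtree samplings keeps the expected charge at $\leqslant\epsilon$ rather than letting it scale with $|\Delta_n|$. A comparatively minor technical point is verifying that the constant in $N_n(\epsilon_n)$ indeed makes the chance-node sampling error at most $\epsilon_n$ in expectation, which is the second-moment estimate for bounded random variables sketched above.
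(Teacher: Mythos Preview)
Your inductive scheme and chance-node decomposition mirror the paper's closely; your second-moment bound $\mathbb{E}|\bar X-\mathbb{E}X|\leqslant\sigma/\sqrt N\leqslant(f_{\max}-f_{\min})/(2\sqrt N)$ is in fact simpler than the paper's tail-integration of Hoeffding (Corollary~\ref{corr:Hoeffding}) and delivers the same $\epsilon_n$. The substantive structural difference is that the paper strengthens the induction hypothesis to the pair $\mathbb{E}\big[(\tilde f(n)-f(n))^+\big]\leqslant\epsilon/2$ and $\mathbb{E}\big[(\tilde f(n)-f(n))^-\big]\leqslant\epsilon/2$, and this split is what makes the decision-node \emph{overestimation} direction go through cleanly: on the event $\{\tilde f(n)>f(n)\}$ the pruning rule forces the \emph{deterministic} true minimizer $n^*$ into $\Gamma_n^{PBFSA}$, whence $(\tilde f(n)-f(n))^+\leqslant(\tilde f(n^*)-f(n^*))^+$, and the induction hypothesis applies to the fixed node $n^*$ with no selection bias whatsoever. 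Working with $|\cdot|$ directly, as you do, forfeits this deterministic anchor and leaves both directions entangled with the random $\hat\jmath$.

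For the underestimation direction your diagnosis of the obstacle is exactly right, and your proposed resolution does not work: the event $\{\hat\jmath=i\}$ is positively correlated with $\tilde f(n_i)$ being small, so one cannot invoke the unconditional bound on $(\tilde f(n_i)-f(n_i))^-$ after conditioning on $i$ being selected; independence of the subtree samplings is of no help because the selection event depends on the randomness inside the selected subtree itself. Pruning does not ``pay for'' this --- it terminates branching precisely once $\min\tilde f$ drops below the remaining lower bounds, which if anything reinforces the bias toward an underestimating child. It is worth noting that the paper's own argument for this direction simply asserts that ``the induction hypothesis can be applied to $\tilde n$'' and writes $\mathbb{E}\big[(\tilde f(\tilde n)-f(\tilde n))^-\big]\leqslant\epsilon/2$ with $\tilde n$ the random $\arg\min$ --- i.e., it glosses over the very step you flagged as delicate. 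On this point your proposal is no less rigorous than the paper, and the one-sided split fully resolves only the overestimation half.
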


\subsection{Hoeffding`s inequality applied to the SCRP}

In order to prove this result, we use Hoeffding`s inequality to compute the number of samples to ensure probabilistic guarantees. We first state the well-known inequality, and a direct corollary.

\begin{theorem}[Hoeffding`s inequality]
Let $X \in \left[ x_{min}, x_{max} \right] $ be a real-valued bounded random variable with mean value $\mathbb{E} \left[ X \right]$. Let $N \in \mathbb{N}$ and $\left( X_1,\ldots, X_{N} \right)$ be $N$ \textit{i.i.d.} samples of $X$. If $\displaystyle \overline{X} = \frac{1}{N} \sum_{i=1}^{N} X_i$, then we have
\begin{equation}
\label{eq:Hoeff1}
\forall \delta > 0 \ , \ \mathbb{P} \left( \overline{X} - \mathbb{E} \left[ X \right] > \delta \right) \ \leqslant\ exp \left( \frac{-2 N \delta^2}{\left(x_{max} - x_{min} \right)^2} \right),
\end{equation}
and
\begin{equation}
\label{eq:Hoeff2}
\forall \delta > 0 \ , \ \mathbb{P} \left( \overline{X} - \mathbb{E} \left[ X \right] < - \delta \right) \ \leqslant\ exp \left( \frac{-2 N \delta^2}{\left(x_{max} - x_{min} \right)^2} \right).
\end{equation}
\end{theorem}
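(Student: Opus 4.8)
The plan is to prove \eqref{eq:Hoeff1} by the Cram\'er--Chernoff (exponential Markov) method and then deduce \eqref{eq:Hoeff2} by symmetry. First I would center the variables by setting $Y_i = X_i - \mathbb{E}[X]$, so that each $Y_i$ is i.i.d.\ with mean zero and takes values in an interval of length $x_{max} - x_{min}$. For any free parameter $s > 0$, applying Markov's inequality to the nonnegative random variable $\exp\!\left( s \sum_{i=1}^N Y_i \right)$ gives
\begin{equation*}
\mathbb{P}\left( \overline{X} - \mathbb{E}[X] > \delta \right) = \mathbb{P}\left( \sum_{i=1}^N Y_i > N\delta \right) \leqslant e^{-sN\delta}\, \mathbb{E}\left[ e^{s \sum_{i=1}^N Y_i} \right].
\end{equation*}
By independence the expectation factorizes as $\prod_{i=1}^N \mathbb{E}\left[ e^{sY_i} \right]$, which reduces the whole problem to controlling the moment generating function of a single centered, bounded random variable.

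The crux of the argument is the following bound (Hoeffding's lemma): for a zero-mean random variable $Y$ taking values in $[a,b]$ and any $s > 0$,
\begin{equation*}
\mathbb{E}\left[ e^{sY} \right] \leqslant \exp\left( \frac{s^2 (b-a)^2}{8} \right).
\end{equation*}
I would prove this by analyzing the cumulant generating function $\psi(s) = \log \mathbb{E}\left[ e^{sY} \right]$. One checks that $\psi(0) = 0$ and $\psi'(0) = \mathbb{E}[Y] = 0$, while $\psi''(s)$ equals the variance of $Y$ under the exponentially tilted law with density proportional to $e^{sy}$. Since this tilted law is still supported on $[a,b]$, its variance is at most $(b-a)^2/4$, using the elementary fact that any random variable confined to an interval of length $L$ has variance at most $L^2/4$. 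A second-order Taylor expansion with remainder then yields $\psi(s) \leqslant s^2 (b-a)^2 / 8$. I expect this to be the main obstacle, since it rests on the tilted-measure interpretation of $\psi''$ together with the variance bound, both of which must be established carefully rather than quoted.

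Combining the two steps, with each $Y_i$ ranging in an interval of width $x_{max} - x_{min}$, I obtain
\begin{equation*}
\mathbb{P}\left( \overline{X} - \mathbb{E}[X] > \delta \right) \leqslant \exp\left( -sN\delta + \frac{N s^2 (x_{max} - x_{min})^2}{8} \right).
\end{equation*}
The final step is to optimize over $s > 0$: the exponent is quadratic in $s$ and is minimized at $s = 4\delta / (x_{max} - x_{min})^2$, and substituting this value collapses the exponent to $-2N\delta^2 / (x_{max} - x_{min})^2$, which is exactly \eqref{eq:Hoeff1}. Finally, \eqref{eq:Hoeff2} follows by applying \eqref{eq:Hoeff1} to the i.i.d.\ samples $-X_1, \ldots, -X_N$: their common range $[-x_{max}, -x_{min}]$ has the same width $x_{max} - x_{min}$, so the identical bound results, completing the proof.
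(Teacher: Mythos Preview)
Your proof is correct and follows the standard Cram\'er--Chernoff route to Hoeffding's inequality: exponential Markov, Hoeffding's lemma via the tilted-variance bound $\psi''(s)\leqslant (b-a)^2/4$, and optimization over $s$. There is nothing to fault in the argument.

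However, note that the paper does \emph{not} actually prove this theorem. It is introduced with the phrase ``We first state the well-known inequality, and a direct corollary,'' and no proof is supplied; the paper only proves the subsequent Corollary~\ref{corr:Hoeffding} (bounding $\mathbb{E}[(\overline{X}-\mathbb{E}[X])^{\pm}]$) by integrating the tail bound that Hoeffding's inequality provides. So there is no ``paper's own proof'' to compare against: you have written out a full proof where the authors simply cite the result as classical.
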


\begin{corollary}
\label{corr:Hoeffding}
Let $X \in \left[ x_{min}, x_{max} \right] $ be a real-valued bounded random variable with mean value $\mathbb{E} \left[ X \right]$. Let $N \in \mathbb{N}$ and $\left( X_1,\ldots, X_{N} \right)$ be $N$ \textit{i.i.d.} samples of $X$. If $\displaystyle \overline{X} = \frac{1}{N} \sum_{i=1}^{N} X_i$, then $\forall \epsilon > 0$ such that $\displaystyle N \geqslant \frac{\pi \left( x_{max} - x_{min} \right)^2}{2\epsilon^2}$, we have 
\begin{equation}
\label{eq:Hoeffdingcor1}
\mathbb{E} \left[ \left( \overline{X} - \mathbb{E} \left[ X \right] \right)^+ \right] \leqslant \frac{\epsilon}{2},
\end{equation}
\begin{equation}
\label{eq:Hoeffdingcor2}
\mathbb{E} \left[ \left( \overline{X} - \mathbb{E} \left[ X \right] \right)^- \right] \leqslant \frac{\epsilon}{2},
\end{equation}
where $x^+ = \max\{x,0\}$ (resp. $x^- = -\min\{x,0\}$) is the positive (resp. negative) part of $x$.
\end{corollary}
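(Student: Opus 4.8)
The plan is to combine the two tail bounds (\ref{eq:Hoeff1})--(\ref{eq:Hoeff2}) of Hoeffding's inequality with the layer-cake representation of the expectation of a nonnegative random variable. Since $\left(\overline{X}-\mathbb{E}\left[X\right]\right)^+\geqslant 0$ and $\overline{X}$ is bounded (hence integrable), I would first write
\[
\mathbb{E}\left[\left(\overline{X}-\mathbb{E}\left[X\right]\right)^+\right] = \int_0^{\infty}\mathbb{P}\left(\left(\overline{X}-\mathbb{E}\left[X\right]\right)^+ > t\right)\,dt = \int_0^{\infty}\mathbb{P}\left(\overline{X}-\mathbb{E}\left[X\right] > t\right)\,dt,
\]
where the last equality uses that for $t>0$ the events $\{(\overline{X}-\mathbb{E}[X])^+ > t\}$ and $\{\overline{X}-\mathbb{E}[X] > t\}$ coincide.

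Next I would plug the bound (\ref{eq:Hoeff1}) into the integrand, which gives
\[
\mathbb{E}\left[\left(\overline{X}-\mathbb{E}\left[X\right]\right)^+\right] \leqslant \int_0^{\infty}\exp\!\left(\frac{-2Nt^2}{\left(x_{max}-x_{min}\right)^2}\right)\,dt = \frac{x_{max}-x_{min}}{2}\sqrt{\frac{\pi}{2N}},
\]
using the elementary Gaussian integral $\int_0^{\infty}e^{-at^2}\,dt = \tfrac12\sqrt{\pi/a}$ with $a = 2N/\left(x_{max}-x_{min}\right)^2$. The hypothesis $N\geqslant \pi\left(x_{max}-x_{min}\right)^2/(2\epsilon^2)$ is then exactly equivalent to $\frac{x_{max}-x_{min}}{2}\sqrt{\pi/(2N)}\leqslant \epsilon/2$, which yields (\ref{eq:Hoeffdingcor1}).

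For (\ref{eq:Hoeffdingcor2}) I would rerun the identical argument with $-\left(\overline{X}-\mathbb{E}[X]\right)$ in place of $\overline{X}-\mathbb{E}[X]$, invoking (\ref{eq:Hoeff2}) instead of (\ref{eq:Hoeff1}); the tail integral and the Gaussian computation are unchanged. I expect no real obstacle here: the only points requiring a little care are the justification of the layer-cake identity (immediate from boundedness of $\overline{X}$) and the bookkeeping in the Gaussian integral, where the constant $\pi$ in the statement is precisely what is needed to absorb the $\sqrt{\pi}$ produced by $\int_0^{\infty}e^{-at^2}\,dt$ and land exactly on the threshold $\epsilon/2$.
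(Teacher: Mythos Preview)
Your proposal is correct and follows essentially the same argument as the paper: both use the tail-integral (layer-cake) representation of the expectation of a nonnegative random variable, bound the tail via Hoeffding's inequality, evaluate the resulting Gaussian integral to obtain $\frac{\sqrt{\pi}(x_{max}-x_{min})}{2\sqrt{2N}}$, and then apply the hypothesis on $N$; the negative part is handled symmetrically in both.
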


\begin{proof}
For the first result, define $\Delta = \left( \overline{X} - \mathbb{E} \left[ X \right] \right)^+ = \left( \overline{X} - \mathbb{E} \left[ X \right] \right) \mathbbm{1} \left\{ \overline{X} - \mathbb{E} \left[ X \right] > 0 \right\}$. Note that $\Delta$ is a non-negative random variable, and $\forall \delta > 0$, $\left\{ \Delta > \delta \right\} = \left\{ \overline{X} - \mathbb{E} \left[ X \right] > \delta \right\}$. Let $F_{\Delta}$ denote the cumulative distribution function of $\Delta$, thus, using Equation (\ref{eq:Hoeff1}), $1 - F_{\Delta}(\delta) = \mathbb{P} \left( \Delta > \delta \right) = \mathbb{P} \left( \overline{X} - \mathbb{E} \left[ X \right] > \delta \right) \leqslant  exp \left( \frac{-2 N \delta^2}{\left(x_{max} - x_{min} \right)^2} \right)$, which gives
\begin{align*}
\mathbb{E} \left[ \Delta \right] = & \ \int_{\delta=0}^{\infty} \left( 1 - F_{\Delta}(\delta) \right) d\delta
 \leqslant \int_{\delta=0}^{\infty} exp \left( \frac{-2 N \delta^2}{\left(x_{max} - x_{min} \right)^2} \right) d\delta \\
 & \ = \frac{\left(x_{max} - x_{min} \right)}{\sqrt{2N}} \int_{u=0}^{\infty} exp(-u^2)du
 = \frac{\sqrt{\pi} \left(x_{max} - x_{min} \right)}{2\sqrt{2N}} \leqslant \frac{\epsilon}{2}.
\end{align*}

The proof of the second result is identical to the first one if we consider $\Delta' = \left( \overline{X} - \mathbb{E} \left[ X \right] \right)^- = \left( \mathbb{E} \left[ X \right] - \overline{X} \right) \mathbbm{1} \left\{ \mathbb{E} \left[ X \right] - \overline{X} > 0 \right\}$ and notice that $\forall \delta > 0$, $\left\{ \Delta' > \delta \right\} = \left\{ \overline{X} - \mathbb{E} \left[ X \right] < - \delta \right\}$, hence $1 - F_{\Delta'}(\delta) \leqslant exp \left( \frac{-2 N \delta^2}{\left(x_{max} - x_{min} \right)^2} \right)$ using Equation (\ref{eq:Hoeff2}).
\end{proof}

\paragraph{Computing $\boldsymbol{f_{min}}$ and $\boldsymbol{f_{max}}$}
In order to use Hoeffding`s inequality, we need to define lower ($f_{min}$) and upper ($f_{max}$) bound functions, such that for each chance node $n$, $\displaystyle f_{min}(n) \leqslant \min_{n_i \in \Omega_n} \left\{ f(n_i) \right\}$ and $\displaystyle f_{max}(n) \geqslant \max_{n_i \in \Omega_n} \left\{ f(n_i) \right\}$.
\begin{lemma}
Let $n$ be a chance node, if
\begin{equation}
f_{min}(n) = \min_{n_i \in \Omega_n} \left\{ b(n_i) \right\},
\end{equation}
and
\begin{equation}
f_{max}(n) = \min \left\{ \left(\left( \lambda_n - S \right)(T-1)\right)^+ + \left(\min \left\{ S, \lambda_n \right\}-1\right) \ , \ \left( 2 \left\lceil\frac{\lambda_n}{S}\right\rceil - 1 \right)\max_{n_i \in \Omega_n} \left\{ b(n_i) \right\} \right\},
\end{equation}
then
\begin{equation*}
f_{min}(n) \leqslant \min_{n_i \in \Omega_n} \left\{ f(n_i) \right\} \textit{ and } f_{max}(n) \geqslant \max_{n_i \in \Omega_n} \left\{ f(n_i) \right\}.
\end{equation*}
\end{lemma}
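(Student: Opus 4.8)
The first inequality is immediate and I would dispatch it first: by the blocking lower bound Observation, $b(n_i)\le f(n_i)$ for every offspring $n_i\in\Omega_n$, so $f_{min}(n)=\min_i b(n_i)\le b(n_j)\le f(n_j)$ for each $j$, and minimizing over $j$ gives $f_{min}(n)\le\min_{n_i\in\Omega_n}f(n_i)$. All the work is in the upper bound. Since $f_{max}(n)$ is written as a minimum of two quantities, I would show separately that each of the two bounds dominates $\max_{n_i\in\Omega_n}f(n_i)$. The key reduction: every $n_i\in\Omega_n$ is a \emph{fully revealed} configuration with exactly $\lambda_n$ containers, i.e. an instance of the classical restricted CRP, and it satisfies $\lambda_n\le C\le ST-(T-1)$ by $A^*_1$.

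For the bound $\big((\lambda_n-S)(T-1)\big)^+ + (\min\{S,\lambda_n\}-1)$, call this expression $g(\lambda_n)$; I would prove by induction on the number of containers $\lambda$ that $g(\lambda)$ upper bounds $f$ for \emph{any} valid CRP configuration with $\lambda$ containers. If $\lambda\le S$, Lemma~\ref{lemma:lowerBoundPrune} gives $f=b$, and $b\le\lambda-1$ since in a fully revealed configuration only the last-retrieved container is never blocking; this matches $g(\lambda)=\lambda-1$. If $\lambda>S$, retrieve the first target: it has $r\le\min\{T-1,\lambda-1\}$ blocking containers, and all can be relocated because the number of free slots in the other stacks is $(S-1)T-(\lambda-r-1)\ge r$, which rearranges exactly to $\lambda\le(S-1)T+1$, i.e. $A^*_1$. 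Using the resulting child of the decision node together with the recursion $f(\cdot)=r(\cdot)+\min_{\text{children}}f(\cdot)$, we get $f\le(T-1)+g(\lambda-1)$, and a one-line computation checks $(T-1)+g(\lambda-1)=g(\lambda)$ whenever $\lambda>S$, closing the induction.

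For the bound $(2\lceil\lambda_n/S\rceil-1)\max_j b(n_j)$, it suffices to prove $f(n_i)\le(2\lceil\lambda_n/S\rceil-1)\,b(n_i)$ for each $i$. If $b(n_i)=0$, then every stack has strictly decreasing labels from bottom to top, so the current target is always on top and $f(n_i)=0$. If $b(n_i)\ge1$, I would bound $f(n_i)$ by the cost $f_L(n_i)$ of the leveling heuristic (a valid feasible solution), write $f_L(n_i)=\sum_c(\text{number of times }c\text{ is relocated})$, and use two facts: (i) the set of containers ever relocated has size at most $b(n_i)$ — in the restricted CRP a container is relocated only while it currently blocks the target, and its blocking status cannot change while it stays in its original stack (the retrieval order is fixed and nothing is inserted below it), so the first relocation of any relocated container occurs while it is blocking, hence it was already blocking in $n_i$; and (ii) under leveling each individual container is relocated at most $2\lceil\lambda_n/S\rceil-1$ times. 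Multiplying (i) by (ii) gives $f(n_i)\le f_L(n_i)\le b(n_i)\,(2\lceil\lambda_n/S\rceil-1)$.

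The main obstacle is fact (ii). The useful structure is that leveling always places a relocated container on a stack of current minimum height, which holds at most $\lfloor\lambda_n/S\rfloor$ containers below it, so that container can only be forced to move again by the retrievals of those $\le\lceil\lambda_n/S\rceil$ containers; converting this local statement into the global count $2\lceil\lambda_n/S\rceil-1$ needs a telescoping/potential argument that tracks, for each container $c$, the increasing sequence of target retrievals that trigger its successive relocations (targets are served in order, so this sequence is monotone), in the spirit of the linear bound on the leveling policy established by \cite{Zehendner16}. I expect essentially all the technical care to sit there; the remaining steps are routine bookkeeping and the two elementary computations flagged above.
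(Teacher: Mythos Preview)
Your overall architecture matches the paper's: dispatch $f_{min}$ via the blocking lower bound, and for $f_{max}$ show each of the two candidate upper bounds separately, the first by a direct counting/induction argument and the second via the leveling heuristic together with the bound of \cite{Zehendner16}. The paper in fact simply \emph{cites} \cite{Zehendner16} for your fact~(ii), so you need not carry out the telescoping argument you flag as the main obstacle.

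There is, however, a genuine conceptual slip. You assert that ``every $n_i\in\Omega_n$ is a \emph{fully revealed} configuration \ldots\ an instance of the classical restricted CRP.'' This is false: at a chance node $n$ only the order of the \emph{current} batch is revealed to form $n_i$; all later batches remain random, so $f(n_i)$ and $b(n_i)$ are still \emph{expectations} over those future permutations. For the first (combinatorial) bound this is harmless, since $g(\lambda_n)$ is a pathwise upper bound for every realized order and hence also bounds the expectation. But it breaks your second bound as written: you treat $b(n_i)$ as a deterministic count and argue ``the set of containers ever relocated has size at most $b(n_i)$,'' which is meaningless when $b(n_i)$ may be fractional. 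The paper's proof inserts exactly the missing step: the result of \cite{Zehendner16} is a pathwise inequality (relocations by $L$) $\le(2\lceil\lambda_n/S\rceil-1)B$ for any fixed retrieval order with $B$ realized blocking containers; taking expectation over the unrevealed batches then gives $f(n_i)\le f_L(n_i)\le(2\lceil\lambda_n/S\rceil-1)\,b(n_i)$. Your argument is salvaged by this condition-then-average step. (Minor: in your base case ``only the last-retrieved container is never blocking'' should read first-retrieved, or more robustly, any bottom-of-stack container.)
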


\begin{proof}
Since $b(n_i) \leqslant f(n_i)$, then we have $\displaystyle f_{min}(n) = \min_{n_i \in \Omega_n} \left\{ b(n_i) \right\} \leqslant \min_{n_i \in \Omega_n} \left\{ f(n_i) \right\}$.

By definition, $f_{max}(n)$ is the minimum of two valid upper bounds. The first one comes from a basic observation. If there are $\lambda_n$ containers remaining to be retrieved in $n$, consider two cases:
\begin{itemize}
\item If $\lambda_n > S$, take the r$^{th}$ retrieval. If $S < r \leqslant \lambda_n$, then in order to perform this retrieval, there are at most $T-1$ containers blocking the target container so at most $T-1$ relocations are needed. When there are $S$ or less containers remaining, each container (except the lowest one) is at most relocated once hence we need at most $S - 1$ relocations. Combining these two facts, the maximum number of relocations is bounded by $\left( \lambda_n - S \right)(T-1) + (S-1) = \left(\left( \lambda_n - S \right)(T-1)\right)^+ + \left(\min \left\{ S, \lambda_n \right\}-1\right)$.
\item If $\lambda_n \leqslant S$, we know that $f(n_i) = b(n_i)\leqslant \lambda_n - 1 = \left(\left( \lambda_n - S \right)(T-1)\right)^+ + \left(\min \left\{ S, \lambda_n \right\}-1\right)$.
\end{itemize}
This shows the validity of the first upper bound.

For the second upper bound, \cite{Zehendner16} prove that, in the online case with a unique batch, the number of relocations performed by the leveling heuristic ($L$) is at most $\left( 2 \left\lceil\frac{\lambda_n}{S}\right\rceil - 1 \right) B$, where $B$ is the number of blocking containers. Since $L$ is not using any information about batches (only the height of stacks), this result holds for both batch and online models with any number of batches.
Let $n_i \in \Omega_n$, using this result and taking expectation over the retrieval order of containers not unveiled in $n_i$ yet, we have $f(n_i) \leqslant f_L(n_i) \leqslant \left( 2 \left\lceil\frac{\lambda_n}{S}\right\rceil - 1 \right) b(n_i)$. By taking the maximum over all $n_i \in \Omega_n$, the latter inequality results in the second upper bound.
\end{proof}

Notice that the previous lemma involves computing $\displaystyle \min_{n_i \in \Omega_n} \left\{ b(n_i) \right\}$ and $\displaystyle \max_{n_i \in \Omega_n} \left\{ b(n_i) \right\}$. The following corollary provides an efficient way of computing these values.
\begin{lemma}
\label{lemma:minbmaxb}
Let $n$ be a chance node, and $w_{min} \in \{1,\ldots,W\}$ be such that $\displaystyle \lambda_n = C - K_{w_{min}} + 1$ (\textit{i.e.}, the minimum batch in $n$). For each Stack $s$ of $n$ with $H^s \geqslant 1$ containers, let $(c^s_h)_{h=1,\ldots,H^s}$ be the containers in $s$, where $c^s_1$ is the container at the bottom and $c^s_{H^s}$ at the top (see Figure \ref{fig:FormulaBlockingLowerBound}, for the case $H=H^s$). Finally, consider $C^s_{w_{min}} = \left| \left\{ c^s_h = K_{w_{min}}, \ h=1,\ldots,H^s \right\} \right|$. Then we have
\begin{eqnarray}
\displaystyle \min_{n_i \in \Omega_n} \left\{ b(n_i) \right\} = \sum_{\substack{s=1,\ldots,S \\ H^s \geqslant 1}} \left( H^s - C^s_{w_{min}} - \sum_{\substack{h=1,\ldots,H^s \\ c^s_h \neq K_{w_{min}}}}  \frac{\mathbbm{1} \left\{ \displaystyle c^s_h = \min_{i=1,\ldots,h} \left\{ c^s_i \right\} \right\}}{ \displaystyle \sum_{i=1}^h \mathbbm{1} \left\{ c^s_h = c^s_i \right\}} \right),
\end{eqnarray}
and
\begin{eqnarray}
\displaystyle \max_{n_i \in \Omega_n} \left\{ b(n_i) \right\} = \sum_{\substack{s=1,\ldots,S \\ H^s \geqslant 1}} \left( H^s - \sum_{\substack{h=1,\ldots,H^s \\ c^s_h \neq K_{w_{min}}}}  \frac{\mathbbm{1} \left\{ \displaystyle c^s_h = \min_{i=1,\ldots,h} \left\{ c^s_i \right\} \right\}}{ \displaystyle \sum_{i=1}^h \mathbbm{1} \left\{ c^s_h = c^s_i \right\}} \right).
\end{eqnarray}
\end{lemma}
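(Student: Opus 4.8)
The plan is to exploit the additivity of the blocking lower bound, $b(\cdot)=\sum_{s=1}^S b_s(\cdot)$, with each stack term $b_s$ given by the single-stack formula of Lemma~\ref{lemma:FormulaBlockingLowerBound} (equivalently, the $s$-th summand of Equation~(\ref{eq:blockinglowerboundtotal})), and then to analyze stack by stack how $b_s(n_i)$ varies as $n_i$ ranges over $\Omega_n$, i.e.\ over the $C_{w_{min}}!$ permutations of batch $w_{min}$.

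First I would record the structural fact that makes things decouple: a permutation of batch $w_{min}$ relabels only the batch-$w_{min}$ containers, and those containers receive exactly the $C_{w_{min}}$ smallest labels present in $n_i$, namely $K_{w_{min}},\dots,K_{w_{min}}+C_{w_{min}}-1$, all strictly smaller than every label $K_u$ with $u>w_{min}$. Using this I would split, for each stack $s$, the single-stack sum into heights $h$ with $c_h^s\neq K_{w_{min}}$ and heights with $c_h^s=K_{w_{min}}$. For a later-batch container $c_h^s$: if some batch-$w_{min}$ container lies below it in $s$ then $c_h^s$ is blocking both in $n$ and in every $n_i$ (that container departs first); if not, no label at heights $\le h$ is altered by the revelation. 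In either case the term $\mathbbm 1\{c_h^s=\min_{i\le h}c_i^s\}/\sum_{i\le h}\mathbbm 1\{c_h^s=c_i^s\}$ is unchanged by the revelation and equals its value in $n$. Hence the later-batch part of $b_s(n_i)$ is a constant in $n_i$, equal to $(H^s-C^s_{w_{min}})-\sum_{h:\,c_h^s\neq K_{w_{min}}}\mathbbm 1\{c_h^s=\min_{i\le h}c_i^s\}/\sum_{i\le h}\mathbbm 1\{c_h^s=c_i^s\}$, which is exactly the combination that appears in the two claimed formulas.

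Next I would treat the batch-$w_{min}$ part. In $n_i$ all revealed labels are distinct and are the smallest present, so a batch-$w_{min}$ container at height $h$ in $s$ is blocking iff some batch-$w_{min}$ container below it in $s$ drew a smaller label, i.e.\ iff its label is not a prefix minimum (a record) of the bottom-to-top sequence of revealed labels of the batch-$w_{min}$ containers of $s$. Thus the number of \emph{non}-blocking batch-$w_{min}$ containers of $s$ equals the number of records of a permutation of $C^s_{w_{min}}$ distinct numbers, which lies between $1$ (attained when the smallest label is at the bottom) and $C^s_{w_{min}}$ (attained by a strictly decreasing sequence). Consequently the batch-$w_{min}$ contribution of $s$ to $b_s(n_i)$ ranges from $0$ to $C^s_{w_{min}}-1$; the degenerate cases $H^s=0$ and $C^s_{w_{min}}=0$ are immediate, the stack term then equaling its value in $n$.

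The last step, and the only delicate one, is to check that the per-stack extremizers can be realized by one and the same permutation of batch $w_{min}$, so that $\min_{n_i}\sum_s b_s(n_i)=\sum_s\min_{n_i}b_s(n_i)$ and likewise for $\max$. One direction is immediate; for the other I would observe that the constraint realizing the per-stack minimum (``within $s$, higher batch-$w_{min}$ containers are retrieved earlier'', a total order on $s$'s batch-$w_{min}$ containers) and the one realizing the per-stack maximum (``the bottom-most batch-$w_{min}$ container of $s$ is retrieved before the other batch-$w_{min}$ containers of $s$'') both involve pairwise-disjoint sets of containers across stacks, so the union of these constraints is acyclic and any linear extension is an admissible global permutation attaining all per-stack optima simultaneously. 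Adding the constant later-batch parts and the extremal batch-$w_{min}$ parts and summing over the non-empty stacks then yields the two stated identities. I expect the only thing needing care is exactly this joint-realizability argument together with the bookkeeping of the fractional later-batch terms; the rest is routine.
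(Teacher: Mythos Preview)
Your approach coincides with the paper's: decompose $b(n_i)$ by stacks, split each stack's contribution into a later-batch part (constant in $n_i$) and a batch-$w_{min}$ part, then bound the latter; your treatment of the constant terms and your explicit joint-realizability argument (which the paper only sketches) are both correct, and the min identity does follow. The mismatch is in your last step for the \emph{max}. Your own analysis gives that the batch-$w_{min}$ contribution of stack $s$ ranges over $[0,\,C^s_{w_{min}}-1]$: the bottom-most batch-$w_{min}$ container of $s$ can never be blocking (everything below it, if anything, is in a later batch), so at most $C^s_{w_{min}}-1$ of them are. Adding the upper endpoint $C^s_{w_{min}}-1$ to the constant part $(H^s-C^s_{w_{min}})-\sum_{h:\,c^s_h\ne K_{w_{min}}}(\cdots)$ yields $(H^s-1)-\sum(\cdots)$ for every stack with $C^s_{w_{min}}\ge1$, which is strictly smaller than the paper's $H^s-\sum(\cdots)$. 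So your argument does \emph{not} produce the stated max identity.

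This is not a flaw in your reasoning but in the lemma as written: the paper's max expression is only an upper bound on $\max_{n_i\in\Omega_n}b(n_i)$, not the exact value. The paper's own proof bounds the batch-$w_{min}$ sum by the trivial $C^s_{w_{min}}$ and then asserts tightness via an ``increasing order of retrieval from top to bottom,'' but that assertion fails for exactly the reason you identified (for instance, a single stack containing exactly two batch-$w_{min}$ containers and nothing else has true max equal to $1$, while the formula gives $2$). Since PBFSA only requires an upper bound on $\max_{n_i}b(n_i)$ to build $f_{max}(n)$, the algorithm is unaffected; but you should flag that your (correct) per-stack analysis actually delivers a sharper max than the one stated, and that the max equality in the lemma cannot hold in general.
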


\begin{proof}
Let $n$ be a chance node, and $n_i \in \Omega_n$ be one of its decision offspring, such that all containers in batch $w_{min}$ have been revealed. Recall that $\displaystyle b(n_i) = \sum_{s=1,\ldots,S} b^s(n_i)$, where $b^s(n_i)$ is the expected number of blocking containers in Stack $s$. First, for each Stack $s$ such that $H^s = 0$, $b^s(n_i) = 0$. Hence $\displaystyle b(n_i) = \sum_{\substack{s=1,\ldots,S \\ H^s \geqslant 1}} b^s(n_i)$.

For each Stack $s$ such that $H^s \geqslant 1$, consider the containers in this stack $(c^s_i)_{i=1,\ldots,H^s}$. Since all containers labeled $K_{w_{min}}$, \textit{i.e.}, from batch $w_{min}$, are known in $n_i$, we can write
\[ b^s(n_i) = \sum_{h=1,\ldots,H^s} \mathbb{P} \left[ c^s_h \textit{ is blocking in } n_i \right] = \sum_{\substack{h=1,\ldots,H^s \\ c^s_h = K_{w_{min}}}} \mathbbm{1} \left\{ c^s_h \textit{ is blocking in } n_i \right\} \ + \sum_{\substack{h=1,\ldots,H^s \\ c^s_h \neq K_{w_{min}}}} \mathbb{P} \left[ c^s_h \textit{ is blocking in } n_i \right].\]

Fix $h \in \{1,\ldots,H^s\}$ and $c^s_h \neq K_{w_{min}}$, then the proof of Lemma \ref{lemma:FormulaBlockingLowerBound} uses the fact that $\mathbb{P} \left[ c^s_h \textit{ is blocking in } n_i \right] = 1 - \frac{\mathbbm{1} \left\{ \displaystyle c^s_h = \min_{i=1,\ldots,h} \left\{ c^s_i \right\} \right\}}{\sum_{i=1}^h \mathbbm{1} \left\{ c^s_h = c^s_i \right\}}$. Finally, it is clear that $\displaystyle 0 \leqslant \sum_{\substack{h=1,\ldots,H^s \\ c^s_h = K_{w_{min}}}} \mathbbm{1} \left\{ c^s_h \textit{ is blocking in } n_i \right\} \leqslant C^s_{w_{min}}$. Therefore, we can get the corresponding formulae.
As a final remark, note that each of these bounds are tight. Indeed, consider the offspring of $n$, in which all containers in batch $w_{min}$ are in the decreasing (resp. increasing) order of retrieval from top to bottom, then this offspring has no (resp. $C^s_{w_{min}}$) blocking container(s). 
\end{proof}

\paragraph{Non-uniform case} Similar to the blocking lower bound, we can extend Lemma \ref{lemma:minbmaxb} to the case where probabilities are not uniform across retrieval orders. Recall that $q_{c^s_h}$ denotes the probability that $c^s_h$ is the first one to be retrieved among the ones positioned below in its stack and with the same label. Then we have
\[
\min_{n_i \in \Omega_n} \left\{ b(n_i) \right\} = \sum_{\substack{s=1,\ldots,S \\ H^s \geqslant 1}} \left( H^s - C^s_{w_{min}} - \sum_{\substack{h=1,\ldots,H^s \\ c^s_h \neq K_{w_{min}}}} q_{c^s_h} \mathbbm{1} \left\{ \displaystyle c^s_h = \min_{i=1,\ldots,h} \left\{ c^s_i \right\} \right\} \right),
\]
and
\[
\max_{n_i \in \Omega_n} \left\{ b(n_i) \right\} = \sum_{\substack{s=1,\ldots,S \\ H^s \geqslant 1}} \left( H^s - \sum_{\substack{h=1,\ldots,H^s \\ c^s_h \neq K_{w_{min}}}} q_{c^s_h} \mathbbm{1} \left\{ \displaystyle c^s_h = \min_{i=1,\ldots,h} \left\{ c^s_i \right\} \right\} \right).
\]

Now we can prove Lemma \ref{lemma:mainResult}.
\begin{proof}[Proof of Lemma \ref{lemma:mainResult}]
The proof is by induction on $\lambda_n$. Throughout the proof, we use the same notations as the ones introduced in Algorithm \ref{algo:PBFSA}. We say that $\tilde{f}$ verifies Conditions $(A)$ and $(B)$ at node $n$, if it verifies respectively the first and second inequalities below:
\begin{equation*}
\mathbb{E} \left[ \left( \tilde{f}(n) - f(n) \right)^+ \right] \leqslant \frac{\epsilon}{2} \text{ and } \mathbb{E} \left[ \left( \tilde{f}(n) - f(n) \right)^- \right] \leqslant \frac{\epsilon}{2}.
\end{equation*}
Note that if $\tilde{f}$ verifies Conditions $(A)$ and $(B)$ at node $n$, then $\mathbb{E} \left[ \left| \tilde{f}(n) - f(n) \right| \right] \leqslant \epsilon$, which would prove the lemma. Given $\epsilon > 0$, and $l$ a valid lower bound, the induction hypothesis is:
\[
\text{If } \tilde{f}(n) = PBFSA(n, l, \epsilon) \text{, then } \tilde{f} \text{ verifies Conditions } (A) \text{ and } (B) \text{ at node } n.
\]

First, if $\lambda_n \leqslant S$, then $\tilde{f}(n) = b(n) = f(n)$, and therefore, $\tilde{f}$ verifies Conditions $(A)$ and $(B)$ at node $n$. In this case, $\tilde{f}(n)$ is actually deterministic since no sampling is performed by $PBFSA$.

From now on, consider $n$ such that $\lambda_n > S$.

\textbf{If $\boldsymbol{n}$ is a decision node} such that $\tilde{f}(n) = PBFSA(n, l, \epsilon)$ and $\lambda_n > S$. First, if $S < \lambda_n \leqslant C_W$, then $\tilde{f}(n) = A^*(n) = f(n)$, hence $\tilde{f}$ verifies Conditions $(A)$ and $(B)$ at $n$.

If $\lambda_n > \max \{ S,C_W \}$, consider $\tilde{n} = \underset{n_i \in \Gamma^{PBFSA}_n}{argmin} \left\{ \tilde{f}(n_i) \right\}$ and $n^* = \underset{n_i \in \Delta_n}{argmin} \left\{ f(n_i) \right\}$. Note that $\tilde{f}(n) - f(n) = \tilde{f}(\tilde{n}) - f(n^*)$ almost surely (\textit{a.s.}), and by definition, $\tilde{n}$ and $n^*$ are both such that $\lambda_{\tilde{n}} = \lambda_{n^*} = \lambda_n - 1 < \lambda_n$. Consider the following measurable event:
\begin{equation}
\mathcal{E} = \left\{ \tilde{f}(n) - f(n) = \tilde{f}(\tilde{n}) - f(n^*) > 0 \right\}.
\end{equation}
\begin{itemize}
\item Conditioned on $\mathcal{E}$, we have $\left( \tilde{f}(n) - f(n) \right)^- = 0$ \textit{a.s.}, thus
\begin{equation}
\label{eq:mathcalE1}
\mathbb{E} \left[ \left. \left( \tilde{f}(n) - f(n) \right)^- \ \right| \ \mathcal{E} \right] = 0.
\end{equation}
Now let us show that conditioned on $\mathcal{E}$, $n^* \in \Gamma_n^{PBFSA}$ \textit{a.s.}; we suppose by contradiction that \textit{a.s.} $n^* \notin \Gamma_n^{PBFSA}$. If $k = \left| \Gamma_n^{PBFSA} \right|$, then $k < \left| \Delta_n \right|$ \textit{a.s.}, and $\displaystyle \min_{j=1,\ldots,k} \left\{ \tilde{f}(n_{(j)}) \right\} \leqslant l\left(n_{(k+1)}\right)$ \textit{a.s.} By definition $\displaystyle \tilde{f}\left( \tilde{n} \right) = \min_{j=1,\ldots,k} \left\{ \tilde{f}(n_{(j)}) \right\}$ so $\tilde{f}\left( \tilde{n} \right) \leqslant l\left(n_{(k+1)}\right)$ \textit{a.s.} Since $n^* \notin \Gamma_n^{PBFSA}$, then there exists $k^* \in \left\{k+1,\ldots,\left| \Delta_n \right| \right\}$ such that $n^* = n_{\left( k^* \right)}$. Since $\left( n_{(i)}\right)_{i \in \{1,\ldots,|\Delta_n|\}}$ are ordered by non-decreasing $l(.)$, we have $l\left(n_{(k+1)}\right) \leqslant l\left(n_{\left( k^* \right)}\right) = l\left(n^*\right)$. Therefore $\tilde{f}\left( \tilde{n} \right) \leqslant l\left(n^*\right)$ \textit{a.s.}; but, conditioned on $\mathcal{E}$, $\tilde{f}\left( \tilde{n} \right) > f\left( n^* \right) \geqslant l\left(n^*\right)$ \textit{a.s.}, which leads to a contradiction.
Thus conditioned on $\mathcal{E}$, $n^* \in \Gamma_n^{PBFSA}$ \textit{a.s.}. Therefore, we have $\tilde{f}(n^*) = PBFSA(n^*, l, \epsilon)$. By induction, $\tilde{f}$ verifies Condition $(A)$ at node $n^*$, thus
\begin{equation}
\label{eq:mathcalE2}
\mathbb{E} \left[ \left( \tilde{f}(n^*) - f(n^*) \right)^+ \right] \leqslant \frac{\epsilon}{2}.
\end{equation}
Finally, since $\tilde{n} = \underset{n_i \in \Gamma^{PBFSA}_n}{argmin} \left\{ \tilde{f}(n_i) \right\}$ and $n^* \in \Gamma_n^{PBFSA}$, then $\tilde{f}(\tilde{n}) \leqslant \tilde{f}(n^*)$ \textit{a.s.}, so we have $\tilde{f}(\tilde{n}) - f(n^*) \leqslant \tilde{f}(n^*) - f(n^*)$ \textit{a.s.} Consequently we have $\left( \tilde{f}(n) - f(n) \right)^+ = \left( \tilde{f}(\tilde{n}) - f(n^*) \right)^+ \leqslant \left( \tilde{f}(n^*) - f(n^*) \right)^+$ \textit{a.s.}, resulting in
\begin{equation}
\label{eq:mathcalE3}
\mathbb{E} \left[ \left. \left( \tilde{f}(n) - f(n) \right)^+ \ \right| \ \mathcal{E}  \right] \leqslant \mathbb{E} \left[ \left. \left( \tilde{f}(n^*) - f(n^*) \right)^+ \ \right| \ \mathcal{E}  \right].
\end{equation}
\item Conditioned on $\overline{\mathcal{E}}$, we have $\left( \tilde{f}(n) - f(n) \right)^+ = 0$ \textit{a.s.}, thus
\begin{equation}
\label{eq:mathcalE4}
\mathbb{E} \left[ \left. \left( \tilde{f}(n) - f(n) \right)^+ \ \right| \ \overline{\mathcal{E}} \right] = 0.
\end{equation}
Moreover, by definition $\tilde{n} \in \Gamma_n^{PBFSA}$ \textit{a.s.}, and $\tilde{f}(\tilde{n}) = PBFSA(\tilde{n}, l, \epsilon)$, thus the induction hypothesis can be applied to $\tilde{n}$. In particular, we have
\begin{equation}
\label{eq:mathcalE5}
\mathbb{E} \left[ \left( \tilde{f}(\tilde{n}) - f(\tilde{n}) \right)^- \right] \leqslant \frac{\epsilon}{2}.
\end{equation}
Finally, it is clear that $f(\tilde{n}) \geqslant f(n^*)$, then $\tilde{f}(\tilde{n}) - f(n^*) \geqslant \tilde{f}(\tilde{n}) - f(\tilde{n})$ \textit{a.s.}, which is equivalent to $\left( \tilde{f}(n) - f(n) \right)^- = \left( \tilde{f}(\tilde{n}) - f(n^*) \right)^- \leqslant \left( \tilde{f}(\tilde{n}) - f(\tilde{n}) \right)^-$ \textit{a.s.}, resulting in
\begin{equation}
\label{eq:mathcalE6}
\mathbb{E} \left[ \left. \left( \tilde{f}(n) - f(n) \right)^- \ \right| \ \overline{\mathcal{E}}  \right] \leqslant \mathbb{E} \left[ \left. \left( \tilde{f}(\tilde{n}) - f(\tilde{n}) \right)^- \ \right| \ \overline{\mathcal{E}}  \right].
\end{equation}
\end{itemize}
Finally, note the following observation: Let $Y \geqslant 0$ \textit{a.s.}, and $\mathcal{F}$ be measurable, then we have 
\begin{equation*}
\mathbb{E} \left[ \left. Y \ \right| \ \mathcal{F}  \right] \mathbb{P} \left( \mathcal{F} \right) \leqslant \mathbb{E} \left[ Y \right] \text{ and } \mathbb{E} \left[ \left. Y \ \right| \ \overline{\mathcal{F}}  \right] \mathbb{P} \left( \overline{\mathcal{F}} \right) \leqslant \mathbb{E} \left[ Y \right].
\end{equation*}
Now we can derive
\begin{equation*}
\mathbb{E} \left[ \left( \tilde{f}(n) - f(n) \right)^+ \right] = \mathbb{E} \left[ \left. \left( \tilde{f}(n) - f(n) \right)^+ \ \right| \ \mathcal{E}  \right] \mathbb{P} \left( \mathcal{E} \right) \leqslant \mathbb{E} \left[ \left. \left( \tilde{f}(n^*) - f(n^*) \right)^+ \ \right| \ \mathcal{E}  \right] \mathbb{P} \left( \mathcal{E} \right) \leqslant \mathbb{E} \left[ \left( \tilde{f}(n^*) - f(n^*) \right)^+ \right] \leqslant \frac{\epsilon}{2},
\end{equation*}
where the first equality comes from Equation (\ref{eq:mathcalE4}), the first inequality uses Equation (\ref{eq:mathcalE3}), the second one holds thanks to $\left( \tilde{f}(n^*) - f(n^*) \right)^+ \geqslant 0$ \textit{a.s.}, and the last one is Equation (\ref{eq:mathcalE2}).
Therefore, $\tilde{f}$ verifies Condition $(A)$ at node $n$.

Similarly, we have
\begin{equation*}
\mathbb{E} \left[ \left( \tilde{f}(n) - f(n) \right)^- \right] = \mathbb{E} \left[ \left. \left( \tilde{f}(n) - f(n) \right)^- \ \right| \ \overline{\mathcal{E}}  \right] \mathbb{P} \left( \overline{\mathcal{E}} \right) \leqslant \mathbb{E} \left[ \left. \left( \tilde{f}(\tilde{n}) - f(\tilde{n}) \right)^- \ \right| \ \overline{\mathcal{E}}  \right] \mathbb{P} \left( \overline{\mathcal{E}} \right) \leqslant \mathbb{E} \left[ \left( \tilde{f}(\tilde{n}) - f(\tilde{n}) \right)^- \right] \leqslant \frac{\epsilon}{2},
\end{equation*}
where the first equality comes from Equation (\ref{eq:mathcalE1}), the first inequality uses Equation (\ref{eq:mathcalE6}), the second one holds thanks to $\left( \tilde{f}(\tilde{n}) - f(\tilde{n}) \right)^- \geqslant 0$ \textit{a.s.}, and the last one is Equation (\ref{eq:mathcalE5}).
Therefore, $\tilde{f}$ verifies Condition $(B)$ at node $n$.

Therefore, we have proven that if $n$ is a decision node with $\lambda_n > S$, $\tilde{f}$ verifies both Conditions $(A)$ and $(B)$ at node $n$, which proves the lemma for decision nodes.

\textbf{If $\boldsymbol{n}$ is a chance node} such that $\tilde{f}(n) = PBFSA(n, l, \epsilon)$, and $\lambda_n > S$. Let us define $\displaystyle \overline{f}(n) = \sum_{n_i \in \Psi_n^{PBFSA}} p^n_{n_i}f(n_i)$, and show that
\begin{equation}
\label{eq:errorDecision}
\mathbb{E} \left[ \left( \tilde{f}(n) - \overline{f}(n) \right)^+ \right] \leqslant \frac{\epsilon - \epsilon_n}{2} \text{ and } \mathbb{E} \left[ \left( \tilde{f}(n) - \overline{f}(n) \right)^- \right] \leqslant \frac{\epsilon - \epsilon_n}{2}.
\end{equation}

Recall that $\forall n_i \in \Psi_n^{PBFSA}$, $\lambda_{n_i} = \lambda_n$, and $n_i$ are decision nodes such that $\tilde{f}(n_i) = PBFSA(n_i, l, \epsilon - \epsilon_n)$. Therefore, using the previous result, we know that $\displaystyle \mathbb{E} \left[ \left( \tilde{f}(n_i) - f(n_i) \right)^+ \right] \leqslant \frac{\epsilon - \epsilon_n}{2}$ and $\displaystyle \mathbb{E} \left[ \left( \tilde{f}(n_i) - f(n_i) \right)^- \right] \leqslant \frac{\epsilon - \epsilon_n}{2}$. We derive the following calculations:
\begin{align*}
\mathbb{E} \left[ \left( \tilde{f}(n) - \overline{f}(n) \right)^+ \right]  \ = \ & \mathbb{E} \left[ \left( \sum_{n_i \in \Psi_n^{PBFSA}} p^n_{n_i}\left( \tilde{f}(n_i) - f(n_i) \right) \right)^+ \right] \leqslant \mathbb{E} \left[ \sum_{n_i \in \Psi_n^{PBFSA}} p^n_{n_i} \left( \tilde{f}(n_i) - f(n_i) \right)^+ \right] \\
= \ & \sum_{n_i \in \Psi_n^{PBFSA}} p^n_{n_i} \mathbb{E} \left[ \left( \tilde{f}(n_i) - f(n_i) \right)^+ \right] \leqslant \sum_{n_i \in \Psi_n^{PBFSA}} p^n_{n_i} \frac{\epsilon - \epsilon_n}{2} = \frac{\epsilon - \epsilon_n}{2}.
\end{align*}
Similarly, we have
\begin{align*}
\mathbb{E} \left[ \left( \tilde{f}(n) - \overline{f}(n) \right)^- \right]  \ = \ & \mathbb{E} \left[ \left( \sum_{n_i \in \Psi_n^{PBFSA}} p^n_{n_i}\left( \tilde{f}(n_i) - f(n_i) \right) \right)^- \right] \leqslant \mathbb{E} \left[ \sum_{n_i \in \Psi_n^{PBFSA}} p^n_{n_i} \left( \tilde{f}(n_i) - f(n_i) \right)^- \right] \\
= \ & \sum_{n_i \in \Psi_n^{PBFSA}} p^n_{n_i} \mathbb{E} \left[ \left( \tilde{f}(n_i) - f(n_i) \right)^- \right] \leqslant \sum_{n_i \in \Psi_n^{PBFSA}} p^n_{n_i} \frac{\epsilon - \epsilon_n}{2} = \frac{\epsilon - \epsilon_n}{2},
\end{align*}
which proves Equation (\ref{eq:errorDecision}).

If $N_n(\epsilon_n) > C_{w_{min}}!$, then $f(n) = \overline{f}(n)$ so $\tilde{f}(n) - f(n) = \tilde{f}(n) - \overline{f}(n)$ \textit{a.s.}, and since $\frac{\epsilon - \epsilon_n}{2} \leqslant \frac{\epsilon}{2}$, Equation (\ref{eq:errorDecision}) implies that $\tilde{f}$ verifies Conditions $(A)$ and $(B)$ at node $n$.

Otherwise, we have $N_n(\epsilon_n) \leqslant C_{w_{min}}!$. Since $\Psi_n^{PBFSA}$ is constructed using $\displaystyle N_n(\epsilon_n) = \frac{\pi \left( f_{max}(n) - f_{min}(n) \right)^2}{2\epsilon_n^2}$ samples, thus by using Corollary \ref{corr:Hoeffding}, we have
\begin{equation}
\label{eq:errorChance}
\mathbb{E} \left[ \left( \overline{f}(n) - f(n) \right)^+ \right] \leqslant \frac{\epsilon_n}{2} \text{ and } \mathbb{E} \left[ \left( \overline{f}(n) - f(n) \right)^- \right] \leqslant \frac{\epsilon_n}{2}
\end{equation}
By combining Equations (\ref{eq:errorDecision}) and (\ref{eq:errorChance}), we have
\begin{eqnarray*}
\mathbb{E} \left[ \left( \tilde{f}(n) - f(n) \right)^+ \right] \leqslant \mathbb{E} \left[ \left( \tilde{f}(n) - \overline{f}(n) \right)^+ \right] + \mathbb{E} \left[ \left( \overline{f}(n) - f(n) \right)^+ \right] \leqslant \frac{\epsilon - \epsilon_n}{2} + \frac{\epsilon_n}{2} = \frac{\epsilon}{2}, \\ 
\mathbb{E} \left[ \left( \tilde{f}(n) - f(n) \right)^- \right] \leqslant \mathbb{E} \left[ \left( \tilde{f}(n) - \overline{f}(n) \right)^- \right] + \mathbb{E} \left[ \left( \overline{f}(n) - f(n) \right)^- \right] \leqslant \frac{\epsilon - \epsilon_n}{2} + \frac{\epsilon_n}{2} = \frac{\epsilon}{2},
\end{eqnarray*}
which shows that $\tilde{f}$ verifies Conditions $(A)$ and $(B)$ at node $n$ and concludes the proof.
\end{proof}

\section{Computational experiments}
\label{sec::ComputationExperiments}

Having introduced lower and upper bounds, $PBFS$, $PBFSA$, and theoretical guarantees in previous sections, we present several experimental results in this section to understand the effectiveness of our algorithms for the SCRP. For clarity, we refer to the set of instances from \cite{Ku} as the \textit{existing dataset}. We present 4 sets of experiments:
\begin{enumerate}
\item Based on instances from the existing dataset, which have relatively small batches, we test the $PBFS$ algorithm, as well as the two new heuristics and our lower bounds.
\item We slightly modify the existing dataset to obtain the \textit{modified dataset}, in order to obtain instances with relatively larger batches. We test the efficiency of $PBFSA$ on this modified dataset.
\item Based on the existing dataset, we show that $PBFS$ improves on the algorithm proposed in \cite{Ku} for the online model. Moreover, the two new heuristics ($EM$ and $EG$) outperform the $ERI$ algorithm on expectation for the majority of the instances of the dataset.
\item We change the existing dataset by considering that all containers belong to a unique batch. We show strong computational evidence to support Conjecture \ref{conj1}, which states that the leveling policy is optimal for the SCRP under the online model with a unique batch.
\end{enumerate}
All experiments are performed on a MacBook Pro with 2.2 GHz Intel Core i7 processor, 8.00 GB of RAM and the programming language is MATLAB 2016a. Finally, all results and instances used in this section are available at \url{https://github.com/vgalle/StochasticCRP}.

\paragraph{Implementation of heuristics}

\begin{enumerate}
\item Computing the number of relocations using $b$ when there are $S$ containers or less: In the retrieval process, \textbf{when there are $\boldsymbol{S}$ containers or less remaining in the configuration, the expected number of relocations performed by $\boldsymbol{ERI}$, $\boldsymbol{EM}$, $\boldsymbol{EG}$ and $\boldsymbol{L}$ is computed using $\boldsymbol{b}$}. This is motivated by the following observation: $ERI$, $EM$, $EG$ and $L$ are optimal when there are $S$ containers or fewer remaining in the configuration, and Lemma \ref{lemma:lowerBoundPrune} shows that the optimal expected number of relocations in this case is equal to $b$. Therefore, for all heuristics (except Random), instead of running simulations until there are no containers left, we stop when there are $S$ containers left and compute the expected number of relocations using $b$ instead.
\item Estimate the expected number of relocations using sampling: In order to estimate the exact objective value for a given heuristic, one would have to consider all possible retrieval scenarios. Instead, for each heuristic \textbf{unless specified otherwise, we report the average over 5000 samples (of retrieval orders) for each instance}, where samples are uniformly drawn at random.
\end{enumerate}

\paragraph{Existing dataset description}
The full description of the dataset can be found in \cite{Ku} and the original data set is available at \url{http://crp-timewindow.blogspot.com}. Note that:
\begin{itemize}
\item Configuration sizes vary from $\mathbf{T = 3,\ldots,6}$ tiers, and $\mathbf{S = 5,\ldots,10}$ stacks.
\item Two occupancy rates are considered, \textbf{50 and 67 percent}. The occupancy rate ($\mu \in [0,1]$) is defined such that the initial number of containers is $C = round \left( \mu \times S \times T \right)$, where $round(x)$ rounds $x$ to the closest integer. Therefore, a given triplet $(T,S,\mu)$ is equivalent to a given triplet $(T,S,C)$, and note that if $C = round \left( 0.67 \times S \times T \right)$, the condition $ 0 \leqslant C \leqslant ST - (T - 1 )$ is satisfied.
\item Given a configuration size ($T$ and $S$) and an occupancy rate ($\mu$) resulting in a given initial number of containers ($C$), the dataset includes \textbf{30 different initial configurations}.
\item For all 1'440 instances, the ratio between the number of batches and $C$ is taken to be around half, \textit{i.e.}, there are on average two containers per batch, which is the smallest size for a batch.
\end{itemize}

\textbf{In all our experiments, the time limit is set to an hour, and the $1^{st}$ look-ahead lower bound $b_1$ is used as input for both $PBFS$ and $PBFSA$}. All instances are solved by heuristics and lower bounds within seconds or less.

\subsection{Experiment 1: Batch model with small batches}

\begin{table}[h]
  \centering
  \ra{1}
  \resizebox{\textwidth}{!}{
\begin{tabular}{l l l l l l l l l l l l l l l}\toprule
  && $ T$ & & \multicolumn{2}{l}{3} & & \multicolumn{2}{l}{4} & & \multicolumn{2}{l}{5} & & \multicolumn{2}{l}{6}
\\ \cline{5-6}  \cline{8-9} \cline{11-12} \cline{14-15} $S$ & & Fill rate & &  50 percent & 67 percent & & 50 percent & 67 percent & & 50 percent & 67 percent & & 50 percent & 67 percent \\
\hline
5 & & $C$ & & 8 & 10 & & 10 & 13 & & 13 & 17 & & 15 & 20 \\
  & & Solved & & \CheckmarkBold & \CheckmarkBold & & \CheckmarkBold & \CheckmarkBold  & & \CheckmarkBold  & \textcolor{red}{28/30}  & & \CheckmarkBold  & \textcolor{red}{15/30}  \\
    & & Time (s) & & 0.01 & 0.02 & & 0.03 & 0.12  & & 0.17  &  & & 5.17  &  \\ \addlinespace
    \hline
6 & & $C$ & & 9 & 12 & & 12 & 16 & & 15 & 20 & & 18 & 24 \\ 
  & & Solved & & \CheckmarkBold & \CheckmarkBold & & \CheckmarkBold & \CheckmarkBold  & & \CheckmarkBold  & \textcolor{red}{25/30}  & & \CheckmarkBold & \textcolor{red}{14/30}  \\
  & & Time (s) & & 0.01 & 0.03 & & 0.04 & 0.86 & & 2.90 &  & & 15.94 &  \\ \addlinespace
  \hline
7 & & $C$ & & 11 & 14 & & 14 & 19 & & 18 & 23 & & 21 & 28 \\ 
  & & Solved & & \CheckmarkBold & \CheckmarkBold & & \CheckmarkBold & \CheckmarkBold & & \CheckmarkBold  & \textcolor{red}{24/30}  & & \textcolor{red}{23/30}  & \textcolor{red}{5/30}  \\
  & & Time (s) & & 0.02 & 0.04 & & 0.04 & 0.83 & & 1.37 &  & &  &  \\ \addlinespace
  \hline
8 & & $C$ & & 12 & 16 & & 16 & 21 & & 20 & 27 & & 24 & 32 \\
  & & Solved & & \CheckmarkBold & \CheckmarkBold & & \CheckmarkBold & \CheckmarkBold  & & \CheckmarkBold  & \textcolor{red}{20/30}  & & \textcolor{red}{22/30}  & \textcolor{red}{5/30}  \\
  & & Time (s) & & 0.01 & 0.06 & & 0.16 & 10.04  & & 6.84 &  & &  &  \\ \addlinespace
  \hline
9 & & $C$ & & 14 & 18 & & 18 & 24 & & 23 & 30 & & 27 & 36 \\
  & & Solved & & \CheckmarkBold & \CheckmarkBold & & \CheckmarkBold & \CheckmarkBold  & & \textcolor{red}{29/30} & \textcolor{red}{10/30}  & & \textcolor{red}{19/30} & \textcolor{red}{2/30}  \\
  & & Time (s) & & 0.03 & 0.10 & & 0.37 & 8.84  & &   &  & &  &  \\ \addlinespace
  \hline
10 & & $C$ & & 15 & 20 & & 20 & 27 & & 25 & 34 & & 30 & 40 \\
  & & Solved & & \CheckmarkBold & \CheckmarkBold & & \CheckmarkBold & \textcolor{red}{28/30} & & \textcolor{red}{29/30} & \textcolor{red}{12/30}  & & \textcolor{red}{22/30} & \textcolor{red}{2/30}  \\
  & & Time (s) & & 0.03 & 0.10 & & 0.54 &  & & &  & &  &  \\
 \addlinespace
 \hline
\end{tabular}}
\caption{Instances solved by $PBFS$ in the batch model with small batches.}
\label{tab:solve_1}
\end{table}

Table \ref{tab:solve_1} gives a summary of the results as follows: \CheckmarkBold indicates that all 30 instances are solved optimally by $PBFS$. In this case, the average solution time to solve these instances is given in seconds. Otherwise, the number of instances solved optimally is provided in red and in the form x/30. This table shows the efficiency of $PBFS$ as it can solve all instances except two, for $T=3$ and $T=4$. Most importantly, the average time to solve these instances is under 10 seconds for these problem sizes. Since many ports today have a maximum tier requirement of 4 and need fast solutions, $PBFS$ could be used in practice in the case of small batches. However, for $T=5$ and $6$, $PBFS$ cannot solve all instances optimally in a timely manner. This suggests that, as the problem grows slightly, some instances become very hard to solve, which should not be a surprise, knowing the NP-hardness of the problem. In order to avoid such situations in real operations, heuristics can be used to provide a ``good'' sub-optimal solution (good in the sense of being not too far from optimality). Therefore, we want to evaluate the performance of these heuristics in order to know which one should be used in real operations.

We measure the performance of heuristics and the tightness of lower bounds in Tables \ref{tab:UBLB_50_1} and \ref{tab:UBLB_67_1}. 
Concerning lower bounds, $b$ encompasses a significant number of relocations. Adding unavoidable ``bad'' relocations in $b_1$ and $b_2$, improves slightly the lower bound. But experiments seem to confirm that $b_2(n) - b_1(n) \leqslant b_1(n) - b(n)$ holds, supporting our intuition that the relative increase of lower bounds $b_k(n) - b_{k-1}(n)$ decreases with $k$.

Concerning heuristics, $EG$ and $EM$ clearly outperform $ERI$ as they result in lower expected numbers of relocations. When we have access to the optimal solutions, both heuristics are on average at most 2\% more than the optimal solution. We expect this behavior to be similar for larger instances, however we only have access to lower bounds to evaluate their performances. In this case, heuristics are on average at most 11\% more than $b_2$, hence at most 11\% from the optimal solution (even though we believe that this number is very conservative, as our lower bounds are not ``tight'').
Therefore, both $EG$ and $EM$ appear to be good solutions for the SCRP under the batch model with small batches. In this case, we recommend using $EM$ for its simplicity of implementation and understandability.

\subsection{Experiment 2: Batch model with larger batches}
\subsubsection{Modifying existing instances}
For the sake of reproducibility, we use the existing set of instances, but slightly modify it to consider larger batches. In order to create these instances, for each original instance $n$, consider $n'$ with the same containers in the same configuration. But, if $w$ is the batch of a container $c$ in $n$, then we take the batch of $c$ in $n'$ to be $w' = \left\lceil \frac{w}{\gamma} \right\rceil$, where $\gamma > 1$, \textit{i.e.}, we merge $\gamma$ batches together. In these experiments, we take $\gamma = 2$, which implies that batches have an average size of 4.

\begin{table}[h]
  \centering
  \ra{1}
  \resizebox{\textwidth}{!}{
\begin{tabular}{l l l l l l l l l l l l l l l}\toprule
  && $ T$ & & \multicolumn{2}{l}{3} & & \multicolumn{2}{l}{4} & & \multicolumn{2}{l}{5} & & \multicolumn{2}{l}{6}
\\ \cline{5-6}  \cline{8-9} \cline{11-12} \cline{14-15} $S$ & & Fill rate & &  50 percent & 67 percent & & 50 percent & 67 percent & & 50 percent & 67 percent & & 50 percent & 67 percent \\
\hline
5 & & $C$ & & 8 & 10 & & 10 & 13 & & 13 & 17 & & 15 & 20 \\
  & & Solved & & \CheckmarkBold & \CheckmarkBold & & \CheckmarkBold & \CheckmarkBold & & \CheckmarkBold & \textcolor{red}{21/30} & & \CheckmarkBold  &  \textcolor{red}{3/30} \\
  & & Time (s) & & 0.08 & 0.29 & & 0.14 & 4.55 & & 3.20 &  & & 72.70 &  \\
\addlinespace
\hline
6 & & $C$ & & 9 & 12 & & 12 & 16 & & 15 & 20 & & 18 & 24 \\ 
  & & Solved & & \CheckmarkBold & \CheckmarkBold & & \CheckmarkBold & \CheckmarkBold & & \CheckmarkBold  & \textcolor{red}{18/30} & & \textcolor{red}{27/30} & \textcolor{red}{1/30} \\
  & & Time (s) & & 0.08 & 0.47 & & 0.25 & 126.37 & & 14.74 &  & &  & \\
\addlinespace
\hline
7 & & $C$ & & 11 & 14 & & 14 & 19 & & 18 & 23 & & 21 & 28 \\ 
  & & Solved & & \CheckmarkBold & \CheckmarkBold & & \CheckmarkBold & \textcolor{red}{29/30} & & \CheckmarkBold & \textcolor{red}{9/30} & & \textcolor{red}{14/30} & \textcolor{red}{0/30}  \\
  & & Time (s) & & 0.13 & 0.71 & & 0.58 &  & & 17.74 &  & &  & \\
\addlinespace
\hline
8 & & $C$ & & 12 & 16 & & 16 & 21 & & 20 & 27 & & 24 & 32 \\
  & & Solved & & \CheckmarkBold & \CheckmarkBold & & \CheckmarkBold & \textcolor{red}{28/30} & & \textcolor{red}{29/30} & \textcolor{red}{6/30} & & \textcolor{red}{17/30} & \textcolor{red}{1/30} \\
  & & Time (s) & & 0.08 & 1.67 & & 1.26 &  & &  &  & &  &  \\
\addlinespace
\hline
9 & & $C$ & & 14 & 18 & & 18 & 24 & & 23 & 30 & & 27 & 36 \\
  & & Solved & & \CheckmarkBold & \CheckmarkBold & & \CheckmarkBold & \textcolor{red}{26/30} & & \textcolor{red}{25/30} & \textcolor{red}{5/30} & & \textcolor{red}{15/30} & \textcolor{red}{0/30}  \\
  & & Time (s) & & 0.13 & 1.49 & & 1.47 &  & &  &  & &  & \\
\addlinespace
\hline
10 & & $C$ & & 15 & 20 & & 20 & 27 & & 25 & 34 & & 30 & 40 \\
  & & Solved & & \CheckmarkBold & \CheckmarkBold & & \CheckmarkBold & \textcolor{red}{22/30} & & \textcolor{red}{29/30} & \textcolor{red}{7/30} & & \textcolor{red}{14/30} & \textcolor{red}{0/30} \\
  & & Time (s) & & 0.17 & 0.79 & & 3.10 &  & & &  & &  & \\
  \addlinespace
 \hline
\end{tabular}}
\caption{Instances solved by $PBFSA$ in the batch model with larger batches.}
\label{tab:solve_2}
\end{table}

\subsubsection{Target error $\epsilon$}
In order to set our target error, we consider the following. Let $n_0$ be a given instance, and set $\epsilon = \frac{b(n_0)}{2}$. In this case, we know that $\epsilon \leqslant \frac{f(n_0)}{2}$, which implies that we are making an error of at most 50\%. Note that this error is very conservative due to 2 major things: first, $b(n_0)$ is not necessarily representative of $f(n_0)$, specially if $n_0$ has many containers. Second, the number of samples given by Hoeffding's inequality is also very conservative, probably making our approximation substantially more accurate than what we can theoretically prove.

\subsubsection{Results}

Results are summarized in Tables \ref{tab:solve_2}. Similarly to Table \ref{tab:solve_1}, \CheckmarkBold indicates that all 30 instances are solved approximately by $PBFSA$ within the given expected error. In this case, the average solution time to solve these instances is given in seconds. Otherwise, the number of instances solved is provided in red and in the form x/30. This table shows that $PBFSA$ presents several advantages. First, it solves most of instances with $T=4$ and $S \leqslant 9$ approximately within 2 minutes, while we note that $PBFS$ was not able to solve most of these. Moreover, as can be seen in Tables \ref{tab:UBLB_50_2} and \ref{tab:UBLB_67_2}, $PBFSA$ still outperforms the best heuristics despite the fact that we only set the theoretical guarantee to 50\% of optimality. Together, these two advantages show the practicality of $PBFSA$ for problem sizes typically encountered in real ports. Moreover, we note that increasing the batch size appears to make the problem significantly more complicated to solve as we can solve optimally larger instances in Experiment 1 (see Table \ref{tab:solve_1}). Finally, we remark that similar conclusions of Experiment 1 can be drawn for lower and upper bounds (see Tables \ref{tab:UBLB_50_2} and \ref{tab:UBLB_67_2}).

\subsection{Experiment 3: Online model and comparison with \cite{Ku}}
Table \ref{tab:solve_3} gives a summary similar to the two previous experiments. In addition, we report the results of \cite{Ku} who take a \textit{time limit of eight hours} for each instance. In this table, \checkmark {\fontsize{8}{8}\selectfont (\checkmark)} indicates that all 30 instances are solved optimally by both $PBFS$ and \cite{Ku}. In this case, 
the average solution time in seconds to solve these instances is given for $PBFS$ and for \cite{Ku} in parenthesis. \CheckmarkBold indicates that all 30 instances are solved optimally only by $PBFS$ but not \cite{Ku}. In this case, only the average solution time to solve these instances with $PBFS$ is given in seconds. Otherwise, the number of instances solved by $PBFS$ is provided in red and in the form x/30.

\begin{table}[h]
  \centering
  \ra{1}
  \resizebox{\textwidth}{!}{
\begin{tabular}{l l l l l l l l l l l l l l l}\toprule
  && $ T$ & & \multicolumn{2}{l}{3} & & \multicolumn{2}{l}{4} & & \multicolumn{2}{l}{5} & & \multicolumn{2}{l}{6}
\\ \cline{5-6}  \cline{8-9} \cline{11-12} \cline{14-15} $S$ & & Fill rate & &  50 percent & 67 percent & & 50 percent & 67 percent & & 50 percent & 67 percent & & 50 percent & 67 percent \\
\hline
5 & & $C$ & & 8 & 10 & & 10 & 13 & & 13 & 17 & & 15 & 20 \\ 
 & & Solved & & \checkmark {\fontsize{8}{8}\selectfont (\checkmark)}  & \CheckmarkBold & & \checkmark {\fontsize{8}{8}\selectfont (\checkmark)} & \CheckmarkBold & & \checkmark {\fontsize{8}{8}\selectfont (\checkmark)} & \textcolor{red}{28/30} & & \CheckmarkBold & \textcolor{red}{18/30} \\
 & & Time (s) & & 0.01 {\fontsize{8}{8}\selectfont (0.02)}  & 0.02 & & 0.01 {\fontsize{8}{8}\selectfont (2.51)} & 0.09 & & 0.16 {\fontsize{8}{8}\selectfont (2483.30)} &  & & 3.74 &  \\
\addlinespace
\hline
 6 & & $C$ & & 9 & 12 & & 12 & 16 & & 15 & 20 & & 18 & 24 \\ 
  & & Solved & & \checkmark {\fontsize{8}{8}\selectfont (\checkmark)} & \CheckmarkBold & & \checkmark {\fontsize{8}{8}\selectfont (\checkmark)} & \CheckmarkBold & & \CheckmarkBold & \textcolor{red}{25/30} & & \CheckmarkBold & \textcolor{red}{15/30} \\
  & & Time & & 0.01 {\fontsize{8}{8}\selectfont (0.01)} & 0.04 & & 0.06 {\fontsize{8}{8}\selectfont (139.08)} & 0.81 & & 1.85 & & & 14.92 & \\
\addlinespace
\hline
 7 & & $C$ & & 11 & 14 & & 14 & 19 & & 18 & 23 & & 21 & 28 \\ 
  & & Solved & & \checkmark {\fontsize{8}{8}\selectfont (\checkmark)} & \CheckmarkBold & & \checkmark {\fontsize{8}{8}\selectfont (\checkmark)} & \CheckmarkBold & & \CheckmarkBold & \textcolor{red}{24/30}  & & \textcolor{red}{23/30}  & \textcolor{red}{5/30} \\
  & & Time (s) & & 0.02 {\fontsize{8}{8}\selectfont (0.33)} & 0.04 & & 0.04 {\fontsize{8}{8}\selectfont (207.62)} & 0.67 & & 1.38 &  & &  & \\
\addlinespace 
\hline
 8 & & $C$ & & 12 & 16 & & 16 & 21 & & 20 & 27 & & 24 & 32 \\
  & & Solved & & \checkmark {\fontsize{8}{8}\selectfont (\checkmark)} & \CheckmarkBold & & \CheckmarkBold & \CheckmarkBold & & \CheckmarkBold & \textcolor{red}{20/30} & &  \textcolor{red}{22/30} & \textcolor{red}{5/30} \\
  & & Time (s) & & 0.01 {\fontsize{8}{8}\selectfont (0.33)} & 0.05 & & 0.10 & 8.29 & & 5.85 &  & &  &  \\
\addlinespace
\hline
 9 & & $C$ & & 14 & 18 & & 18 & 24 & & 23 & 30 & & 27 & 36 \\ 
  & & Solved & & \checkmark {\fontsize{8}{8}\selectfont (\checkmark)} & \CheckmarkBold & & \CheckmarkBold & \CheckmarkBold & & \textcolor{red}{29/30} & \textcolor{red}{10/30} & & \textcolor{red}{19/30}  & \textcolor{red}{2/30} \\
  & & Time (s) & & 0.02 {\fontsize{8}{8}\selectfont (32.24)} & 0.09 & & 0.38 & 7.26 & &   &  & &  & \\
\addlinespace
\hline
 10 & & $C$ & & 15 & 20 & & 20 & 27 & & 25 & 34 & & 30 & 40 \\ 
  & & Solved & & \checkmark {\fontsize{8}{8}\selectfont (\checkmark)} & \CheckmarkBold & & \CheckmarkBold & \textcolor{red}{28/30}  & & \textcolor{red}{29/30} & \textcolor{red}{12/30}   & & \textcolor{red}{16/30}  & \textcolor{red}{2/30} \\
  & & Time (s) & & 0.03 {\fontsize{8}{8}\selectfont (58.85)} & 0.08 & & 0.52 &  & &  &  & &  & \\
  \addlinespace
 \hline
\end{tabular}}
\caption{Instances solved by $PBFS$ and \cite{Ku} in the online model with small batch.}
\label{tab:solve_3}
\end{table}

Results show strong evidence that our solution is improving significantly the best existing results for the SCRP under the online model, given that we solve many larger instances optimally. Furthermore, it also outperforms the most recent algorithm in solution time for the problem sizes it can solve. It appears that, for problems for which we can solve all (or almost all) instances, most instances are ``easy'' to solve as the algorithm finds a solution within seconds. However, as in the batch model, there exists some instances for which the optimal solution still requires an exponential number of nodes, which makes our algorithm not tractable.

In Tables \ref{tab:UBLB_50_3} and \ref{tab:UBLB_67_3}, we also report in parenthesis the averages for ERI and Random found by \cite{Ku}. The results for random are consistent. However, we find significant better results for our implementation of $ERI$. This is unexpected since the only difference between the two implementations is the use of lower bound $b$, when the configuration has less than $S$ containers remaining. Nevertheless, $ERI$ should also be optimal in this case, as it reduces to heuristic $L$. So this should not affect the expected number of relocations, and we cannot explain this difference.
Finally, we point out that the results are quite similar to those of Experiment 1. Indeed, the existing data set has relatively small batches (on average 2 containers), which inherently makes the two models, batch and online, very close to each other.

\subsection{Experiment 4: Online model with a unique batch}

\begin{table}[h]
  \centering
  \ra{1}
  \resizebox{\textwidth}{!}{
\begin{tabular}{l l l l l l l l l l l l l l l}\toprule
  && $ T$ & & \multicolumn{2}{l}{3} & & \multicolumn{2}{l}{4} & & \multicolumn{2}{l}{5} & & \multicolumn{2}{l}{6}
\\ \cline{5-6}  \cline{8-9} \cline{11-12} \cline{14-15} $S$ & & Fill rate & &  50 percent & 67 percent & & 50 percent & 67 percent & & 50 percent & 67 percent & & 50 percent & 67 percent \\
\hline
5 & & $C$ & & 8 & 10 & & 10 & 13 & & 13 & 17 & & 15 & 20 \\ 
  & & $PBFS$ & & 2.08 & 3.33 & & 3.54 & 6.53 & & 6.56 & 12.05 & & 9.13 & 17.28 \\
  & & $L.$ & & 2.08 & 3.33 & & 3.54 & 6.52 & & 6.57 & 12.06 & & 9.14 & 17.29 \\
  \addlinespace
\hline
6 & & $C$ & & 9 & 12 & & 12 & 16 & & 15 & 20 & & 18 & 24 \\ 
  & & $PBFS$ & & 2.10 & 4.04 & & 4.23 & 8.02 & & 7.01 & 13.48 & & 10.73 & 20.13 \\
  & & $L.$ & & 2.10 & 4.04 & & 4.23 & 8.02 & & 7.01 & 13.48 & & 10.73 & 20.13 \\
  \addlinespace
  \hline
7 & & $C$ & & 11 & 14 & & 14 & 19 & & 18 & 23 & & 21 & 28 \\ 
  & & $PBFS$ & & 2.69 & 4.60 & & 4.82 & 9.55 & & 8.58 & 14.75 & & 12.22 & 23.14 \\
  & & $L.$ & & 2.69 & 4.61 & & 4.82 & 9.55 & & 8.58 & 14.74  & & 12.22 & 23.14 \\
  \addlinespace
  \hline
8 & & $C$ & & 12 & 16 & & 16 & 21 & & 20 & 27 & & 24 & 32 \\ 
  & & $PBFS$ & & 2.61 & 5.19 & & 5.51 & 9.96 & & 9.12 & 17.75 & & 13.83 & - \\
  & & $L.$ & & 2.62 & 5.19 & & 5.51 & 9.95 & & 9.12 & 17.75 & & 13.83 & 26.04  \\
  \addlinespace
  \hline
9 & & $C$ & & 14 & 18 & & 18 & 24 & & 23 & 30 & & 27 & 36 \\ 
  & & $PBFS$ & & 3.31 & 5.72 & & 6.10 & 11.58 & & 10.89 & 19.15 & & - & - \\
  & & $L.$ & & 3.31 & 5.72 & & 6.10 & 11.58 & & 10.89 & 19.14 & & 15.40 & 28.84 \\
  \addlinespace
  \hline
10 & & $C$ & & 15 & 20 & & 20 & 27 & & 25 & 34 & & 30 & 40 \\ 
   & & $PBFS$ & & 3.36 & 6.38 & & 6.68 & 12.98 & & 11.13 & 22.07 & & - & - \\
   & & $L.$ & & 3.36 & 6.38 & & 6.67 & 12.99 & & 11.13 & 22.06 & & 16.87 & 31.77 \\
   \addlinespace
 \hline
\end{tabular}}
\caption{Instances solved with $PBFS$ and heuristic $L$ in the online model with a unique batch.}
\label{tab:expe4}
\end{table}

In this experiment, we consider the existing data set, but assign all containers into a unique batch ($W = 1$). We consider the SCRP under online model, where containers are revealed one at a time. Note that, in this case, each container is equally likely to be retrieved, and it is equivalent to know no-information about containers relative retrieval order. For each instance, we solve it twice: first using $PBFS$, and then using heuristic $L$, for which we sample \textit{10000 scenarios} (this is different from the 5000 samples considered in previous experiments). We report the results in Table \ref{tab:expe4}. In this table, for each problem size, we report the expected optimal number of relocations averaged over 30 instances. `-' means that all 30 instances could not be solved optimally with $PBFS$ within the given time limit of an hour. Note that, the exepected number of relocations using heuristic $L$ reported in this experiment might be less than the one of $PBFS$; this is only due to the fact that we are sampling.
Intuitively, $L$ should be the optimal solution in this setting, and this experiment shows strong evidence that the next conjecture holds.
\begin{conjecture}
\label{conj1}
Consider a configuration $n$ with a unique batch. Let $f^o(n)$ be the minimum expected number of relocations to empty $n$ under the online model, and $f^{o,L}(n)$ be the expected number of relocations performed by the Leveling heuristic under the online model, then
\begin{equation}
f^o(n) = f^{o,L}(n).
\end{equation}
\end{conjecture}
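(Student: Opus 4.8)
The plan is to first collapse the problem to a dynamic program whose state is only the multiset of stack heights. Under the online model with a single batch, the uniform random permutation is revealed one container at a time, but every container still in the bay remains exchangeable with every other such container throughout the process; hence, if $n$ has stack heights $h_1,\dots,h_S$ with $\sum_s h_s=C$, then $f^o(n)$ and $f^{o,L}(n)$ depend only on the multiset $\mathbf{h}=\{h_1,\dots,h_S\}$. The one-step transition is transparent in these coordinates: the next target sits in stack $i$ with probability $h_i/C$, and given this the number $j$ of containers above it is uniform on $\{0,\dots,h_i-1\}$; a policy must relocate these $j$ interchangeable containers, one at a time, to stacks $s\neq i$ (respecting the cap $T$), paying $j$, after which the target is removed. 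Writing $g(\mathbf{h})$ for the optimal expected number of relocations, the Bellman equation becomes
\[
g(\mathbf{h})\;=\;\frac{1}{C}\sum_{i=1}^{S}\ \sum_{j=0}^{h_i-1}\Bigl(j+\min_{\mathbf{h}'}\,g(\mathbf{h}')\Bigr),
\]
where the inner minimum ranges over all $\mathbf{h}'$ obtained by distributing $j$ unit increments among the stacks $s\neq i$ and setting stack $i$ to height $h_i-j-1$. The leveling policy corresponds to performing each of those $j$ increments on a currently shortest admissible stack.

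I would then prove Conjecture~\ref{conj1} by induction on $C=\sum_s h_s$, carrying the joint invariant that (a) leveling attains $g$ on every configuration with $C$ containers, and (b) $\mathbf{h}\mapsto g(\mathbf{h})$ is Schur-convex on configurations with $C$ containers and $S$ stacks, i.e.\ if $\mathbf{h}$ majorizes $\mathbf{h}'$ then $g(\mathbf{h})\geqslant g(\mathbf{h}')$. The base case $C\leqslant S$ is easy: a counting argument shows that at any required relocation there is an empty stack, so leveling moves each blocking container to the bottom of a fresh stack exactly once and never creates a new blocking pair; hence for every realization the number of relocations equals the number of blocking containers, so $f^{o,L}=b=f^o$ (using the blocking lower bound and Lemma~\ref{lemma:lowerBoundPrune}). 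Moreover $b(\mathbf{h})=\sum_s\bigl(h_s-\sum_{k=1}^{h_s}\tfrac1k\bigr)$ is Schur-convex because $h\mapsto h-\sum_{k\leqslant h}\tfrac1k$ has nondecreasing increments $1-\tfrac1{h+1}$ and is therefore discretely convex.

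For the inductive step, assume (a),(b) at level $C-1$. The first ingredient is a combinatorial lemma: among all multisets produced from a fixed height profile by adding $j$ indivisible units subject to the cap $T$, the water-filling outcome obtained by repeatedly incrementing a current minimum is majorized by every other outcome, and adjoining the common component $h_i-j-1$ preserves this domination. Together with invariant (b) at level $C-1$, this shows the inner minimum in the Bellman equation is attained by the leveling move, which gives (a) at level $C$ and lets us write $g(\mathbf{h})=\tfrac1C\sum_i\sum_{j=0}^{h_i-1}\bigl(j+g(\mathrm{Lev}^{(i)}_j(\mathbf{h}))\bigr)$, where $\mathrm{Lev}^{(i)}_j$ denotes the leveling transition. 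It then remains to check that this expression is Schur-convex in $\mathbf{h}$; by the standard fact that majorization is generated by Robin Hood transfers, it suffices to verify $g(\mathbf{h})\geqslant g(\mathbf{h}')$ when $\mathbf{h}'$ is obtained from $\mathbf{h}$ by moving one unit from a stack of height $p$ to one of height $q$ with $p\geqslant q+2$. The immediate-cost part $\tfrac1C\sum_i\sum_{j=0}^{h_i-1}j=\tfrac1C\sum_s\binom{h_s}{2}$ is already Schur-convex, so the content is the monotonicity of $\tfrac1C\sum_i\sum_{j=0}^{h_i-1}g(\mathrm{Lev}^{(i)}_j(\cdot))$ under such a transfer.

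This last step is where the real difficulty lies, and it is presumably why the statement is only conjectured. The tempting route is a coupling of the two height processes started from $\mathbf{h}$ and $\mathbf{h}'$ in which matched samples always occupy the more balanced of two majorization-comparable configurations; the obstruction is that the target stack is drawn with size-biased probabilities ($h_i/C$), so the two processes do not agree on how often they touch the modified stacks, and the coupling must be engineered to absorb this mismatch — for instance by matching individual containers across $\mathbf{h}$ and $\mathbf{h}'$ and tracking which stack leveling assigns the transferred unit to after the first relocation. An alternative is to strengthen the inductive invariant to membership of $g$ in a suitable class of discrete convex functions (of the M-convex or L-convex type from discrete convex analysis) that is closed under the operations appearing in the recursion — nonnegative weighted sums, the substitutions $\mathrm{Lev}^{(i)}_j$, and averaging — and to deduce Schur-convexity as a corollary. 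Either way, one must also control the boundary effects caused by the height cap $T$: when several stacks are nearly full, water-filling need not be the unique majorization-minimum, and I expect handling these near-full cases to demand the most delicate argument.
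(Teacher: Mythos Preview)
The paper does not prove this statement: it is explicitly presented as a conjecture, supported only by the computational evidence of Experiment~4 (Table~\ref{tab:expe4}), and the concluding section lists ``the proof of Conjecture~\ref{conj1}'' as future work. There is therefore no proof in the paper against which to compare your proposal.

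That said, your outline is a sensible line of attack, and your self-diagnosis is accurate. The reduction to the height multiset is correct under the online model with a unique batch (exchangeability of the remaining containers forces the cost-to-go to depend only on $\mathbf{h}$), the Bellman equation you write is the right one, the base case $C\leqslant S$ is exactly Lemma~\ref{lemma:lowerBoundPrune}, and your verification that $b(\mathbf{h})=\sum_s\bigl(h_s-\sum_{k\leqslant h_s}1/k\bigr)$ is Schur-convex via the increment $1-\tfrac{1}{h+1}$ is clean. The genuine gap is precisely the one you flag: propagating Schur-convexity of $g$ through the recursion. The obstruction you name---that the target-stack law is size-biased in $h_i/C$, so coupled processes started from $\mathbf{h}$ and a Robin-Hood neighbour $\mathbf{h}'$ disagree on how often they touch the transferred stacks---is real, and neither the coupling route nor the discrete-convex-analysis route you sketch is known to close it (the height cap $T$ only adds to the difficulty, as you note). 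In short, you have correctly located the hard step that keeps this a conjecture rather than a theorem; the paper offers no additional leverage on that step beyond numerical evidence.
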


This conjecture could also be made in the dynamic case, when containers arrive to be stacked. These results would have a strong implication in the port operations: \textit{leveling configurations is the optimal policy when minimizing relocations, and no-information is given in advance}.

\section{Discussion}
\label{sec:conclusion}

Managing relocation moves is one of the main challenges in the storage yard of container terminals, because it has a direct effect on the costs and efficiency of yard operations. The Container Relocation Problem, notorious for its computational intractability, addresses this issue. In this paper, we extend the CRP to the more practical case in which the retrieval order of containers is not known far in advance. First, we introduce a new stochastic model, called the batch model, show the applicability of this model and compare it theoretically with the existing model of \cite{Zhao}. Then, we derive lower bounds and fast and efficient heuristics for the SCRP. Subsequently, we develop two novel algorithms ($PBFS$ and $PBFSA$) to solve the stochastic CRP in different settings. Efficiencies of all algorithms are supported through computational experiments, for which all results are made available online. Finally, using our solution methods and based on extensive experiments, we conjecture the optimality of the simple Leveling heuristic in the online stochastic setting. More generally, the methods developed in this paper apply to multi-stage stochastic optimization problems, where the number of stages is finite, the set of feasible actions at each stage is finite, the objective function is bounded and bounds on the objective function can be easily computed.

Future work could include the proof of Conjecture \ref{conj1}. Another important future work is the optimal design of time windows for a TAS. On one hand, small time windows imply more information on the retrieval sequence, hence higher operational efficiency of port operators. On the other hand, large time windows insure higher flexibility for truck drivers and a high rate of on-time arrivals. In order to balance this trade-off, one would need to quantify two important metrics with respect to the expected number of relocations: the ``value of information'' and the assignment of containers to ``wrong'' batches. Finally, in the grand scheme of port operations, the study of stacking and retrieving simultaneously, as well as the extension in the row dimension of blocks is important for future studies of operations to take into account.

\section*{Acknowledgements}
The authors would like to thank anonymous reviewers and the editor for their suggestions that led to the significant improvements of this paper.
 
\bibliographystyle{apalike}
\bibliography{refs} 

\begin{thebibliography}{}

\bibitem[Aky{\"u}z and Lee, 2013]{Akyuz}
Aky{\"u}z, M.~H. and Lee, C.-Y. (2013).
\newblock A mathematical formulation and efficient heuristics for the dynamic
  container relocation problem.
\newblock {\em Naval Research Logistics}, 61(2):101--118.

\bibitem[Bertsekas, 2005]{Bertsekas}
Bertsekas, D.~P. (2005).
\newblock In {\em Dynamic Programming and Optimal Control, Third Edition},
  volume~1 of {\em Lecture Notes in Computer Science}. Athena Scientific.

\bibitem[Bonney, 2015]{JOC}
Bonney, J. (2015).
\newblock Us ports move toward truck appointment model (04.24.2016).
\newblock {\em Information Handling Services IHS}.
\newblock Retrieved from
  \url{http://www.joc.com/port-news/us-ports/port-new-york-and-new-jersey/us-ports-move-toward-truck-appointment-model_20150427.html}.

\bibitem[Borjian et~al., 2015a]{Borjian15}
Borjian, S., Galle, V., Manshadi, V.~H., Barnhart, C., and Jaillet, P. (2015a).
\newblock Container relocation problem: Approximation, asymptotic, and
  incomplete information.
\newblock {\em CoRR}, abs/1505.04229.
\newblock \url{http://arxiv.org/abs/1505.04229}.

\bibitem[Borjian et~al., 2015b]{Borjian15_2}
Borjian, S., Manshadi, V., Barnhart, C., and Jaillet, P. (2015b).
\newblock Managing relocation and delay in container terminals with flexible
  service policies.
\newblock {\em CoRR}, abs/1503.01535.
\newblock \url{https://arxiv.org/abs/1503.01535}.

\bibitem[Caserta et~al., 2012]{Caserta12}
Caserta, M., Schwarze, S., and Vo{\ss}, S. (2012).
\newblock A mathematical formulation and complexity considerations for the
  blocks relocation problem.
\newblock {\em European Journal of Operational Research}, 219(1):96--104.

\bibitem[Exp{\'o}sito-Izquierdo et~al., 2015]{Exposito}
Exp{\'o}sito-Izquierdo, C., Meli{\'a}n-Batista, B., and Moreno-Vega, J.~M.
  (2015).
\newblock An exact approach for the blocks relocation problem.
\newblock {\em Expert Systems with Applications}, 42(17-18):6408 -- 6422.

\bibitem[Galle et~al., 2016]{Galle}
Galle, V., Borjian, S., Manshadi, V., Barnhart, C., and Jaillet, P. (2016).
\newblock An average-case asymptotic analysis of the container relocation
  problem.
\newblock {\em Operations Research Letters}, 44(6):723 -- 728.

\bibitem[Giuliano and O'Brien, 2007]{Giuliano}
Giuliano, G. and O'Brien, T. (2007).
\newblock Reducing port-related truck emissions: The terminal gate appointment
  system at the ports of los angeles and long beach.
\newblock {\em Transportation Research Part D}, 12:460 -- 473.

\bibitem[Kim and Hong, 2006]{Kim}
Kim, K.~H. and Hong, G.-P. (2006).
\newblock A heuristic rule for relocating blocks.
\newblock {\em Computers \& Operations Research}, 33(4):940--954.

\bibitem[Ku and Arthanari, 2016]{Ku}
Ku, D. and Arthanari, T.~S. (2016).
\newblock Container relocation problem with time windows for container
  departure.
\newblock {\em European Journal of Operational Research}, 252(3):1031 -- 1039.

\bibitem[Lehnfeld and Knust, 2014]{Lehnfeld}
Lehnfeld, J. and Knust, S. (2014).
\newblock Loading, unloading and premarshalling of stacks in storage areas:
  Survey and classification.
\newblock {\em European Journal of Operational Research}, 239(2):297--312.

\bibitem[Morais and Lord, 2006]{Morais}
Morais, P. and Lord, E. (2006).
\newblock Terminal appointment system study.
\newblock {\em Technical Report, Transportation Development Center of Transport
  Canada}.
\newblock Retrieved from
  \url{https://www.tc.gc.ca/media/documents/policy/14570e.pdf}.

\bibitem[Murty et~al., 2005]{Murty}
Murty, K.~G., wah Wan, Y., Liu, J., Tseng, M.~M., Lai, E. L. K.-K., and Chiu,
  H.~W. (2005).
\newblock Hongkong international terminals gains elastic capacity using a
  data-intensive decision-support system.
\newblock {\em Interfaces}, 35(1):61 -- 75.

\bibitem[Petering and Hussein, 2013]{Petering}
Petering, M.~E. and Hussein, M.~I. (2013).
\newblock A new mixed integer program and extended look-ahead heuristic
  algorithm for the block relocation problem.
\newblock {\em European Journal of Operational Research}, 231(1):120--130.

\bibitem[Phillips, 2015]{WallStreet}
Phillips, E.~E. (2015).
\newblock Southern california ports to try trucking appointment system
  (08.27.2015).
\newblock {\em The Wall Street Journal}.
\newblock Retrieved from
  \url{http://www.wsj.com/articles/southern-california-ports-to-try-trucking-appointment-system-1440711102}.

\bibitem[Sennott, 2009]{Sennott}
Sennott, L.~I. (2009).
\newblock {\em Stochastic dynamic programming and the control of queueing
  systems}, volume 504.
\newblock John Wiley \& Sons.

\bibitem[Stahlbock and Vo\ss, 2008]{Stahlbock}
Stahlbock, R. and Vo\ss, S. (2008).
\newblock Operations research at container terminals: a literature update.
\newblock {\em OR Spectrum}, 30(1):1--52.

\bibitem[Steenken et~al., 2004]{Steenken}
Steenken, D., Vo{\ss}, S., and Stahlbock, R. (2004).
\newblock Container terminal operation and operations research – a
  classification and literature review.
\newblock {\em OR Spectrum}, 26:3 -- 49.

\bibitem[Tanaka and Takii, 2014]{Tanaka}
Tanaka, S. and Takii, K. (2014).
\newblock A faster branch-and-bound algorithm for the block relocation problem.
\newblock {\em 2014 IEEE International Conference on Automation Science and
  Engineering (CASE)}, 14:7--12.

\bibitem[Tierney and Vo{\ss}, 2016]{Tierney}
Tierney, K. and Vo{\ss}, S. (2016).
\newblock {\em Solving the Robust Container Pre-Marshalling Problem}, pages
  131--145.
\newblock Springer International Publishing.

\bibitem[{\"U}nl{\"u}yurt and Ayd{\i}n, 2012]{Unluyurt}
{\"U}nl{\"u}yurt, T. and Ayd{\i}n, C. (2012).
\newblock Improved rehandling strategies for the container retrieval process.
\newblock {\em Journal of Advanced Transportation}, 46(4):378--393.

\bibitem[van Asperen et~al., 2013]{Asperen}
van Asperen, E., Borgman, B., and Dekker, R. (2013).
\newblock Evaluating impact of truck announcements on container stacking
  efficiency.
\newblock {\em Flexible Services and Manufacturing Journal}, 25(4):543--556.

\bibitem[Wu and Ting, 2010]{Wu10}
Wu, K.-C. and Ting, C.-J. (2010).
\newblock A beam search algorithm for minimizing reshuffle operations at
  container yards.
\newblock {\em International Conference on Logistics and Maritime Systems,
  September}, pages 703 -- 710.

\bibitem[Wu and Ting, 2012]{Wu12}
Wu, K.-C. and Ting, C.-J. (2012).
\newblock Heuristic approaches for minimizing reshuffle operations at container
  yard.
\newblock {\em Proceedings of the Asia Pacific industrial engineering \&
  management systems conference}, pages 1407 -- 51.

\bibitem[Zehendner et~al., 2015]{Zehendner15}
Zehendner, E., Casserta, M., Feillet, D., Schwarze, S., and Vo{\ss}, S. (2015).
\newblock An improved mathematical formulation for the blocks relocation
  problem.
\newblock {\em European Journal of Operational Research}, 245:415 -- 422.

\bibitem[Zehendner and Feillet, 2014a]{Zehendner14}
Zehendner, E. and Feillet, D. (2014a).
\newblock Benefits of a truck appointment system on the service quality of
  inland transport modes at a multimodal container terminal.
\newblock {\em European Journal of Operational Research}, 235(2):461 -- 469.

\bibitem[Zehendner and Feillet, 2014b]{Zehendner14b}
Zehendner, E. and Feillet, D. (2014b).
\newblock A branch and price approach for the container relocation problem.
\newblock {\em International Journal of Production Research}, 52(24):7159 --
  7176.

\bibitem[Zehendner et~al., 2016]{Zehendner16}
Zehendner, E., Feillet, D., and Jaillet, P. (2016).
\newblock An algorithm with performance guarantee for the online container
  relocation problem.
\newblock {\em European Journal of Operational Research}, 245:415 -- 422.

\bibitem[Zhao and Goodchild, 2010]{Zhao}
Zhao, W. and Goodchild, A.~V. (2010).
\newblock The impact of truck arrival information on container terminal
  rehandling.
\newblock {\em Transportation Research Part E: Logistics and Transportation
  Review}, 46(3):327--343.

\bibitem[Zhu et~al., 2012]{Zhu}
Zhu, W., Qin, H., Lim, A., and Zhang, H. (2012).
\newblock Iterative deepening {A}* algorithms for the container relocation
  problem.
\newblock {\em IEEE Transactions on Automation Science and Engineering},
  9(4):710--722.

\end{thebibliography}

\appendix

\section{Theoretical and computational comparison of the batch and the online models}

\subsection{Theoretical comparison: proof of Lemma \ref{lem:compModels}}
\compModels*
\begin{proof}
We prove this lemma by induction on the number of batches $W$.
The lemma clearly holds if $y$ is empty (\textit{i.e.} $W=0$). Now consider $W \geqslant 1$ and $C_1 \geqslant 1$. For the sake of clarity of the proof, we define the following notation:
\begin{equation}
\label{eq:newProc}
\forall \ d \in \{1,\ldots,C_1\} \ , \left\{ \begin{array}{ll}
y \ \xrightarrow{\zeta_{1},\ldots,\zeta_{d}} \ x^d_{1}
\\
x^d_{k} \ \xrightarrow{a_{k}} \ x^d_{k+1}, \text{ if } d > 1 \ , \ \forall \ k \in \{1,\ldots,d-1\}
\\
x^d_k \ \xrightarrow{a_{k}} \ y^d_{k-d+2} \ , \ \forall \ k \in \{d,\ldots,C\}
\\
y^d_{k-d+1} \ \xrightarrow{\zeta_{k}} \ x^d_k \ , \ \forall \ k \in \{d+1,\ldots,C\}.
\end{array}
\right.
\end{equation}
These notations corresponds to the following process: the first $d$ containers to be retrieved are all revealed at once. Then decisions to retrieve these $d$ containers are made. Afterwards, each of the $C-d$ remaining containers is revealed one at a time (as in the online model). Under this revelation process, the minimum expected number of relocation is given by
\[
f^d\left(y\right) = \underset{\zeta_{1},\ldots,\zeta_{d}}{\mathbb{E}} \left[ \min_{a_{1},\ldots,a_{d}} \left\{ \sum_{k=1}^d r\left(x^d_k\right) + f^o\left(y^d_{2}\right) \right\} \right], \ \forall \ d \in \{1,\ldots,C_1\}.
\]
Moreover, using the recursion formula from the online model, we have
\[
f^o\left(y^d_{2}\right) =  \underset{\zeta_{d}}{\mathbb{E}} \left[ \min_{a_{d}} \left\{ r\left(x^{d}_{d+1}\right)  + f^o\left(y^{d}_{3}\right) \right\} \right].
\]
In particular, by definition of the online model, we have $f^o\left(y\right) = f^{1}\left(y\right)$.

Using these relations, let us prove that
\begin{equation}
\label{eq:ineqComp}
f^{d}\left(y\right) \leqslant f^{d-1}\left(y\right), \ \forall \ d \in \{2,\ldots,C_1\}.
\end{equation}
Let $d \in \{2,\ldots,C_1\}$, we have
\begin{align}
f^{d}\left(y\right) \ = \  & \underset{\zeta_{1},\ldots,\zeta_{d}}{\mathbb{E}} \left[ \min_{a_{1},\ldots,a_{d}} \left\{ \sum_{k=1}^{d} r\left(x^{d}_k\right) + f^o\left(y^{d}_{2}\right) \right\} \right] \nonumber
\\
= \  & \underset{\zeta_{1},\ldots,\zeta_{d-1}}{\mathbb{E}} \left[ \underset{\zeta_{d}}{\mathbb{E}} \left[ \min_{a_{1},\ldots,a_{d-1}} \left\{ \min_{a_{d}} \left\{ \sum_{k=1}^{d} r\left(x^{d}_k\right) + f^o\left(y^{d}_{2}\right) \right\} \right\} \right] \right]
\label{eq:beforeIneq}
\\
\leqslant \ & \underset{\zeta_{1},\ldots,\zeta_{d-1}}{\mathbb{E}} \left[ \min_{a_{1},\ldots,a_{d-1}} \left\{ \underset{\zeta_{d}}{\mathbb{E}} \left[ \min_{a_{d}} \left\{ \sum_{k=1}^{d} r\left(x^{d-1}_k\right) + f^o\left(y^{d-1}_{3}\right) \right\} \right] \right\} \right]
\label{eq:afterIneq}
\\
= \ & \underset{\zeta_{1},\ldots,\zeta_{d-1}}{\mathbb{E}} \left[ \min_{a_{1},\ldots,a_{d-1}} \left\{ \sum_{k=1}^{d-1} r\left(x^{d-1}_k\right) + \underset{\zeta_{d}}{\mathbb{E}} \left[ \min_{a_{d}} \left\{ r\left(x^{d-1}_{d}\right)  + f^o\left(y^{d-1}_{3}\right) \right\} \right] \right\} \right] \label{eq:afterEq}
\\
= \ & \underset{\zeta_{1},\ldots,\zeta_{d-1}}{\mathbb{E}} \left[ \min_{a_{1},\ldots,a_{d-1}} \left\{ \sum_{k=1}^{d-1}  r\left(x^{d-1}_k\right) + f^o\left(y^{d-1}_{2}\right) \right\} \right] = f^{d-1}\left(y\right), \nonumber
\end{align}
where the equality (\ref{eq:afterEq}) holds since $x_k^{d-1}$ for $k \in \{1,\ldots,d-1\}$ does not depend on $a_d$ and $\zeta_{d}$. Finally, the inequality holds because we have $\mathbb{E} \left[ \min \left\{ Z_1,\ldots,Z_m \right\} \right] \leqslant \min \left\{ \mathbb{E} \left[ Z_1,\ldots,Z_m \right] \right\}$ for any $Z_1,\ldots,Z_m$ random variables. Note that we changed $x_k^d$ in $x_k^{d-1}$ and $y_2^{d}$ in $y_3^{d-1}$. This change is made in order to stay consistent with the definition of Equation (\ref{eq:newProc}). Indeed, the order between the expectations and the minimums in Equation (\ref{eq:beforeIneq}) implies that the process of the first $d$ retrievals corresponds to
\[
y \xrightarrow{\zeta_{1},\ldots,\zeta_{d}} \ x^d_{1} \xrightarrow{a_{1}} \ x^d_{2} \xrightarrow{a_{2}} \ldots \xrightarrow{a_{d-1}} x_d^d \xrightarrow{a_{d}} y_2^{d},
\] 
while the order between the expectations and the minimums in Equation (\ref{eq:afterIneq}) corresponds to the following process for the first $d$ retrievals:
\[
y \xrightarrow{\zeta_{1},\ldots,\zeta_{d-1}} \ x^{d-1}_{1} \xrightarrow{a_{1}} \ x^{d-1}_{2} \xrightarrow{a_{2}} \ldots \xrightarrow{a_{d-1}} y_2^{d-1} \xrightarrow{\zeta_{d}} x_d^{d-1} \xrightarrow{a_{d}} y_3^{d-1}.
\]

Recall Equation (\ref{eq:fundrecrusion}) and apply it with $w=1$ (note that $K_1=1$ thus $K_1 + C_1 - 1 = C_1$) to get
\[
f\left(y\right) = \underset{\zeta_{1},\ldots,\zeta_{C_1}}{\mathbb{E}} \left[ \min_{a_{1},\ldots,a_{C_1}} \left\{ \sum_{k=1}^{C_1} r\left(x_k\right) + f\left(y_{2}\right) \right\} \right].
\]
By induction, for all configuration $y_2$ with $W-1$ batches we have $f\left(y_{2}\right) \leqslant f^o\left(y_{2}\right)$, thus
\[
f\left(y\right) = \underset{\zeta_{1},\ldots,\zeta_{C_1}}{\mathbb{E}} \left[ \min_{a_{1},\ldots,a_{C_1}} \left\{ \sum_{k=1}^{C_1} r\left(x_k\right) + f\left(y_{2}\right) \right\} \right] \leqslant \underset{\zeta_{1},\ldots,\zeta_{C_1}}{\mathbb{E}} \left[ \min_{a_{1},\ldots,a_{C_1}} \left\{ \sum_{k=1}^{C_1} r\left(x^{C_1}_k\right) + f^o\left(y^{C_1}_{2}\right) \right\} \right] = f^{C_1}\left(y\right),
\]
where we replaced $x_k$ by $x^{C_1}_k$ and $y_{2}$ by $y^{C_1}_{2}$ because, on the right-hand-side of the inequality, the revelation process after the first $C_1$ containers is the online model.
Finally, since $f^o\left(y\right) = f^{1}\left(y\right)$, by applying Equation (\ref{eq:ineqComp}) for each value of $d \in \{C_1,\ldots,2\}$, we complete the proof as
\[
f\left(y\right) \leqslant f^{C_1}\left(y\right) \leqslant f^{C_1-1}\left(y\right) \leqslant \ldots \leqslant f^2\left(y\right) \leqslant f^1\left(y\right) = f^o\left(y\right).
\]
\end{proof}

As a final remark, Lemma \ref{lem:compModels} is tight in the general setting. Indeed, there exists an initial configuration $y$ for which $f(y) = f^o(y)$. For instance consider the configuration in Figure \ref{fig:exampleSCRP}a, then we have $f(y) = f^o(y) = 13/6$.

\subsection{Computational comparison}
There also exist configurations for which $f(y) < f^o(y)$. The difference between these two values represents the value of taking into account available information (if possible).

In order to show a positive difference, we could have compared Experiments 1 and 3. However, since the average batch size is 2, these experiments do not show a positive difference between both models. Another possibility would have been to use the instances of Experiment 2. However, as we previously mentioned, such instances are hard to solve optimally and not approximately.

Instead, we consider another set of simpler instances randomly generated: 100 instances with $T=4$ tiers, $S=4$ stacks and $C = 12$ containers. Each instance has $W=3$ batches and each batch has $C_w = 4$ containers (for $w=1,2,3$). For each of these 100 instances, we solve it under the batch and the online models. The code and detailed results are available at \url{https://github.com/vgalle/StochasticCRP}. We are especially interested about $\displaystyle \frac{f^o(.)-f(.)}{f(.)} \times 100$, which the \% difference between the batch and the online models.

On average over the 100 instances, the optimal expected number of relocations under the batch model is $6.526$ and under the online model is $6.61$, hence giving a difference of $0.084$. We observe here that this difference represents more than $1.287\%$ of the optimal solution under the batch model, which is quite significant considering the fact that heuristic $EM$ experimentally lies within $2\%$ above the optimal solution. In addition, we noticed that for 25 of these instances, this difference was more than $2\%$ and the maximum was about $4\%$ (see Figure \ref{fig:DistributionDifference}a).

\begin{figure}[h]
\captionsetup{justification=centering}
\centering
\begin{minipage}{7.5cm}
\centering
\includegraphics[width=0.75\textwidth]{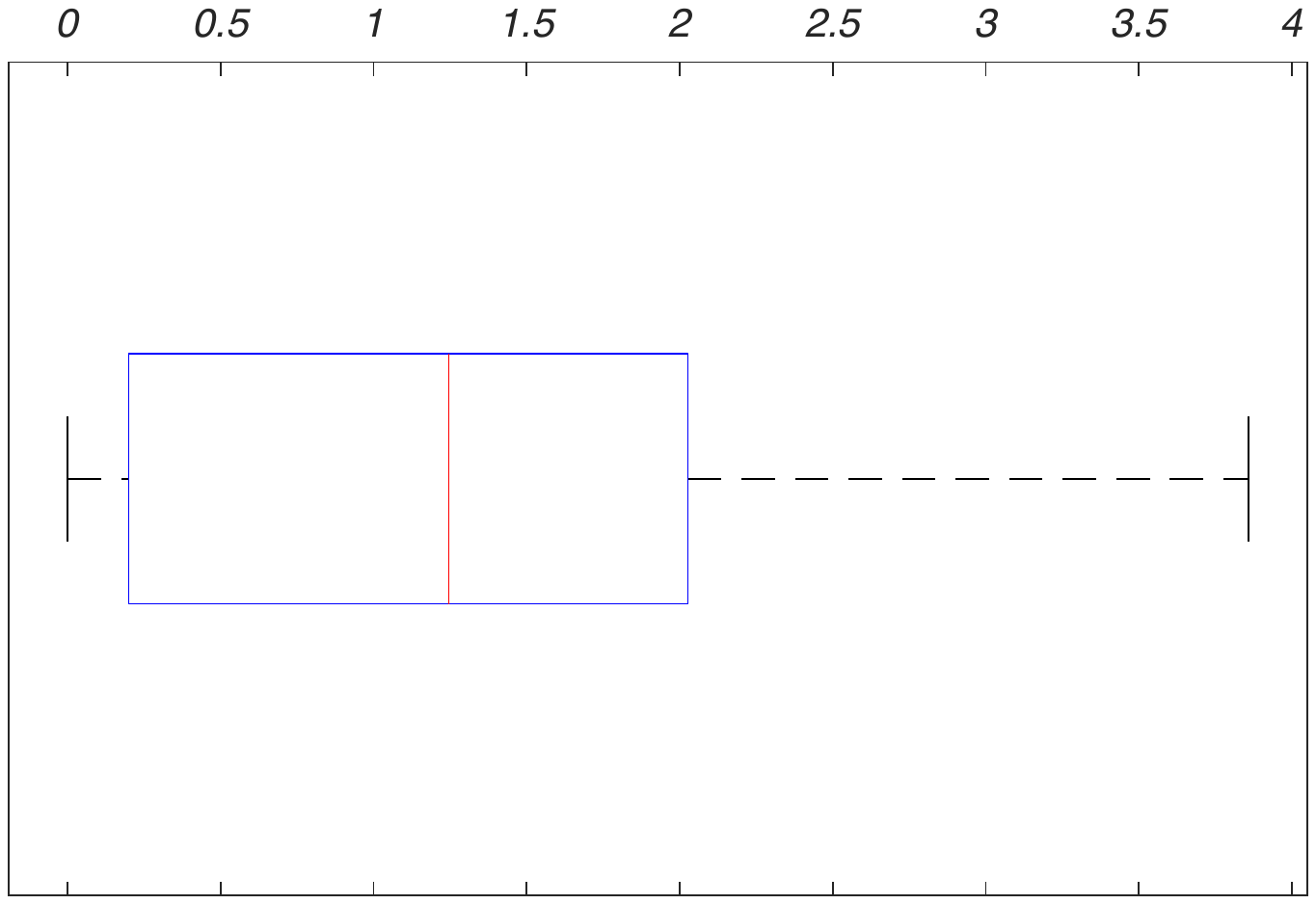}
\caption*{\ref{fig:DistributionDifference}a $T=4$, $S=4$, $C = 12$, $W=3$ and $C_w = 4$ (for $w=1,2,3$).}
\end{minipage}
\qquad \qquad
\begin{minipage}{7.5cm}
\centering
\includegraphics[width=\textwidth]{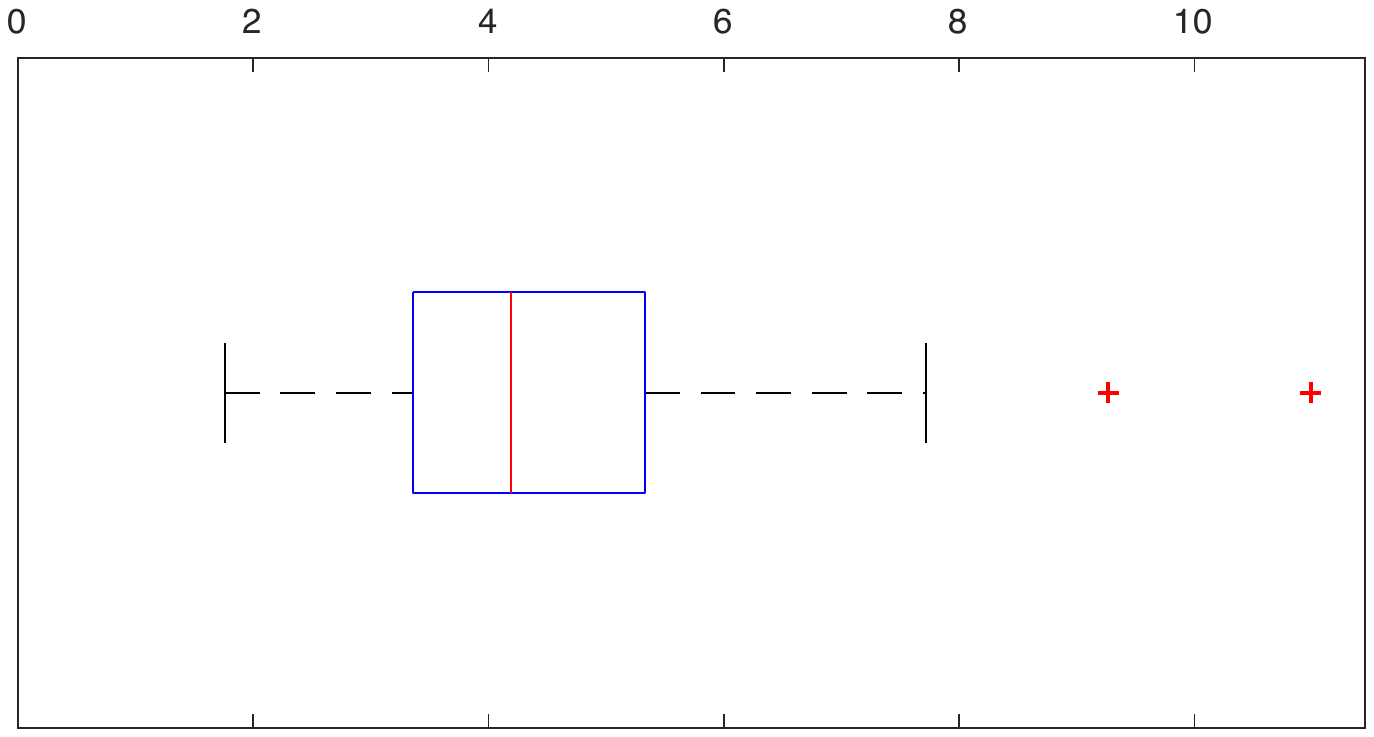}
\caption*{\ref{fig:DistributionDifference}b $T=4$, $S=4$, $C = 12$, $W=2$ and $C_w = 6$ (for $w=1,2$).}
\end{minipage}
\qquad
\caption{Distributions of \% difference between the batch and the online models from 100 randomly generated instances.}
\label{fig:DistributionDifference}
\end{figure}

We also consider 100 instances for which $T=4$, $S=4$, $C = 12$, but now $W=2$ and each batch has $C_w = 6$ (for $w=1,2$). Figure \ref{fig:DistributionDifference}b show that this relative difference appears to increase when the batch size increases. Indeed, the average difference is about $4.251\%$ (batch: $6.751$, online: $,7.038$, difference: $0.287$) with 25 instances having a difference of more than $5.3\%$.

\newpage
\section{Computational Experiments Tables}

\begin{table}[h!]
  \centering
  \ra{1}
    \resizebox{\textwidth}{!}{
\begin{tabular}{l l l l l l l l l l l l l l l l l}\toprule
 & &   & &   & &  \multicolumn{3}{l}{Lower bounds} & & PBFS & & \multicolumn{5}{l}{Heuristics} \\
 \cline{7-9} \cline{13-17} $S$ & & $T$ & & $C$ & & $b$ & $b_1$ & $b_2$ & &   & & EG & EM & ERI & L & Rand. \\
 \hline 5 & & 3 & & 8 & & 1.64 & 1.66 & 1.66 & & 1.70 & & \textbf{1.70} & \textbf{1.70} & \textbf{1.70} & 1.82 & 2.34 \\
		 & & 4 & & 10 & & 2.88 & 2.96 & 2.99 & & 3.11 & & \textbf{3.11} & 3.13 & 3.14 & 3.51 & 4.62 \\
		 & & 5 & & 13 & & 4.61 & 4.88 & 5.01 & & 5.32 & & 5.40 & \textbf{5.38} & 5.57 & 6.17 & 8.00 \\	
		 & & 6 & & 15 & & 6.28 & 6.64 & 7.06 & & 7.59 & & 7.85 & \textbf{7.81} & 8.09 & 9.41 & 12.35 \\
\hline 6 & & 3 & & 9 & & 1.68 & 1.69 & 1.69 & & 1.74 & & 1.76 & \textbf{1.74} & \textbf{1.74} & 1.84 & 2.43 \\
		 & & 4 & & 12 & & 3.54 & 3.61 & 3.63 & & 3.68 & & 3.69 & \textbf{3.68} & 3.68 & 4.11 & 5.59 \\
		 & & 5 & & 15 & & 5.37 & 5.57 & 5.68 & & 5.91 & & 5.97 & \textbf{5.94} & 6.01 & 7.21 & 9.55 \\
		 & & 6 & & 18 & & 7.19 & 7.52 & 7.68 & & 8.23 & & 8.38 & \textbf{8.29} & 8.63 & 10.05 & 13.53 \\
\hline 7 & & 3 & & 11 & & 2.82 & 2.86 & 2.88 & & 2.88 & & \textbf{2.88} & \textbf{2.88} & 2.89 & 2.96 & 4.12 \\
		 & & 4 & & 14 & & 3.97 & 4.06 & 4.10 & & 4.16 & & \textbf{4.17} & \textbf{4.17} & 4.20 & 4.65 & 6.47 \\
		 & & 5 & & 18 & & 6.49 & 6.65 & 6.74 & & 6.97 & & 7.05 & \textbf{7.00} & 7.07 & 8.46 & 11.27 \\
		 & & 6 & & 21 & & 8.82 & 9.21 & 9.51 & & -    & & 10.40 & \textbf{10.35} & 10.76 & 12.47 & 17.69 \\
\hline 8 & & 3 & & 12 & & 2.29 & 2.30 & 2.30 & & 2.31 & & \textbf{2.31} & \textbf{2.31} & \textbf{2.31} & 2.43 & 3.23 \\
		 & & 4 & & 16 & & 4.68 & 4.73 & 4.75 & & 4.82 & & \textbf{4.83} & \textbf{4.83} & \textbf{4.83} & 5.41 & 7.48 \\
		 & & 5 & & 20 & & 7.20 & 7.42 & 7.54 & & 7.85 & & 7.96 & \textbf{7.93} & 8.06 & 9.32 & 13.44 \\
		 & & 6 & & 24 & & 9.52 & 9.85 & 10.09 & & - & & 11.10 & \textbf{10.99} & 11.34 & 13.29 & 19.28 \\
\hline 9 & & 3 & & 14 & & 2.98 & 2.98 & 2.98 & & 3.00 & & \textbf{3.00} & 3.01 & \textbf{3.00} & 3.19 & 4.54 \\
		 & & 4 & & 18 & & 5.63 & 5.71 & 5.71 & & 5.73 & & \textbf{5.73} & \textbf{5.73} & \textbf{5.73} & 6.52 & 9.29 \\
		 & & 5 & & 23 & & 8.58 & 8.69 & 8.77 & & -    & & 9.05 & \textbf{9.02} & 9.12 & 11.16 & 15.57 \\
		 & & 6 & & 27 & & 10.38 & 10.78 & 10.98 & & - & & 11.59 & \textbf{11.58} & 11.76 & 14.62 & 20.93 \\
\hline 10 & & 3 & & 15 & & 3.18 & 3.18 & 3.18 & & 3.19 & & \textbf{3.19} & 3.20 & 3.20 & 3.27 & 4.75 \\
		  & & 4 & & 20 & & 6.20 & 6.23 & 6.23 & & 6.28 & & 6.30 & \textbf{6.28} & \textbf{6.28} & 6.98 & 10.41 \\
		  & & 5 & & 25 & & 9.10 & 9.37 & 9.39 & & -    & & \textbf{9.60} & \textbf{9.60} & 9.73 & 11.38 & 16.64 \\
		  & & 6 & & 30 & & 11.91 & 12.28 & 12.44 & & - & & 13.01 & \textbf{12.92} & 13.15 & 15.93 & 23.40 \\ \hline
\end{tabular}}
\caption{Results of experiment 1: Performance of PBFS, heuristics and tightness of lower bounds for a fill rate of 50 percent in the Batch Model, in the case of small batches. Bold numbers highlight the best heuristic for a given problem size.}
\label{tab:UBLB_50_1}
\end{table}

\begin{table}[h!]
  \centering
  \ra{1}
    \resizebox{\textwidth}{!}{
\begin{tabular}{l l l l l l l l l l l l l l l l l}\toprule
 & &   & &   & &  \multicolumn{3}{l}{Lower bounds} & & PBFS & & \multicolumn{5}{l}{Heuristics} \\
 \cline{7-9} \cline{13-17} $S$ & & $T$ & & $C$ & & $b$ & $b_1$ & $b_2$ & &   & & EG & EM & ERI & L & Rand. \\
 \hline 5 & & 3 & & 10 & & 2.83 & 2.97 & 3.01 & & 3.08 & & \textbf{3.08} & \textbf{3.08} & 3.11 & 3.33 & 4.16 \\
 		  & & 4 & & 13 & & 4.69 & 4.97 & 5.09 & & 5.58 & & 5.69 & \textbf{5.64} & 5.75 & 6.50 & 8.22 \\
 		  & & 5 & & 17 & & 7.58 & 8.52 & 8.92 & & -    & & \textbf{10.44} & 10.48 & 11.04 & 12.23 & 15.27 \\
		  & & 6 & & 20 & & 9.69 & 10.79 & 11.46 & & -  & & \textbf{14.53} & 14.65 & 15.77 & 18.47 & 22.93 \\
 \hline 6 & & 3 & & 12 & & 3.60 & 3.70 & 3.72 & & 3.89 & & \textbf{3.89} & 3.90 & 3.90 & 4.32 & 5.54 \\
 		  & & 4 & & 16 & & 6.20 & 6.59 & 6.78 & & 7.28 & & \textbf{7.41} & 7.45 & 7.61 & 8.55 & 11.29 \\
 		  & & 5 & & 20 & & 8.28 & 8.85 & 9.16 & & -    & & 10.39 & \textbf{10.38} & 10.80 & 12.85 & 16.43 \\
 		  & & 6 & & 24 & & 11.67 & 12.22 & 12.54 & & - & & 15.19 & \textbf{15.17} & 16.14 & 19.33 & 25.03 \\
 \hline 7 & & 3 & & 14 & & 3.85 & 3.89 & 3.91 & & 3.97 & & \textbf{3.98} & \textbf{3.98} & 4.02 & 4.40 & 6.05 \\
 		  & & 4 & & 19 & & 6.25 & 6.60 & 6.86 & & 7.29 & & 7.37 & \textbf{7.36} & 7.54 & 8.89 & 11.60 \\
 		  & & 5 & & 23 & & 9.72 & 10.24 & 10.55 & & -  & & 11.75 & \textbf{11.71} & 12.30 & 14.92 & 19.65 \\
 		  & & 6 & & 28 & & 13.52 & 14.41 & 14.93 & & - & & 17.72 & \textbf{17.63} & 18.81 & 23.10 & 30.78 \\
 \hline 8 & & 3 & & 12 & & 4.47 & 4.57 & 4.61 & & 4.66 & & \textbf{4.66} & \textbf{4.66} & 4.68 & 5.14 & 6.94 \\
 		  & & 4 & & 21 & & 7.62 & 7.85 & 7.98 & & 8.29 & & \textbf{8.33} & 8.35 & 8.43 & 9.76 & 13.26 \\
 		  & & 5 & & 27 & & 11.61 & 12.08 & 12.52 & & - & & 13.56 & \textbf{13.47} & 14.10 & 17.15 & 23.11 \\
 		  & & 6 & & 32 & & 15.60 & 16.39 & 16.78 & & - & & \textbf{19.28} & 19.51 & 20.85 & 25.89 & 34.66 \\
 \hline 9 & & 3 & & 18 & & 4.81 & 4.96 & 4.99 & & 5.10 & & \textbf{5.10} & 5.12 & 5.14 & 5.66 & 7.81 \\
 		  & & 4 & & 24 & & 8.98 & 9.18 & 9.30 & & 9.58 & & 9.63 & \textbf{9.61} & 9.76 & 11.66 & 16.00 \\
 		  & & 5 & & 30 & & 13.16 & 13.90 & 14.29 & & - & & \textbf{15.65} & 15.79 & 16.75 & 20.03 & 27.41 \\
 		  & & 6 & & 36 & & 16.77 & 17.36 & 17.83 & & - & & \textbf{20.38} & 20.40 & 21.86 & 28.12 & 38.12 \\
\hline 10 & & 3 & & 20 & & 5.21 & 5.21 & 5.21 & & 5.27 & & \textbf{5.28} & \textbf{5.28} & \textbf{5.28} & 5.79 & 7.86 \\
		  & & 4 & & 27 & & 9.18 & 9.54 & 9.71 & & -    & & \textbf{10.27} & 10.29 & 10.37 & 12.16 & 16.91 \\
		  & & 5 & & 34 & & 14.46 & 14.88 & 15.16 & & - & & \textbf{16.13} & 16.19 & 16.69 & 21.06 & 29.03 \\
		  & & 6 & & 40 & & 19.55 & 20.24 & 20.66 & & - & & 23.33 & \textbf{23.20} & 24.46 & 32.11 & 44.07 \\
 \hline
\end{tabular}}
\caption{Results of experiment 1: Performance of PBFS, heuristics and tightness of lower bounds for a fill rate of 67 percent in the Batch Model, in the case of small batches. Bold numbers highlight the best heuristic for a given problem size.}
\label{tab:UBLB_67_1}
\end{table}

\begin{table}[h!]
  \centering
  \ra{1}
    \resizebox{\textwidth}{!}{
\begin{tabular}{l l l l l l l l l l l l l l l l l}\toprule
 & &   & &   & &  \multicolumn{3}{l}{Lower bounds} & & PBFSA & & \multicolumn{5}{l}{Heuristics} \\
\cline{7-9} \cline{13-17} $S$ & & $T$ & & $C$ & & $b$ & $b_1$ & $b_2$ & &   & & EG & EM & ERI & L & Rand. \\
\hline 5 & & 3 & & 8 & & 1.68 & 1.68 & 1.68 & & 1.76 & & 1.77 & \textbf{1.76} & \textbf{1.76} & 1.87 & 2.38 \\
		 & & 4 & & 10 & & 2.96 & 3.02 & 3.05 & & 3.33 & & \textbf{3.36} & 3.37 & 3.38 & 3.67 & 4.85 \\
		 & & 5 & & 13 & & 4.58 & 4.82 & 4.93 & & 5.50 & & 5.69 & \textbf{5.65} & 5.74 & 6.33 & 8.22 \\
		 & & 6 & & 15 & & 6.31 & 6.72 & 6.96 & & 7.81 & & 8.28 & \textbf{8.03} & 8.35 & 9.49 & 12.40 \\	 								
\hline 6 & & 3 & & 9 & & 1.66 & 1.67 & 1.67 & & 1.73 & & 1.75 & \textbf{1.74} & \textbf{1.74} & 1.82 & 2.43 \\
		 & & 4 & & 12 & & 3.61 & 3.71 & 3.73 & & 3.93 & & \textbf{3.99} & \textbf{3.99} & 4.00 & 4.34 & 5.82 \\
		 & & 5 & & 15 & & 5.38 & 5.57 & 5.64 & & 6.11 & & \textbf{6.23} & \textbf{6.23} & 6.31 & 7.16 & 9.65 \\
		 & & 6 & & 18 & & 7.01 & 7.26 & 7.44 & & -    & & 8.49 & \textbf{8.36} & 8.61 & 9.92 & 13.45 \\					
\hline 7 & & 3 & & 11 & & 2.76 & 2.79 & 2.79 & & 2.85 & & 2.84 & 2.84 & \textbf{2.83} & 2.95 & 4.06 \\
		 & & 4 & & 14 & & 4.02 & 4.12 & 4.15 & & 4.24 & & 4.31 & \textbf{4.29} & 4.31 & 4.73 & 6.52 \\
		 & & 5 & & 18 & & 6.29 & 6.39 & 6.43 & & 6.77 & & 7.00 & \textbf{6.92} & 7.01 & 8.20 & 11.08 \\
		 & & 6 & & 21 & & 8.69 & 9.12 & 9.35 & & -    & & 10.60 & \textbf{10.52} & 10.91 & 12.61 & 17.79 \\						
\hline 8 & & 3 & & 12 & & 2.30 & 2.31 & 2.31 & & 2.31 & & \textbf{2.31} & 2.32 & 2.32 & 2.37 & 3.19 \\
		 & & 4 & & 16 & & 4.61 & 4.62 & 4.63 & & 4.71 & & \textbf{4.74} & \textbf{4.74} & 4.75 & 5.25 & 7.40 \\
		 & & 5 & & 20 & & 7.31 & 7.46 & 7.52 & & -    & & \textbf{8.01} & \textbf{8.01} & 8.09 & 9.33 & 13.25 \\
		 & & 6 & & 24 & & 9.65 & 9.95 & 10.12 & & -   & & \textbf{11.37} & \textbf{11.37} & 11.67 & 13.44 & 19.51 \\ 								
\hline 9 & & 3 & & 14 & & 2.93 & 2.93 & 2.93 & & 2.95 & & \textbf{2.96} & \textbf{2.96} & \textbf{2.96} & 3.15 & 4.48 \\
		 & & 4 & & 18 & & 5.56 & 5.58 & 5.59 & & 5.69 & & 5.74 & \textbf{5.70} & \textbf{5.70} & 6.33 & 9.07 \\
		 & & 5 & & 23 & & 8.49 & 8.64 & 8.73 & & -    & & 9.16 & \textbf{9.12} & 9.16 & 10.99 & 15.39 \\
		 & & 6 & & 27 & & 10.38 & 10.69 & 10.90 & & - & & 11.77 & \textbf{11.75} & 11.95 & 14.65 & 20.95 \\						
\hline 10 & & 3 & & 15 & & 3.15 & 3.16 & 3.16 & & 3.15 & & \textbf{3.17} & \textbf{3.17} & \textbf{3.17} & 3.25 & 4.72 \\
		  & & 4 & & 20 & & 6180 & 6.20 & 6.21 & & 6.28 & & 6.35 & \textbf{6.34} & \textbf{6.34} & 6.92 & 10.27 \\
		  & & 5 & & 25 & & 9.13 & 9.31 & 9.36 & & -    & & 9.66 & \textbf{9.63} & 9.68 & 11.44 & 16.73 \\
		  & & 6 & & 30 & & 12.09 & 12.35 & 12.51 & & - & & 13.38 & \textbf{13.23} & 13.42 & 16.33 & 23.80 \\ \hline								
\end{tabular}}
\caption{Results of experiment 2: Performance of PBFSA, heuristics and tightness of lower bounds for a fill rate of 50 percent in the Batch Model with larger batches. Bold numbers highlight the best heuristic for a given problem size.}
\label{tab:UBLB_50_2}
\end{table}
\begin{table}[h!]
  \centering
  \ra{1}
    \resizebox{\textwidth}{!}{
\begin{tabular}{l l l l l l l l l l l l l l l l l}\toprule
 & &   & &   & &  \multicolumn{3}{l}{Lower bounds} & & PBFSA & & \multicolumn{5}{l}{Heuristics} \\
\cline{7-9} \cline{13-17} $S$ & & $T$ & & $C$ & & $b$ & $b_1$ & $b_2$ & &   & & EG & EM & ERI & L & Rand. \\
\hline 5 & & 3 & & 10 & & 2.78 & 2.87 & 2.90 & & 3.07 & & \textbf{3.08} & \textbf{3.08} & 3.09 & 3.36 & 4.18 \\
		 & & 4 & & 13 & & 4.71 & 4.92 & 5.00 & & 5.70 & & 5.81 & \textbf{5.80} & 5.89 & 6.60 & 8.23 \\
 		 & & 5 & & 17 & & 7.53 & 8.17 & 8.44 & & -    & & 10.38 & \textbf{10.32} & 10.68 & 11.96 & 14.9815 \\
 		 & & 6 & & 20 & & 9.69 & 10.56 & 11.12 & & -  & & 15.00 & \textbf{14.91} & 16.01 & 18.22 & 22.79 \\
\hline 6 & & 3 & & 12 & & 3.56 & 3.63 & 3.64 & & 3.90 & & \textbf{3.90} & 3.91 & 3.92 & 4.28 & 5.55 \\
		 & & 4 & & 16 & & 6.12 & 6.48 & 6.61 & & 7.17 & & 7.36 & \textbf{7.41} & 7.48 & 8.44 & 11.02 \\				
 		 & & 5 & & 20 & & 8.33 & 8.72 & 8.90 & & -    & & 10.36 & \textbf{10.30} & 10.57 & 12.54 & 16.15 \\
 		 & & 6 & & 24 & & 11.77 & 12.41 & 12.80 & & - & & 15.94 & \textbf{15.91} & 16.86 & 19.83 & 25.45 \\ 
\hline 7 & & 3 & & 14 & & 3.95 & 4.01 & 4.02 & & 4.14 & & 4.17 & \textbf{4.15} & \textbf{4.15} & 4.56 & 6.11 \\
		 & & 4 & & 19 & & 6.27 & 6.56 & 6.77 & & -    & & 7.36 & \textbf{7.35} & 7.52 & 8.66 & 11.46 \\
 		 & & 5 & & 23 & & 9.81 & 10.27 & 10.54 & & -  & & \textbf{12.08} & 12.10 & 12.49 & 14.78 & 19.67 \\
 		 & & 6 & & 28 & & 13.64 & 14.47 & 14.91 & & - & & 18.36 & \textbf{18.26} & 19.32 & 23.10 & 31.03 \\
\hline 8 & & 3 & & 12 & & 4.65 & 4.74 & 4.76 & & 4.85 & & 4.86 & \textbf{4.85} & 4.88 & 5.34 & 7.14 \\
		 & & 4 & & 21 & & 7.58 & 7.86 & 7.99 & & -    & & \textbf{8.38} & 8.42 & 8.50 & 9.82 & 13.20 \\
 		 & & 5 & & 27 & & 11.46 & 11.98 & 12.28 & & - & & 13.73 & \textbf{13.60} & 14.11 & 17.00 & 22.84 \\
 		 & & 6 & & 32 & & 15.45 & 16.28 & 16.72 & & - & & 19.94 & \textbf{19.83} & 21.21 & 25.73 & 34.70 \\
\hline 9 & & 3 & & 18 & & 4.85 & 4.98 & 5.02 & & 5.14 & & \textbf{5.17} & 5.20 & 5.20 & 5.67 & 7.77 \\
		 & & 4 & & 24 & & 8.82 & 9.00 & 9.11 & & -    & & 9.66 & \textbf{9.60} & 9.72 & 11.52 & 15.70 \\
		 & & 5 & & 30 & & 13.15 & 13.84 & 14.18 & & - & & \textbf{15.91} & 15.96 & 16.82 & 20.16 & 27.35 \\
		 & & 6 & & 36 & & 16.85 & 17.39 & 17.80 & & - & & 20.99 & \textbf{20.83} & 21.97 & 28.05 & 38.04 \\
\hline 10 & & 3 & & 20 & & 5.19 & 5.21 & 5.22 & & 5.31 & & \textbf{5.31} & \textbf{5.31} & \textbf{5.31} & 5.79 & 7.92 \\
		  & & 4 & & 27 & & 9.40 & 9.66 & 9.82 & & -    & & 10.47 & \textbf{10.46} & 10.54 & 12.25 & 16.96 \\
		  & & 5 & & 34 & & 14.44 & 14.83 & 15.07 & & - & & \textbf{16.29} & \textbf{16.29} & 16.62 & 21.12 & 28.86 \\
		  & & 6 & & 40 & & 19.49 & 20.24 & 20.66 & & - & & 23.83 & \textbf{23.65} & 24.96 & 32.04 & 44.12 \\
 \hline
\end{tabular}}
\caption{Results of experiment 2: Performance of PBFSA, heuristics and tightness of lower bounds for a fill rate of 67 percent in the Batch Model with larger batches. Bold numbers highlight the best heuristic for a given problem size.}
\label{tab:UBLB_67_2}
\end{table}

\begin{table}[h!]
  \centering
  \ra{1}
    \resizebox{\textwidth}{!}{
\begin{tabular}{l l l l l l l l l l l l l l l l l}\toprule
 & &   & &   & &  \multicolumn{3}{l}{Lower bounds} & & PBFS & & \multicolumn{5}{l}{Heuristics} \\
 \cline{7-9} \cline{13-17} $S$ & & $T$ & & $C$ & & $b$ & $b_1$ & $b_2$ & &   & & EG & EM & ERI & L & Rand. \\
 \hline 5 & & 3 & & 8 & & 1.64 & 1.66 & 1.66 & & 1.70 & & \textbf{1.70} & \textbf{1.70} & 1.71 (1.71) & 1.82 & 2.34 (2.34) \\
		     & & 4 & & 10 & & 2.88 & 2.96 & 2.99 & & 3.11 & & \textbf{3.11} & 3.13 & 3.14 (3.20) & 3.51 & 4.62 (4.62) \\
		     & & 5 & & 13 & & 4.61 & 4.88 & 5.01 & & 5.32 & & \textbf{5.38} & \textbf{5.38} & 5.57 (5.58) & 6.16 & 8.00 (8.00) \\
		     & & 6 & & 15 & & 6.28 & 6.64 & 7.06 & & 7.59 & & 7.85 & \textbf{7.80} & 8.08 (8.29) & 9.41 & 12.36 (12.35) \\
  \hline 6 & & 3 & & 9 & & 1.68 & 1.69 & 1.69 & & 1.74 & & 1.76 & \textbf{1.74} & \textbf{1.74} (1.75) & 1.84 & 2.43 (2.43) \\
  			 & & 4 & & 12 & & 3.54 & 3.61 & 3.63 & & 3.68 & & 3.69 & \textbf{3.68} & \textbf{3.68} (3.75) & 4.11 & 5.59 (5.59) \\
  			 & & 5 & & 15 & & 5.37 & 5.57 & 5.68 & & 5.91 & & 5.96 & \textbf{5.94} & 6.00 (6.18) & 7.21 & 9.54 (9.54) \\
  			 & & 6 & & 18 & & 7.19 & 7.52 & 7.68 & & 8.23 & & 8.38 & \textbf{8.29} & 8.62 (8.77) & 10.05 & 13.53 (13.53) \\
  \hline 7 & & 3 & & 11 & & 2.82 & 2.86 & 2.88 & & 2.88 & & \textbf{2.88} & \textbf{2.88} & 2.89 (2.88) & 2.96 & 4.11 (4.11) \\
  			 & & 4 & & 14 & & 3.97 & 4.06 & 4.1 & & 4.16 & & \textbf{4.17} & \textbf{4.17} & 4.21 (4.20*) & 4.66 & 6.47 (6.03*) \\
  			 & & 5 & & 18 & & 6.49 & 6.65 & 6.74 & & 6.97 & & 7.04 & \textbf{7.00} & 7.07 (7.18) & 8.45 & 11.27 (11.27) \\
  			 & & 6 & & 21 & & 8.82 & 9.21 & 9.51 & & - & & 10.40 & \textbf{10.35} & 10.76 (10.98) & 12.46 & 17.69 (17.69) \\
  \hline 8 & & 3 & & 12 & & 2.29 & 2.3 & 2.3 & & 2.31 & & \textbf{2.31} & \textbf{2.31} & \textbf{2.31} (2.32) & 2.43 & 3.23 (3.23) \\
  			 & & 4 & & 16 & & 4.68 & 4.73 & 4.75 & & 4.82 & & \textbf{4.83} & \textbf{4.83} & \textbf{4.83} (4.88) & 5.41 & 7.49 (7.49) \\
  			 & & 5 & & 20 & & 7.20 & 7.42 & 7.54 & & 7.85 & & 7.97 & \textbf{7.94} & 8.07 (8.27) & 9.32 & 13.44 (13.45) \\
  			 & & 6 & & 24 & & 9.52 & 9.85 & 10.09 & & - & & 11.10 & \textbf{10.98} & 11.34 (11.61) & 13.29 & 19.29 (19.29) \\
  \hline 9 & & 3 & & 14 & & 2.98 & 2.98 & 2.98 & & 3.00 & & \textbf{3.00} & \textbf{3.00} & \textbf{3.00} (3.00) & 3.19 & 4.54 (4.54) \\
  			 & & 4 & & 18 & & 5.63 & 5.71 & 5.71 & & 5.73 & & \textbf{5.73} & \textbf{5.73} & \textbf{5.73} (5.80) & 6.52 & 9.29 (9.29) \\
  			 & & 5 & & 23 & & 8.58 & 8.69 & 8.77 & & - & & 9.05 & \textbf{9.02} & 9.12 (9.36) & 11.16 & 15.56 (15.57) \\
  			 & & 6 & & 27 & & 10.38 & 10.78 & 10.98 & & - & & 11.59 & \textbf{11.58} & 11.76 (12.09) & 14.62 & 20.94 (20.93) \\
  \hline 10 & & 3 & & 15 & & 3.18 & 3.18 & 3.18 & & 3.19 & & \textbf{3.19} & 3.20 & 3.20 (3.20) & 3.27 & 4.75 (4.75) \\
  			 & & 4 & & 20 & & 6.20 & 6.23 & 6.23 & & 6.28 & & 6.30 & \textbf{6.27} & 6.28 (6.33) & 6.98 & 10.41 (10.41) \\
  			 & & 5 & & 25 & & 9.10 & 9.37 & 9.39 & & - & & 9.61 & \textbf{9.60} & 9.73 (9.80) & 11.38 & 16.64 (16.63) \\
  			 & & 6 & & 30 & & 11.91 & 12.28 & 12.44 & & - & & 13.01 & \textbf{12.92} & 13.15 (13.51) & 15.92 & 23.41 (23.41) \\
 \hline
\end{tabular}}
\caption{Results of experiment 3: Performance of heuristics and tightness of lower bounds for a fill rate of 50 percent in the online model with small batches. Bold numbers highlight the best heuristic for a given problem size. Numbers in parenthesis are taken from \cite{Ku}.}
\label{tab:UBLB_50_3}
\end{table}

\begin{table}[h!]
  \centering
  \ra{1}
    \resizebox{\textwidth}{!}{
\begin{tabular}{l l l l l l l l l l l l l l l l l}\toprule
 & &   & &   & &  \multicolumn{3}{l}{Lower bounds} & & PBFS & & \multicolumn{5}{l}{Heuristics} \\
 \cline{7-9} \cline{13-17} $S$ & & $T$ & & $C$ & & $b$ & $b_1$ & $b_2$ & &   & & EG & EM & ERI & L & Rand. \\
 \hline 5 & & 3 & & 10 & & 2.83 & 2.97 & 3.01 & & 3.08 & & \textbf{3.08} & \textbf{3.08} & 3.12 (3.10) & 3.33 & 4.16 (4.16) \\
		     & & 4 & & 13 & & 4.69 & 4.97 & 5.09 & & 5.58 & & 5.68 & \textbf{5.64} & 5.75 (5.80) & 6.50 & 8.22 (8.22) \\
		     & & 5 & & 17 & & 7.58 & 8.52 & 8.92 & & - & & \textbf{10.45} & 10.48 & 11.04 (11.15) & 12.24 & 15.28 (15.28) \\
		     & & 6 & & 20 & & 9.69 & 10.79 & 11.46 & & - & & \textbf{14.53} & 14.65 & 15.77 (16.14) & 18.46 & 22.93 (22.93) \\
  \hline 6 & & 3 & & 12 & & 3.6 & 3.7 & 3.72 & & 3.89 & & \textbf{3.89} & 3.90 & 3.90 (3.92) & 4.32 & 5.53 (5.53) \\
  			 & & 4 & & 16 & & 6.2 & 6.59 & 6.78 & & 7.28 & & \textbf{7.41} & 7.45 & 7.61 (7.68) & 8.54 & 11.29 (11.28) \\
  			 & & 5 & & 20 & & 8.28 & 8.85 & 9.16 & & - & & \textbf{10.38} & \textbf{10.38} & 10.80 (10.97) & 12.85 & 16.42 (16.42) \\
  			 & & 6 & & 24 & & 11.67 & 12.22 & 12.54 & & - & & \textbf{15.17} & \textbf{15.17} & 16.14 (16.65) & 19.33 & 25.04 (25.03) \\
  \hline 7 & & 3 & & 14 & & 3.85 & 3.89 & 3.91 & & 3.97 & & \textbf{3.98} & \textbf{3.98} & 4.02 (4.01) & 4.40 & 6.05 (6.05) \\
  			 & & 4 & & 19 & & 6.25 & 6.6 & 6.86 & & 7.29 & & 7.37 & \textbf{7.36} & 7.54 (7.68) & 8.89 & 11.60 (11.61) \\
  			 & & 5 & & 23 & & 9.72 & 10.24 & 10.55 & & - & & 11.76 & \textbf{11.71} & 12.30 (12.64) & 14.92 & 19.66 (19.65) \\
  			 & & 6 & & 28 & & 13.52 & 14.41 & 14.93 & & - & & 17.70 & \textbf{17.64} & 18.82 (19.49) & 23.10 & 30.77 (30.79) \\
  \hline 8 & & 3 & & 12 & & 4.47 & 4.57 & 4.61 & & 4.66 & & \textbf{4.65} & 4.66 & 4.68 (4.7) & 5.14 & 6.94 (6.94) \\
  			 & & 4 & & 21 & & 7.62 & 7.85 & 7.98 & & 8.29 & & \textbf{8.32} & 8.35 & 8.43 (8.5) & 9.75 & 13.26 (13.25) \\
  			 & & 5 & & 27 & & 11.61 & 12.08 & 12.52 & & - & & 13.56 & \textbf{13.47} & 14.10 (14.44) & 17.14 & 23.11 (23.12) \\
  			 & & 6 & & 32 & & 15.6 & 16.39 & 16.78 & & - & & \textbf{19.27} & 19.51 & 20.85 (21.72) & 25.89 & 34.64 (34.63) \\
  \hline 9 & & 3 & & 18 & & 4.81 & 4.96 & 4.99 & & 5.10 & & \textbf{5.10} & 5.12 & 5.14 (5.19) & 5.66 & 7.80 (7.80) \\
  			 & & 4 & & 24 & & 8.98 & 9.18 & 9.3 & & 9.58 & & 9.63 & \textbf{9.61} & 9.76 (9.92) & 11.66 & 16.01 (16.00) \\
  			 & & 5 & & 30 & & 13.16 & 13.9 & 14.29 & & - & & \textbf{15.65} & 15.79 & 16.75 (16.97) & 20.03 & 27.38 (27.39) \\
  			 & & 6 & & 36 & & 16.77 & 17.36 & 17.83 & & - & & \textbf{20.38} & 20.40 & 21.87 (22.73) & 28.13 & 38.11 (38.14) \\
  \hline 10 & & 3 & & 20 & & 5.21 & 5.21 & 5.21 & & 5.27 & & \textbf{5.28} & \textbf{5.28} & \textbf{5.28} (5.30) & 5.79 & 7.85 (7.86) \\  
  			& & 4 & & 27 & & 9.18 & 9.54 & 9.71 & & - & & \textbf{10.27} & 10.29 & 10.37 (10.50) & 12.15 & 16.92 (16.91) \\
  			& & 5 & & 34 & & 14.46 & 14.88 & 15.16 & & - & & \textbf{16.13} & 16.19 & 16.69 (17.23) & 21.07 & 29.03 (29.03) \\
  			& & 6 & & 40 & & 19.55 & 20.24 & 20.66 & & - & & 23.33 & \textbf{23.20} & 24.46 (25.58) & 32.11 & 44.08 (44.07) \\
 \hline
\end{tabular}}
\caption{Results of experiment 3: Performance of heuristics and tightness of lower bounds for a fill rate of 67 percent in the online model with small batches. Bold numbers highlight the best heuristic for a given problem size. Numbers in parenthesis are taken from \cite{Ku}.}
\label{tab:UBLB_67_3}
\end{table}
 
\end{document}